\newtheorem{theorem}{Theorem}
\newtheorem{lemma}[theorem]{Lemma}
\newtheorem{proposition}[theorem]{Proposition}
\newtheorem{corollary}[theorem]{Corollary}
\newtheorem{conjecture}[theorem]{Conjecture}
\newtheorem{definition}[theorem]{Definition}
\theoremstyle{remark}
\newtheorem{remark}[theorem]{Remark}
\newtheorem{example}[theorem]{Example}
\numberwithin{theorem}{section}
\newcolumntype{C}{>{$}c<{$}}
\newcolumntype{L}{>{$}l<{$}}
\newcolumntype{R}{>{$}r<{$}}
\newcommand*{\h}{\mathfrak{h}}
\newcommand*{\g}{\mathfrak{g}}
\newcommand*{\gs}{\mathfrak{k}}
\newcommand*{\C}{\mathbb{C}}
\newcommand*{\bt}{\tilde{b}}
\newcommand*{\mcC}{\mathcal{C}}
\newcommand*{\mcI}{\mathcal{I}}
\newcommand*{\mf}{\mathfrak}
\newcommand*{\Jmax}{N}
\newcommand*{\X}{X_{\hbar}(\g)}
\newcommand*{\Xs}{X_{\hbar}(\gs)}
\newcommand*{\ot}{\otimes}
\DeclareMathOperator{\End}{End}
\DeclareMathOperator{\Hom}{Hom}
\DeclareMathOperator{\vecop}{vec}
\DeclareMathOperator{\tr}{tr}
\DeclareMathOperator*{\Res}{Res}
\newcommand*{\covec}{\overline{\vecop}}
\newcommand{\draft}[1]{\iffalse #1 \fi}
\title{On the nested algebraic Bethe ansatz for spin chains with simple $\g$-symmetry}
\author{Allan John Gerrard \footnote{a.j.gerrard$\ot$rs.tus.ac.jp}}
\affil{Tokyo University of Science}
\begin{document}
\renewcommand{\theenumi}{(\roman{enumi})}%

\maketitle

\begin{abstract}
	We propose a new framework for the nested algebraic Bethe ansatz for a closed, rational spin chain with $\g$-symmetry for any simple Lie algebra $\g$.
	Starting the nesting process by removing a single simple root from $\g$, we use the residual $U(1)$ charge and the block Gauss decomposition of the $R$-matrix to derive many standard results in the Bethe ansatz, such as the nesting of Yangian algebras, and the AB commutation relation. 
\end{abstract}

\tableofcontents

\section{Introduction}

The Bethe ansatz \cite{betheZurTheorieMetalle1931} in its various forms is one of the central pillars of quantum integrability, facilitating exact calculation of the spectrum for an integrable model.
Ever since the introduction of the algebraic Bethe ansatz (quantum inverse scattering method) \cite{sklyaninQuantumInverseProblem1979} for the $\mathfrak{su}_2$-symmetric Heisenberg spin chain, the extension to higher Lie types has been of interest, prompting the development of the nested algebraic Bethe ansatz \cite{kulishDiagonalisationGLInvariant1983}.
In the nested Bethe ansatz, the steps of the algebraic Bethe ansatz lead to the reduction of the problem to one with a smaller symmetry group -- a nested problem. 
Reducing inductively via $\g = \g_1 \supset \dots \supset \g_r \cong \mathfrak{sl}_2$, one obtains a problem that can be solved by the standard algebraic Bethe ansatz. 
At this stage, the nested algebraic Bethe ansatz has proved to be effective for spin chains of all classical types \cite{devegaExactBetheAnsatz1987,reshetikhinAlgebraicBetheAnsatz1988}, as well as Lie superalgebras \cite{martinsAlgebraicBetheAnsatz1997} and open spin chains \cite{belliardNestedBetheAnsatz2009} \cite{gerrardNestedAlgebraicBethe2020}. 

From the success of the nested Bethe ansatz for various $\g$, one might suppose the existence of a universal formulation of the nested Bethe ansatz, valid for all $\g$. 
There are hints that this might be possible: most prominently are the Bethe equations of the system, which depend only on the Cartan data of $\g$ \cite{ogievetskyFactorizedSmatrixBethe1986}.
Alas, a universal formulation of the nested Bethe ansatz seems, however, to fall at the first hurdle: the RTT relation from which the main commutation relations derive seems to rely specifically on the structure of the chosen $R$-matrix. 
In terms of the Yangian, the underlying quantum group \cite{sklyaninAlgebraicStructuresConnected1982, drinfeldQuantumGroups1988, jimboAqdifferenceAnalogueYangBaxter1985} from which rational spin chains derive their special properties, the difficulty lies in the rift between RTT presentation and Drinfel'd's presentations  \cite{drinfeldHopfAlgebrasQuantum1985, drinfeldNewRealizationYangians1987}. 
The fact that these presentations give an equivalent algebra was stated in \cite{drinfeldHopfAlgebrasQuantum1985} and proved rigorously in \cite{wendlandtRMatrixPresentationYangian2018}, but moving between the two presentations is nontrivial.
The work of Molev et al. \cite{molevYangiansClassicalLie2007, arnaudonRMatrixRealizationYangians2006} goes some way to connect the representation theoretic results of the two presentations, but is carried out in a type-by-type basis.

In this article we circumvent these difficulties and propose a new, universal approach to the nested Bethe ansatz.
The nested Bethe ansatz is generally implemented by ``cutting'' a $\g$-symmetric monodromy matrix into submatrices and, going through the appropriate steps, finding that the diagonal submatrices satisfy the relations of a subalgebra of type $\gs \subset \g$ with one simple root removed.
In our approach we work backwards -- we start by removing a simple root from $\g$.
This leaves a $U(1)$ charge and a diagram subalgebra $\gs \subset \g$ as expected; the ``cutting'' of the monodromy matrix follows simply from the decomposition of the auxiliary space into charge eigenspaces. 
In addition to the $U(1)$ charge, the other tool we make use of is the block Gauss decomposition of the $R$-matrix. 
With these two tools, we are able to deduce the appropriate properties of the ``vacuum sector'', the RTT relation for the diagonal blocks, and the AB commutation relation. 
Finally, we are also able to link the auxiliary space of the creation operator to a particular representation of $\gs$, which is consistent with the Bethe equations.

This paper is organised as follows. 
Section~\ref{s:algebra} is dedicated to introducing the required algebraic structures.
In Section~\ref{s:nesting} we introduce the idea of nesting as removing a simple root from $\g$; this is motivated by the Bethe equations
In Section~\ref{s:block-r}, we investigate the effect of nesting on the $R$-matrix, leading to our first main result in Proposition~\ref{p:nested-ybe}. 
In Section~\ref{s:monodromy}, we use the decomposition of the $R$-matrix to make deductions about the block relations of the monodromy matrix. 
The purpose of Section~\ref{sec:onex} is to connect these ideas to the construction of the eigenvector in the nested Bethe ansatz. 
%We begin by giving an overview of the algebraic Bethe ansatz for an arbitrary auxiliary space $V$. 
%From this, we discover that the many ``creation operators'' of the Bethe ansatz seem to produce equivalent excitations. 
We assert that the matrix creation operators act on an auxiliary space which has a particular representation, the properties of which are explored in Lemma~\ref{l:properties-of-auxsite}.
This leads to Conjecture~\ref{c:projector}, where we propose a relation between this representation and the $R$-matrix. 
Finally, we show how these ideas lead to the standard nested diagonalisation problem, and we recover the Bethe vector for a spin chain with classical $\g$ symmetry.

\section{Yangians for simple Lie algebras} \label{s:algebra}

\subsection{The Yangian}

Let $\g$ be a simple complex Lie algebra with Cartan subalgebra $\h$ of rank $r$. 
Denote the Killing form on $\h^*$ by $(\cdot,\cdot)$ and associated Cartan matrix by $A = (a_{ij})$. 
Let $\alpha_i \in \h^*$ for $1\leq i \leq r$ denote the simple roots of the Lie algebra and $\omega_i$ the corresponding fundamental weights, so that 
\[
	(\omega_i, \alpha_j) = d_j \delta_{ij} \qquad \text{and} \qquad (\alpha_i, \alpha_j) = a_{ij}d_j, \quad \text{where}\quad d_j = \tfrac{(\alpha_j,\alpha_j)}{2}
\]
We will use the same root labelling scheme as \cite{pressleyFundamentalRepresentationsYangians1991}.
Let $x^{\pm}_{\alpha}$ denote the root vector in $\g$ of root $\pm\alpha$, and abbreviate $x^{\pm}_{\alpha_i}=x^{\pm}_{i}$.
Define $h_{i} = [x^+_{i},x^-_{i}]$, so that the $h_{i}$ form a basis of $\h$.

We now introduce the Yangian of $\g$ in Drinfel'd's second presentation following closely the notation and definitions given in \cite{gautamPolesFinitedimensionalRepresentations2023}

\begin{definition} \label{def:yangian}
	Let $\hbar \in \C$. 
	The Yangian $Y_{\hbar}(\g)$ is the unital associative algebra generated by $h_{i,s}, x^{\pm}_{i,s}$ for $1 \leq i \leq r$, and $s \in \mathbb{Z}_{\geq 0}$, satisfying relations, for all $1 \leq i,j \leq r$, $s,t \in \mathbb{Z}_{\geq 0}$,
	\begin{gather}
		[h_{i,s}, h_{j,t}] = 0; \\
		[h_{i,0}, x^{\pm}_{j,t}] = \pm (\alpha_i,\alpha_j) x^{\pm}_{j,t}; \\
		[h_{i,s+1}, x^{\pm}_{j,t}] - [h_{i,s}, x^{\pm}_{j,t+1}]   = \pm \frac{\hbar}{2} (\alpha_i,\alpha_j) \left(
			h_{i,s} x^{\pm}_{j,t} + x^{\pm}_{j,t} h_{i,s}
		\right); \\
		[x^{\pm}_{i,s+1}, x^{\pm}_{j,t}] - [x^{\pm}_{i,s}, x^{\pm}_{j,t+1}]   = \pm \frac{\hbar}{2} (\alpha_i,\alpha_j) \left(
			x_{i,s} x^{\pm}_{j,t} + x^{\pm}_{j,t} x_{i,s}
		\right); \\
		[x^{+}_{i,s}, x^{-}_{j,t}] = \delta_{ij} h_{i,s+t};
	\end{gather}
	as well as the following relation, for $i \neq j, s_1, \dots, s_n, t \in \mathbb{Z}_{\geq 0}$ with $n=1-a_{ij}$,
	\begin{equation}
		\sum_{\pi \in S_{n}} \left[ x^{\pm}_{i,s_{\pi(1)}}, \left[ x^{\pm}_{i,s_{\pi(2)}}, \left[ \dots \left[x^{\pm}_{i,s_{\pi(m)}},x^{\pm}_{j,t}\right]\right]
		\right] \right] = 0.
	\end{equation}
	
\end{definition}

The Yangian may be endowed with a coproduct, counit and antipode, giving it the structure of a Hopf algebra. 
The formula for the coproduct for the Yangian in Drinfel'd's second presentation is not fully known. 
However, for the purposes of this article, we require only the following triangularity property, which we have taken from Proposition~2.9 of \cite{gautamPolesFinitedimensionalRepresentations2023}.
Write $Y^{\leq 0}$ for the subalgebra of $Y_{\hbar}(\g)$ with generators $\{h_{i,s}, x^-_{i,s}\}_{1 \leq i \leq r, s \in \mathbb{Z}}$, and write $Y^{\geq 0}$ for that generated by the $h_{i,s}, x^+_{i,s}$.
The adjoint action of $\h \subset Y_{\hbar}(\g)$ on $Y^{\geq 0}$ splits it into a direct sum of eigenspaces parametrised by the positive roots of $\g$, respectively the negative roots for $Y^{\leq 0}$. 
Labelling these spaces $Y^{\geq 0}_{\beta}$ for $\beta$ a positive root, the coproduct satisfies 
\begin{gather}
	\Delta(x^+_{i,s}) = x^+_{i,s} \ot 1 + 1 \ot x^+_{i,s} + \hbar \sum_{t=1}^s h_{i,t-1} \ot x^+_{i,s-t} \mod 
	\bigoplus_{\beta >0} Y^{\leq 0}_{-\beta} \otimes Y^{\geq 0}_{\beta};
	\\
	\Delta(x^-_{i,s}) = x^-_{i,s} \ot 1 + 1 \ot x^-_{i,s} + \hbar \sum_{t=1}^s x^-_{i,s-t} \ot h_{i,t-1} \mod 
	\bigoplus_{\beta >0} Y^{\leq 0}_{-\beta} \otimes Y^{\geq 0}_{\beta};
	\\
	\Delta(h_{i,s}) = h_{i,s} \ot 1 + 1 \ot h_{i,s} + \hbar \sum_{t=1}^s h_{i,s-t} \ot h_{i,t-1} \mod 
	\bigoplus_{\beta >0} Y^{\leq 0}_{-\beta} \otimes Y^{\geq 0}_{\beta}.
\end{gather}

Let us introduce formal power series 
\[
	x_i^{\pm}(u) = \sum_{s=0}^\infty x_{i,s}^{\pm}, \qquad h_i(u) = \sum_{s=0}^\infty h_{i,s}.
\]
In terms of these generating series, we introduce the shift automorphism $\tau_{a}$ for each $a \in \C$,
\[
	\tau_a(x_i^{\pm}(u)) = x_i^{\pm}(u-a) \qquad \tau_a(h_i(u)) = h_i(u-a).
\]

\subsection{Representations of the Yangian}

Here we list some facts about the representations of the Yangian. 
We restrict ourselves to finite-dimensional representations; it is understood that all finite-dimensional representations are \emph{highest weight representations}, defined as follows.
\begin{definition}
	A $Y_\hbar(\g)$-representation $V$ is said to be highest weight if there exists a vector $\xi \in V$ and formal power series $\mu_1(u), \dots, \mu_r(u) \in \C[[u^{-1}]]$ such that $V = Y_{\hbar}(\g)\cdot \xi$, and, for each $i = 1, \dots, r$,
	\[
		x^+_{i}(u) \cdot \xi = 0, \qquad h_{i}(u) \cdot \xi = \mu_{i}(u) \xi.
	\]
	In this case, the vector $\xi$ is called the highest weight vector of $V$ and $\mu_{i}(u)$ are known as the \emph{highest weights} of $V$.
\end{definition}

It was shown in \cite{drinfeldNewRealizationYangians1987} that the highest weight for an irreducible representation satisfies, for each $i=1, \dots, r$,
\[
	\mu_i(u) = \frac{P_i(u+d_i\hbar)}{P_i(u)},
\]
where $(P_1(u), \dots, P_r(u))$ are monic polynomials, called the \emph{Drinfel'd polynomials} of the representation.
The simplest nontrivial representations of $Y_\hbar(\g)$ are those for which 
\begin{equation}
	P_i(u) =
	\begin{cases}
		u-a & i=j \\
		1 &  \text{otherwise,}
	\end{cases}
\end{equation}
for some $a \in \C$ and some $j \in \{1, \dots, r\}$; this is known as the $j^\text{th}$ \emph{fundamental representation}, which we shall denote $M(\omega_j)_a$.
This notation is justified by the fact that the subspace generated by the action of $U(\g) \subset Y(\g)$ on the highest weight vector of the $j^\text{th}$ fundamental representation is equivalent to that of $\g$.
The conditions for this not to be a proper subspace -- that is, for a Yangian fundamental representation to be equivalent to the fundamental representation of $\g$ -- are given in \cite{drinfeldHopfAlgebrasQuantum1985}.
In general, for a $Y_{\hbar}(\g)$-highest weight representation with Drinfel'd polynomials $(P_1(u), \dots, P_r(u))$, the $\g$ weight of the highest weight vector is equal to $\deg P_1 \omega_1 + \dots + \deg P_r \omega_r$.

For any representation $V$ of $Y_{\hbar}(\g)$ with map $\rho:Y_{\hbar}(\g) \to \End(V)$, for any $a \in \C$ we may define another representation by the map $\rho \circ \tau_a$.
We will denote this representation by $V_a$.
If $V$ is irreducible with  Drinfel'd polynomials $(P_1(u), \dots, P_r(u))$, then so is $V_a$, with Drinfel'd polynomials $(P_1(u-a), \dots, P_r(u-a))$.

\subsection{The $R$-matrix and the extended Yangian}

Let $V$,$W$ be finite-dimensional irreducible representations of $Y_\hbar(\g)$.
For generic values of $a,b \in \C$, there exists an intertwiner $I^{VW}(a-b)$ of $Y_\hbar(\g)$ representations $W_b \ot V_a \to V_a \ot W_b$, which is a rational function in $(a-b)$ and is unique if its action on the tensor product of highest weight vectors is fixed to be unity, see \cite{drinfeldHopfAlgebrasQuantum1985}, also Theorem~4.1 of \cite{pressleyFundamentalRepresentationsYangians1991}.
Let $P$ denote the permutation matrix on the tensor product of two vector spaces, and let $R^{VW}(a-b) = I^{VW}(a-b) P$, a matrix in $\End(V \ot W)$, which satisfies the intertwining relation:
\begin{equation} \label{intertwining}
	R^{VW}(a-b) (\tau_a \ot \tau_b) \Delta^{\text{opp}}(x)
	=
	(\tau_a \ot \tau_b) \Delta(x) R^{VW}(a-b).
\end{equation}
We will refer to this is the $R$-matrix for $V$ and $W$. 
The $R$-matrix satisfies the \emph{unitarity relation}
\begin{equation} \label{unitarity}
	R^{VW}(a-b) R^{VW}(b-a) = I,
\end{equation}
and, taking $U$ to be another such representation, we have the \emph{quantum Yang-Baxter equation},
\begin{equation} \label{YBE}
	R^{UV}(a-b) R^{UW}(a-c) R^{VW}(b-c) = R^{VW}(b-c) R^{UW}(a-c) R^{UV}(a-b)
\end{equation}
in $\End(U \ot V \ot W)$.
We also have 
\begin{equation}\label{prp}
	R^{VV}(u) = P R^{VV}(u) P,
\end{equation}
where $P$ is the permutation matrix in $\End(V\ot V)$ satisfying $P (\zeta_1 \ot \zeta_2) = \zeta_2 \ot \zeta_1$, as well as $R^{VW}(u) \to I$ as $u \to \infty$.

It was stated in \cite{drinfeldHopfAlgebrasQuantum1985} and proved rigorously in \cite{wendlandtRMatrixPresentationYangian2018} that the $R$-matrix may be used to define an alternative presentation of the Yangian.
%This begins with the \emph{extended Yangian}, defined as follows.

\begin{definition} \label{d:extended-Yangian}
	The extended Yangian $\X$ is the unital associative algebra over $\mathfrak{C}$ defined by generators $1, t^{(1)}_{ij}, t^{(2)}_{ij}, \dots$ for $1 \leq i,j \leq \dim(V)$ and relations 
	\begin{equation} \label{RTT}
		R_{12}^{VV}(u-v) T_1(u) T_2(v) = T_2(v) T_1(u) R_{12}^{VV}(u-v),
	\end{equation}
	where $T(u) = \sum_{ij} E_{ij} \otimes t_{ij}(u)$,
	and $t_{ij}(u) = 1 \delta_{ij} + t^{(1)}_{ij}u^{-1} + t^{(2)}_{ij}u^{-2} + \dots$, regarded as a formal series in $u^{-1}$.
	The extended Yangian may be endowed with a Hopf algebra structure by defining the \emph{coproduct} $\Delta$, \emph{antipode} $S$, and \emph{counit} $\epsilon$ by
	\begin{gather}
		\Delta(t_{ij}(u)) = \sum_{k} t_{ik}(u) \otimes t_{kj}(u),
		\\
		S\left(T(u)\right) = T(u)^{-1}
		\\
		\epsilon\left(T(u)\right) = I.
	\end{gather}
\end{definition}

The relationship between the Yangian and the extended Yangian was spelled out in detail in \cite{wendlandtRMatrixPresentationYangian2018}.
It was shown that, in general, 
\[
	\X \cong Y_{\hbar}(\g) \ot ZX_{\hbar}(\g),
\]
where $ZX_{\hbar}(\g)$ is the centre of $\X$, generated by the coefficients $z(u)$ in the relation $S^2(T(u))T(u+\tfrac12c_{\g}) = I\cdot z(u)$.
In $\X$, the shift automorphism $\tau_c$ is given by 
\begin{equation} \label{shift-automorphism}
	\tau_c : t_{ij}(u) \mapsto t_{ij}(u-c).
\end{equation}
%This is understood to be an equality of series in $u^{-1}$; the appropriate mapping of generators $t^{(s)}_{ij}$ may be deduced from the series expansion of $t_{ij}(u)$. 
Additionally, for any $f(u) = 1 + \sum_{s\geq 1} f^{(s)} u^{-s} + \dots$ we have the automorphism
\[
	T(u) \mapsto f(u) T(u) 
\]
From the defining relations of $\X$, it is clear that there is a correspondence between representations of $\X$ and solutions of the Yang-Baxter equation. 

%In fact, considering $V$ as a Yangian module, this may be generalised to a series of matrices belonging to the centraliser of the action of the Yangian in $\End(V)$; the above case represents the identity matrix, see Lemma~5.3 of \cite{wendlandt_r-matrix_2018}. 
%It can then be shown that the invariant subalgebra of $\X$ with respect to this automorphism is isomorphic to $Y_{\hbar}(\g)$.

The universal enveloping algebra $U(\g)$ appears as a Hopf subalgebra of $\X$,
\begin{equation}
	\phi : U(\g) \hookrightarrow \X
\end{equation}
This embedding may be characterised by $\phi \ot \rho_V (\Omega) \mapsto \hbar^{-1}T^{(1)}$, where $\Omega$ is the split Casimir element.
Hence, any representation of $\X$ defines a representation of $U(\g)$.
Taking the $u^{-1}$ coefficient of the RTT relation, we obtain
\[
	T_a^{(1)} T_b(v) + \hbar\, (\rho_{V_a} \ot \rho_{V_b})(\Omega) T_b(v) =  T_b(v) T_a^{(1)} +  \hbar\, T_b(v) (\rho_{V_a} \ot \rho_{V_b})(\Omega).
\]
In other words, 
\[
	(\rho_{V_a} \ot \rho_{V_b} \ot \phi) \circ (\Delta \ot 1)(\Omega) T_b(v) = T_b(v) (\rho_{V_a} \ot \rho_{V_b} \ot \phi)  \circ  (\Delta \ot 1)(\Omega) .
\]
Therefore, for any $x\in U(\g)$, 
\begin{equation} \label{g-yangian}
	[\phi(x) + \rho_{V_b}(x),T_b(v)] =0.
\end{equation}

\subsection{Fusion and spin chains}

Let $V_1$, $V_2$ be irreducible representations of the Yangian.
The intertwining relation \eqref{intertwining} implies that the $R$-matrix is a $\g$ intertwiner. 
Therefore, by Schur's lemma, it may be expanded in terms of projectors to the $\g$-irreducible components of $V_1 \otimes V_2$,
\[
	R^{V_1V_2}(u) = \sum_{W_k \subset V_1 \otimes V_2} g^{W_k}(u) \Pi^{W_k},
\]
where $g^{W_k}$ are rational functions. 
If we fix the limit $u \to \infty$, as $R^{V_1V_2}(u) \to I$, the functions are determined by their poles and zeros, which will collectively be referred to as singularities of the $R$-matrix. 
At these singularities (taking the residue in the case of a pole), the $R$-matrix fails to be invertible. 
Moreover, in some special cases the $R$ matrix will be proportional to a single projector, that is,
\[
	R^{V_1V_2}(z_1) \sim \Pi.
\]
Crucially, this projector still satisfies the Yang-Baxter equation, and so we have 
\begin{equation} \label{fused-YBE}
	\Pi^{V_1V_2} R^{V_1V_3}(u+z_1) R^{V_2V_3}(u) = R^{V_2V_3}(u) R^{V_1V_3}(u+z_1) \Pi^{V_1V_2}.
\end{equation}
We observe that $\Pi^{V_1V_2}$ may be multiplied on the left or the right without affecting the expression -- in other words, this expression preserves the image of $\Pi^{V_1V_2}$ in $V_1 \otimes V_2$.
Furthermore, writing $W$ for this image, we define
\begin{equation} %\label{fused-R}
	R^{W,V_3}(u) := \Pi^{V_1V_2} R^{V_1V_3}(u+z_1) R^{V_2V_3}(u)\Pi^{V_1V_2}
\end{equation}
Note that this is equal to \eqref{fused-YBE} as either of the projectors may be absorbed into the other using the Yang-Baxter equation.
In can be shown that this expression also satisfies a Yang-Baxter equation
\[
	R^{W,V_3}(u) R^{W,V_4}(u+v) R^{V_3 V_4}(v) =  R^{V_3 V_4}(v) R^{W,V_4}(u+v) R^{W,V_3}(u),
\]
where $V_4 \equiv V_3$. 
If a ``fusion point'' exists for $R^{V_3 V_4}(v)$ -- for example, if all $V_k$ are equivalent representations -- then the $V_3$ and $V_4$ spaces may be fused. 

A class of Yangian representations that may be constructed via fusion are the Kirillov-Reshetikhin representations \cite{kirillovRepresentationsYangiansMultiplicities1987}, which are obtained by fusion of certain fundamental representations.
These representations have Drinfel'd polynomials of degree $k\delta_{ij}$ for some $j$ and some non-negative $k$, satisfying
\[
	\frac{P_i(u+\hbar kd_i)}{P_i(u)} = \frac{u-c+\hbar kd_i/2}{u-c-\hbar kd_i/2}.
\]
The representations can be thought of as minimal affinisations of $\g$-modules with highest weight $k\omega_i$, generalising the fundamental representations. 
We will thus write these as $M(k\omega_i)_c$.

The quantum spin chain is simply a finite-dimensional $\X$-module $M$.
For the purposes of completeness of the Bethe ansatz, it is important that the module be irreducible.
Conditions for irreducibility of tensor product modules have been studied in \cite{tanLocalWeylModules2015}.

In principle, it is not necessary to specify the module further, as the techniques of the algebraic Bethe ansatz rely primarily on the algebraic relations of the Yangian. 
In practice, however, we choose $M$ to be a tensor product of Kirillov-Reshetikhin modules with shifts, 
\[
	M = M(\lambda_1)_{c_1} \ot \cdots \ot M(\lambda_L)_{c_L},
\]
where $\lambda$ is the highest Lie algebra weight of the module, and $c$ is a complex number. 
% This choice leads to a very desirable form of the Bethe equations, which relate to rigged configurations and the KR conjecture \cite{kirillov_representations_1987}.
% In the same way, it will be useful to think of the space $V$ as a Yangian module with shift $u$, denoting it $V_u$.
% Moreover, this stipulation allows us to consider the action on each site in terms of a rational \emph{Lax matrix} $\mc{L}(u-c)$,
% \[
% 	T(u) \cdot V \ot M(\lambda)_{c} = \mc{L}(u-c) \cdot V \ot M(\lambda)_{c} .
% \]
% Then, the \emph{spin chain} is defined as
% \[
% 	M = M(\lambda_1)_{c_1} \ot \cdots \ot M(\lambda_L)_{c_L}.
% \]
% %
% The action on the spin chain by the Yangian generators is via the coproduct,
% \[
% 	t_{ij}(u) \cdot M = \sum_{k_1, \dots, k_{L-1}} t_{i k_1}(u) \ot t_{k_1 k_2}(u) \ot \cdots \ot t_{k_{L-1} j}(u) \cdot M.
% \]
% %
% Denoting the auxiliary space by the subscript $a$ and the spin chain sites by $1, \dots, L$, the action on the spin chain is given via the coproduct as
% \[
% 	T_a(u) \cdot V \ot M_{\vec{c}} = \mc{L}_{a1}(u-c_1) \mc{L}_{a2}(u-c_2) \cdots \mc{L}_{aL}(u-c_L) V \ot M_{\vec{c}}.
% \]
In this sense a spin chain may be specified by the data $(\g, M)$.
The representative of the Yangian generating matrix on the spin chain is referred to as the \emph{monodromy matrix}. 
Further, we define the \emph{transfer matrix} as the partial trace of the monodromy matrix over the auxiliary space. 
For full generality, we may introduce a matrix $Z \in \End(V)$ into the definition:
\[
	t(u) = \tr_a \left( Z_a T_a(u) \right) .
\]
In this sense, the transfer matrix is constructed from the data $(\g, V, M, Z)$.
If $Z \in G$, the compact, simply-connected Lie group associated with $\g$, then $Z \ot Z$ commutes with $R(u)$.
This gives 
\[
	[t(u), t(v)] = 0,
\]
from $RTT$ relation, using the fact that the $R$-matrix is invertible. 
%{\color{red} For the purposes of the nested Bethe ansatz, we will impose an additional constraint on $Z$; see \eqref{}.}

% We also have, from \eqref{g-yangian}, for any $x \in \g$.
% \[
% 	[\phi(x), t(u)] = \tr_a\left(Z_a [\phi(x),T_a(u)] \right) =  \tr_a\left(-Z_a [\rho_a(x),T_a(u)] \right)  =  \tr_a\left((\rho_a(x) - Z_a \rho_a(x) Z^{-1}_a) Z_a T_a(u)  \right)
% \]
% for any $x \in U(\g)$.
% Therefore, the transfer matrix possesses a $\g$-symmetry, which is broken to subalgebra by the choice of $Z$. 

The goal of the algebraic Bethe ansatz is to construct eigenvectors of these transfer matrices. 

\section{Nesting} \label{s:nesting}

\subsection{Review of the nested Bethe ansatz} \label{ssec:review}

In the algebraic Bethe ansatz, the construction of eigenvectors is achieved by acting with ``creation operators'' on a ``vacuum state'', the highest weight vector $\xi$ of the spin chain module. 
For the well-known case $\g = \mathfrak{sl}_2$, $V \cong \C^2$ the creation operator is taken simply from the upper-right element of the generating matrix. 
This gives the well-known expression for the Bethe vector,
\[
	t_{12}(v_1) t_{12}(v_2) \cdots t_{12}(v_m) \cdot \xi,
\]
which becomes an eigenvector when the $v_i$ satisfy the Bethe equations. 

In the nested Bethe ansatz, the generating matrix of the Yangian is partitioned into block matrices; the above- and below-diagonal blocks are treated as creation and annihilation block operators. 
\[
	T(u) = \left( \begin{matrix}
		A(u) & B(u) \\
		C(u) & D(u)
	\end{matrix} \right)
\]

The Bethe ansatz may then be performed in much the same way as for $\mathfrak{sl}_2$, with two caveats:
\begin{itemize}
	\item each creation operator acts on an additional vector space as well as the spin chain $M$, and
	\item there are multiple choices for the vacuum state as the block annihilation operators do not include all elements below the diagonal. 
\end{itemize} 
The end result is that the vacuum state must be chosen to be an eigenvector of the transfer matrix of a ``nested system'', which is a representation of a $Y(\mathfrak{k})$, a Yangian algebra with lower rank than the original.

In an analogous way to the original algebraic Bethe ansatz the spectral parameters of the block creation operators must satisfy the Bethe equations, which may be written in terms of the eigenvalue of the nested transfer matrix. 
Additionally, the diagonalisation of the nested transfer matrix leads to another set of Bethe equations. 
However, a more general picture of the Bethe ansatz \cite{ogievetskyFactorizedSmatrixBethe1986} shows that each set of Bethe equations is associated to one simple root of the original Lie algebra. 
The split Bethe equations which together solve the top level system and the nested system, and those which solve the original system must be equivalent. 
From this, we posit that $\mathfrak{k}$ must specifically be a diagram subalgebra of $\mathfrak{g}$, with one fewer simple root. 

Explicitly, the Bethe equations are, for $1 \leq i \leq r$,
\[
	\frac{P_i(v^{(i)}_k+\hbar d_i)}{P_i(v^{(i)}_k)} = -\prod_{j=1}^r \prod_{l=1}^{m^{(j)}} \frac{v^{(i)}_k-v^{(j)}_l+\tfrac{\hbar}{2}(\alpha_i,\alpha_j)}{v^{(i)}_k-v^{(j)}_l-\tfrac{\hbar}{2}(\alpha_i,\alpha_j)}.
\]
Suppose that the removed simple root of the nested system is $\alpha_p$. 
Then, the nested system must have Bethe equations, for $i\neq p$:
\[
	\frac{P_i(v^{(i)}_k+\hbar d_i)}{P_i(v^{(i)}_k)} = -\prod_{l=1}^{m^{(p)}} \frac{v^{(i)}_k-v^{(p)}_l+ \tfrac{\hbar}{2}(\alpha_i,\alpha_p)}{v^{(i)}_k-v^{(p)}_l-\tfrac{\hbar}{2}(\alpha_i,\alpha_p)} 
	\prod_{\substack{j=1 \\ j\neq p}}^r \prod_{l=1}^{m^{(j)}} \frac{v^{(i)}_k-v^{(j)}_l+ \tfrac{\hbar}{2}(\alpha_i,\alpha_j)}{v^{(i)}_k-v^{(j)}_l-\tfrac{\hbar}{2}(\alpha_i,\alpha_j)}.
\]
On the other hand, if we interpret this nested system as a spin chain system in its own right, then the roots $v^{(p)}$ do not feature as Bethe roots, they appear as shifts of the auxiliary spaces added to the spin chain.
In other words, they should be on ``the other side'' of the Bethe equations;
\[
	\frac{P_i(v^{(i)}_k+\hbar d_i)}{P_i(v^{(i)}_k)} \prod_{l=1}^{m^{(p)}} \frac{v^{(i)}_k-v^{(p)}_l-\tfrac{\hbar}{2}(\alpha_i,\alpha_p)}{v^{(i)}_k-v^{(p)}_l+ \tfrac{\hbar}{2}(\alpha_i,\alpha_p)}  = -
	\prod_{\substack{j=1 \\ j\neq p}}^r \prod_{l=1}^{m^{(j)}} \frac{v^{(i)}_k-v^{(j)}_l+ \tfrac{\hbar}{2}(\alpha_i,\alpha_j)}{v^{(i)}_k-v^{(j)}_l-\tfrac{\hbar}{2}(\alpha_i,\alpha_j)}.
\]
Using the fact that $(\alpha_i,\alpha_p)=-\max(d_p,d_i)=-kd_p$ if $i$ and $p$ label adjacent nodes on the Dynkin diagram of $\mathfrak{g}$, and zero otherwise, we expect that the Drinfel'd polynomials of the $l^\text{th}$ auxiliary space satisfy
\begin{equation}\label{nested-drin-polys}
	P_i(u+v_l^{(p)}) = 
	\begin{cases}
		(u-\tfrac12 \hbar d_p) & (\alpha_i,\alpha_p) = -d_p \\
		u(u- \hbar d_p) & (\alpha_i,\alpha_p) = -2d_p \\
		(u+\tfrac12 \hbar d_p)(u-\tfrac12 \hbar d_p)(u-\tfrac32 \hbar d_p) & (\alpha_i,\alpha_p) = -3d_p \\
		1 & \text{otherwise}
	\end{cases}.
\end{equation}
In all such cases, this defines a fundamental representation of the Yangian of $\mathfrak{k}$, since each connected component of the Dynkin diagram of the subalgebra only contains one such node. 
We will show how this representation manifests in the nested algebraic Bethe ansatz. 

\subsection{Nesting and the Yangian} \label{ssec:nesting-Yangian}

We now consider in more detail the subalgebra required for applying the nesting technique, as outlined in \ref{ssec:review}.
%Let $\beta$ denote the highest root of $\g$, that is, the highest weight of the adjoint representation.
%For each simple root $\alpha_i$, let $n_i$ denote the coefficient of $\alpha_i$ in $\beta$ in the basis of simple roots. 
Choose a simple root $\alpha_p$ to exclude; we will restrict to the case where $\alpha_p$ is on the end of the Dynkin diagram, so is connected to only one other root.
Consider the fundamental co-weight $\omega^{\vee}_p$, satisfying 
$(\omega^{\vee}_p,\alpha_i) = \delta_{ip}$.
Denote its corresponding element in the Cartan subalgebra by $h^p \in \h$. 
In fact, it will slightly more convenient to focus on $-h^p$.
Then $[-h^p,x^{\pm}_{\alpha}]=\mp 2(\omega^{\vee}_p,\alpha) x^{\pm}_{\alpha}$ for any positive root $\alpha$.
We see that the adjoint representation of $\g$ decomposes into a direct sum of eigenspaces of $\mathrm{ad}_{(-h^p)}$ with eigenvalues $\{-n_p, \dots, n_p\}$; here $n_p$ is the coefficient of $-\alpha_p$ in the expansion of the highest root $\beta$ of $\g$ with respect to the basis of simple roots.
We will refer to $-h^p$ as the charge operator, and refer to its eigenvalues as charge.
Explicitly, let us write the decomposition into charge eigenspaces as 
\begin{equation} \label{adj-charge-decomp}
	\g \cong \g^{(-n_p)} \oplus \cdots \oplus \g^{(n_p)} .
\end{equation}
The charge zero space contains the Cartan subalgebra of $\g$, as well as all root vectors whose expansion in simple roots does not contain $\alpha_p$. In other words, 
\[
	\g^{(0)} \cong \gs \oplus \C (-h^p),
\]
where $\gs$ is a semisimple diagram subalgebra of rank $r-1$, with simple roots $\{\alpha_i\}_{i\neq p}$.
Since we restrict to the case where $\alpha_p$ is connected to only one other root, $\gs$ must be simple. 

Now consider the eigenspace decomposition of a finite-dimensional representation $V$ of $\g$ with respect to $(-h^p)$,
\begin{equation} \label{charge-decomp}
	V = \bigoplus_{J=0}^{N} V^J, \qquad (-h^p) V^J = c_J V^J =  (c_0 + J) V^J.
\end{equation}
Since $-h^p$ commutes with $\gs$, these eigenspaces may be regarded as representations of $\gs$.
Furthermore, any weight basis of $V$ respects the decomposition \eqref{adj-charge-decomp}, as the charge operator is contained in the Cartan subalgebra.
The space of maximal charge, $V^0$, contains the highest weight vector of $V$, and is irreducible as a representation of $\gs$ if $V$ is $\g$-irreducible.
The action of $\g^{(n)}$ increases the charge by $n$: 
\[
	x \cdot V^{J} \subset V^{J+n} \qquad \forall x \in \g^{(n)}.
\]
This defines a homomorphism of $\gs$ representations
\[
	\g^{(n)} \ot V^{J} \to V^{J+n}.
\]

The coproduct extends the notion of charge to the tensor product of representations, as
\[
	\Delta(-h^p) \cdot V^I \ot V^J = (c_I + c_J) V^I \ot V^J
\]
Therefore, the decomposition of the tensor product into charge eigenspaces follows 
\begin{equation} \label{tensor-charge-decomp}
	V \ot V = \bigoplus_{K} W_K \qquad \text{with} \qquad W_K = \bigoplus_{\substack{I,J \\ I+J=K}} (V^I \ot V^J).
\end{equation}

We apply the same decomposition to the $Y_{\hbar}(\g)$, which endows it with a grading according to charge.
We have from the defining relations,
\begin{equation*}
	\mathrm{ad}_{-h^p}(x^{\pm}_{i,s}) = \mp\delta_{ip}x^{\pm}_{i,s} \qquad \text{and} \qquad 
	\mathrm{ad}_{-h^p}(h_{i,s}) = 0,
\end{equation*}
mimicking the Lie algebra case.
Let $Y^{(n)}$ for $n \in \mathbb{Z}$ denote the subspace of charge $n$, spanned by monomials in Yangian generators whose total charge is equal to $n$. 
From the relations in Definition~\ref{def:yangian}, one may confirm that the generators $\{h_{i,s}, x^{\pm}_{i,s}\}_{i\neq p}$ generate a $Y_{\hbar}(\gs)$ subalgebra of charge zero when $r>2$. 
In the case $r=2$, so $\gs \cong \mathfrak{a}_1$, the relations give instead a $Y_{d\hbar}(\gs)$ subalgebra, where $d = (\alpha,\alpha)/2$, proportional to the length of the remaining root. 
In this case, the generators may be rescaled to recover $Y_{\hbar}(\mathfrak{a}_1)$; we will not discuss this case in detail here and will consider the subalgebra to be isomorphic to $Y_{\hbar}(\gs)$.

The subalgebra $Y_{\hbar}(\gs) \subset Y_{\hbar}(\g)$ is understood not to be a Hopf subalgebra, as noted in 2.18 of \cite{pressleyFundamentalRepresentationsYangians1991}.
However, we may use the triangularity of the coproduct to obtain the following result. 

\begin{lemma} \label{l:nested-coproduct}
	Let $\Delta^{\g}$ denote coproduct in $Y_{\hbar}(\g)$. Then, for any $x \in Y_{\hbar}(\gs) \subset Y_{\hbar}(\g)$, 
	\[
		\Delta^{\g} (x) = \Delta^{\gs} (x) \mod \bigoplus_{n\geq 1} Y^{(n)} \otimes Y^{(-n)}.
	\]
\end{lemma}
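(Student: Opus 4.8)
The plan is to use that $\Delta^{\g}$ and $\Delta^{\gs}$ are algebra homomorphisms, to reduce the congruence to a check on a generating set of $Y_{\hbar}(\gs)$, and to organise the correction terms produced by the triangularity property according to charge.

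First I would set up the ambient structure. Since $-h^p\in\h\subset U(\g)$ it is primitive, so the charge grading descends to the tensor square: $\Delta^{\g}(Y^{(n)})\subset\bigoplus_{a+b=n}Y^{(a)}\ot Y^{(b)}$. Put $\mathcal{A}:=\bigoplus_{n\geq0}Y^{(n)}\ot Y^{(-n)}$ and $\mathcal{N}:=\bigoplus_{n\geq1}Y^{(n)}\ot Y^{(-n)}$. As $Y^{(a)}Y^{(b)}\subset Y^{(a+b)}$, the space $\mathcal{A}$ is a subalgebra of $Y_{\hbar}(\g)\ot Y_{\hbar}(\g)$, $\mathcal{N}$ is a two-sided ideal of $\mathcal{A}$, and the projection $P_0:\mathcal{A}\to Y^{(0)}\ot Y^{(0)}$ onto the charge-$(0,0)$ summand is an algebra homomorphism with kernel $\mathcal{N}$. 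By the triangularity property, the non-leading part of $\Delta^{\g}$ applied to any of $h_{i,s},x^{\pm}_{i,s}$ lies in $\bigoplus_{\beta>0}Y^{\leq0}_{-\beta}\ot Y^{\geq0}_{\beta}$, and an element of $Y^{\leq0}$ of $\h$-weight $-\beta$ has a non-negative charge, strictly positive exactly when $\beta$ involves $\alpha_p$; hence $\Delta^{\g}$ of each generator of $Y_{\hbar}(\gs)$ lies in $\mathcal{A}$, and therefore $\Delta^{\g}(Y_{\hbar}(\gs))\subset\mathcal{A}$. Since $\Delta^{\gs}(Y_{\hbar}(\gs))\subset Y_{\hbar}(\gs)\ot Y_{\hbar}(\gs)\subset Y^{(0)}\ot Y^{(0)}$, both $P_0\circ\Delta^{\g}|_{Y_{\hbar}(\gs)}$ and $\Delta^{\gs}$ are algebra homomorphisms $Y_{\hbar}(\gs)\to Y^{(0)}\ot Y^{(0)}$, and the lemma is precisely the statement that they coincide. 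As the Yangian is generated by its elements of degree zero and one, it then suffices to prove equality on the generators $h_{i,0},x^{\pm}_{i,0},x^{\pm}_{i,1}$ with $i\neq p$ (with $h_{i,1}=[x^{+}_{i,1},x^{-}_{i,0}]$).

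Next I would carry out the charge bookkeeping on these generators. For $h_{i,0},x^{\pm}_{i,0}$ the coproduct is primitive in both Yangians and $P_0$ is the identity there, so equality is immediate. For $g=x^{\pm}_{i,1}$ with $i\neq p$, the triangularity property writes $\Delta^{\g}(g)$ as a leading term built only from $h_{i,0},x^{\pm}_{i,0}\in Y_{\hbar}(\gs)$ (all of charge zero) plus a correction $C^{\g}\in\bigoplus_{\beta>0}Y^{\leq0}_{-\beta}\ot Y^{\geq0}_{\beta}$, while $\Delta^{\gs}(g)$ is the same leading term plus a correction $C^{\gs}$ supported on the positive roots of $\gs$. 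Splitting $C^{\g}=C^{\g}_{0}+C^{\g}_{+}$ according to whether $\beta$ involves $\alpha_p$: the part $C^{\g}_{+}$ has strictly positive charge in its first tensor factor, hence lies in $\mathcal{N}$, whereas $C^{\g}_{0}$ has charge $(0,0)$. Thus $P_0(\Delta^{\g}(g))$ equals the leading term plus $C^{\g}_{0}$, and everything comes down to the identity $C^{\g}_{0}=C^{\gs}$.

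I expect $C^{\g}_{0}=C^{\gs}$ to be the main obstacle: it asserts that the charge-$(0,0)$ part of the degree-one coproduct of $Y_{\hbar}(\g)$ is intrinsic to the subalgebra $\gs$. The full coproduct in Drinfel'd's second presentation is not known in closed form, but its degree-one part is; writing $x^{\pm}_{i,1}$ through Drinfel'd's first presentation, $\Delta^{\g}(x^{+}_{i,1})$ is built from the Lie bracket of $\g$ and the split Casimir $\Omega_{\g}$. Using the Killing-orthogonal decomposition $\h=\h_{\gs}\oplus\C\,h^p$, the vanishing $[x^{\pm}_i,h^p]=0$ for $i\neq p$, and the fact that every root of $\g$ involving $\alpha_p$ has a root vector of nonzero charge, one splits $\Omega_{\g}=\Omega_{\gs}+\Omega'$ with each term of $\Omega'$ carrying a tensor factor that is either of nonzero charge or lies in $\C\,h^p$; the former keeps the first-factor charge nonzero and the latter is annihilated by $\mathrm{ad}_{x^{\pm}_i\ot1}$, so precisely the charge-$(0,0)$ part of the correction reproduces the one assembled inside $\gs$, namely $C^{\gs}$. (One could instead try to bypass the explicit degree-one formula by checking directly that $P_0\circ\Delta^{\g}|_{Y_{\hbar}(\gs)}$ is a coassociative, counital, triangular coproduct on $Y_{\hbar}(\gs)$ and appealing to rigidity of the Yangian Hopf structure.) Once $C^{\g}_{0}=C^{\gs}$ is in hand, the two algebra homomorphisms agree on the chosen generators, hence on all of $Y_{\hbar}(\gs)$, which is the asserted congruence.
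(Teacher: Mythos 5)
Your general framework coincides with the paper's: reduce to a generating set of $Y_{\hbar}(\gs)$, do charge bookkeeping on the correction terms supplied by triangularity, and extend multiplicatively. Your packaging of the last step -- $\mathcal{N}=\bigoplus_{n\geq1}Y^{(n)}\ot Y^{(-n)}$ is a two-sided ideal of the subalgebra $\mathcal{A}=\bigoplus_{n\geq0}Y^{(n)}\ot Y^{(-n)}$, so the projection $P_0$ onto the charge-$(0,0)$ summand is an algebra map -- is a clean reformulation of the paper's induction on products, and the degree-zero generators are handled identically.

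The gap is at the step you yourself flag as the crux, namely $C^{\g}_{0}=C^{\gs}$ for $x^{\pm}_{i,1}$. Passing to Drinfel'd's first presentation replaces $x^{\pm}_{i,1}$ by $J(x^{\pm}_{i})$, but these differ by $\hbar w^{\pm}_{i}$, where $w^{\pm}_{i}\in U(\g)$ is a sum over \emph{all} positive roots of $\g$, whereas the analogous element $w^{\pm,\gs}_{i}$ formed inside $Y_{\hbar}(\gs)$ runs only over roots of $\gs$. The discrepancy $w^{\pm}_{i}-w^{\pm,\gs}_{i}$ is a charge-zero element \emph{not} lying in $Y_{\hbar}(\gs)$, and its (primitive-type) coproduct contributes terms of the form $w\ot 1$ and $1\ot w$ to the charge-$(0,0)$ component; your splitting $\Omega_{\g}=\Omega_{\gs}+\Omega'$ says nothing about these, so the identity $C^{\g}_{0}=C^{\gs}$ does not yet follow -- one must check that these contributions cancel against the difference $J^{\g}(x^{\pm}_{i})-J^{\gs}(x^{\pm}_{i})=\hbar(w^{\pm}_{i}-w^{\pm,\gs}_{i})$ entering the leading terms (they do, but that verification is exactly what your sketch omits). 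Note also that pieces such as $[x^{+}_{i},x^{+}_{\alpha}]\ot x^{-}_{\alpha}$ arising from $[x^{+}_{i}\ot 1,\Omega']$ lie outside $\mathcal{A}$ altogether, so the bookkeeping before applying $P_0$ needs more care than ``first-factor charge nonzero''. The paper sidesteps all of this by a different choice of degree-one generator: $t_{i,1}=h_{i,1}-\tfrac{\hbar}{2}h_{i,0}^{2}$ has the \emph{exact} coproduct $\Delta^{\g}(t_{i,1})=t_{i,1}\ot 1+1\ot t_{i,1}-\hbar\sum_{\alpha>0}(\alpha_{i},\alpha)\,x^{-}_{\alpha}\ot x^{+}_{\alpha}$, and $\{h_{i,0},x^{\pm}_{i,0},t_{i,1}\}_{i\neq p}$ already generate $Y_{\hbar}(\gs)$; the difference $\Delta^{\g}(t_{i,1})-\Delta^{\gs}(t_{i,1})$ is then visibly the sum over roots outside $\gs$, hence lies in $\mathcal{N}$, and the lemma follows at once. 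Your parenthetical alternative (rigidity of the Hopf structure on $Y_{\hbar}(\gs)$) is not something you can invoke without substantial additional argument.
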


\begin{proof}
	The result follows by induction, using the same technique that was used to prove Proposition~2.9 in \cite{gautamPolesFinitedimensionalRepresentations2023}.
	We will reproduce the proof here for completeness.
	Indeed, let 
	\[
		t_{i,1} = h_{i,1} - \frac{\hbar}{2} h_{i,0}^2.
	\]
	It is understood that $Y_{\hbar}(\g)$ may be generated by $\{h_{i,0}, x^{\pm}_{i,0}, t_{i,1} \}_{1 \leq i \leq r}$, and we also have 
	\[
		\Delta^{\g}(t_{i,1}) = t_{i,1} \ot 1 + 1 \ot t_{i,1} - \hbar \sum_{\alpha \in (\Phi^\g)^+} (\alpha_i,\alpha) x^-_{\alpha} \ot x^+_{\alpha} .
	\]
	Observe that, for $i\neq p$, 
	\[
		\Delta^{\g}(t_{i,1}) - \Delta^{\gs}(t_{i,1})= - \hbar \sum_{\alpha \in (\Phi^\g)^+ \setminus (\Phi^\gs)^+} (\alpha_i,\alpha) \, x^-_{\alpha} \ot x^+_{\alpha} \in \bigoplus_{n\geq1} Y^{(n)} \ot Y^{(-n)}.
	\]
	For the inductive step, we have the following property: if we suppose $x, y \in Y_{\hbar}(\gs)$ satisfy the desired condition, then
	\begin{align*}
		\Delta^{\g}(xy)  &= \left(\Delta^{\gs}(x) + \bigoplus_{n\geq1} Y^{(n)} \ot Y^{(-n)} \right) \left( \Delta^{\gs}(y) + \bigoplus_{n\geq1} Y^{(n)} \ot Y^{(-n)} \right) 
		\\ 
		&= \Delta^{\gs}(x)\Delta^{\gs}(y) + \bigoplus_{n\geq1} Y^{(n)} \ot Y^{(-n)} 
		\\ 
		&= \Delta^{\gs}(xy) + \bigoplus_{n\geq1} Y^{(n)} \ot Y^{(-n)} .
	\end{align*}
	We have used the fact that $\Delta^{\gs}(x) \in Y^{(0)} \ot Y^{(0)}$.
	Inductively, this proves the result for all $x \in Y_{\hbar}(\gs)$.
\end{proof}

For what follows, it will be expedient to make the following assumption. 
\begin{center}
	\emph{Assume that $V$ is chosen such that $V^I$ are irreducible representations of $Y_\hbar(\gs)$. }
\end{center}
With this assumption, the charge spaces may be characterised by sets of Drinfel'd polynomials.
We wish to find these polynomials.
First, we can obtain the $\gs$-weight of each of the spaces $V^I$ by dropping the $p^\text{th}$ Dynkin label from their $\g$-weight.
The degree of the Drinfel'd polynomials can therefore be deduced, and it remains to find its roots.

For this purpose, let us introduce the normalised transfer operator $\overline{T}_i(u)$, as defined in \cite{gautamPolesFinitedimensionalRepresentations2023}.
We will not give the precise definition of $\overline{T}_i(u)$ here; rather, we will state some properties which will be useful later on. 
\begin{theorem}[Theorem~4.4 of \cite{gautamPolesFinitedimensionalRepresentations2023}]
	Let $V$ be a highest weight $Y_{\hbar}(\g)$ representation with highest weight vector $\xi$ and highest $\g$-weight $\lambda$. 
	Then for $i \in \{1, \dots, r\}$, there exists a map $\overline{T}_i(u) \in \End(V)[u]$ with properties as follows.
	\begin{itemize}
		\item Suppose $V_{\mu}$ is a $\g$-weight space of $V$ with $\g$-weight $\mu$. Then $\overline{T}_i(u)|_{V_\mu}$ is a monic polynomial of degree $(\lambda - \mu)(\omega^{\vee}_i)$. 
		\item Let $Q^{\g}_{i,V}(u)$ denote the eigenvalue of $\overline{T}_i(u)$ on the lowest weight space in $V$. Any eigenvalue of $\overline{T}_i(u)$ has eigenvalues that are monic polynomials which divide $Q^{\g}_{i,V}(u)$.
	\end{itemize}
\end{theorem}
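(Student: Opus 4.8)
The plan is to fix a working definition of $\overline{T}_i(u)$, extract its commutation law with the lowering currents, and then run inductions on the weight lattice from which both bullet points follow. The natural candidate is the ``transfer operator'' on $V$ associated with the $i$-th prefundamental (Baxter) module of the Borel subalgebra $Y^{\geq 0}$, renormalised by its scalar action on the highest weight vector so that $\overline{T}_i(u)\cdot\xi = \xi$; one could equally try to build it directly from the Drinfel'd--Cartan current $h_i(u)$. The preliminary steps are to check that this renormalised object lies in $\End(V)[u]$ (polynomiality in $u$), and that it commutes with $\h \subset Y_\hbar(\g)$, hence preserves every $\g$-weight space $V_\mu$. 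On $V_\lambda$ it is then the constant polynomial $1$, which is monic of degree $(\lambda-\lambda)(\omega^{\vee}_i)=0$ and serves as the base case.

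For the degree statement I would induct downward on $\mu$. Since $V = Y_{\hbar}(\g)\cdot\xi$ is highest weight, every weight space $V_\mu$ with $\mu\neq\lambda$ is spanned by vectors $x^{-}_{j,s}\cdot w$ with $w$ of weight $\mu+\alpha_j$, so it is enough to understand how $\overline{T}_i(u)$ moves past the $x^{-}_j$'s. From the relations of Definition~\ref{def:yangian} between $h_i(u)$ and $x^{-}_j(v)$, after the renormalisation and clearing of denominators, one obtains a relation of the schematic form
\[
	\overline{T}_i(u)\, x^{-}_j(v) = f_{ij}(u,v)\, x^{-}_j(v)\, \overline{T}_i(u) + (\text{lower-weight corrections}),
\]
where $f_{ij}(u,v)$ is a ratio of monic linear polynomials in $u$ whose net $u$-degree turns out to be $\delta_{ij} = \langle\alpha_j,\omega^{\vee}_i\rangle$. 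Thus acting by $x^{-}_i$ raises the $u$-degree by exactly one and acting by $x^{-}_j$, $j\neq i$, leaves it unchanged; as $(\lambda-\mu)(\omega^{\vee}_i)$ is precisely the coefficient of $\alpha_i$ in $\lambda-\mu$, the inductive step closes. Monicity propagates because $f_{ij}$ is monic in $u$; one still has to verify that the leading term is not annihilated by the correction terms nor lost on passing from a spanning set to the whole of $V_\mu$, which follows from the explicit relation together with the fact that the predicted degree does not depend on the chain of lowering currents used to reach $V_\mu$.

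For the divisibility statement I would again work downward, using that every weight $\mu$ of $V$ is joined to the lowest weight $w_0\lambda$ by a chain $\mu = \mu_0,\, \mu_1 = \mu_0-\alpha_{j_1},\, \dots,\, \mu_{N}=w_0\lambda$ of simple-root steps. Composing the displayed commutation relation along such a chain shows that each eigenvalue-polynomial of $\overline{T}_i(u)$ on $V_\mu$, multiplied by the monic linear factors contributed by the steps with $\alpha_{j_k}=\alpha_i$, arises (modulo already-controlled lower-weight contributions) among the eigenvalue-polynomials on $V_{\mu_1}$, and so on down to $V_{w_0\lambda}$, where the operator acts by $Q^{\g}_{i,V}(u)$; hence every eigenvalue-polynomial divides $Q^{\g}_{i,V}(u)$. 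To make this rigorous one refines the weight partial order to a total order, observes that the ``correction terms'' land in strictly lower weight spaces, and runs a downward induction --- this is also where one proves that the eigenvalues are genuinely polynomials in $u$ rather than merely algebraic functions, i.e. that the characteristic polynomial of $\overline{T}_i(u)$ factors over $\C[u]$.

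The main obstacle is the first step: making $\overline{T}_i(u)$ precise, proving polynomiality in $u$, and deriving the commutation relation with $x^{-}_j(v)$ with the correct net degree $\delta_{ij}$ --- this is the genuine content (Section~4 of \cite{gautamPolesFinitedimensionalRepresentations2023}), and once it is in hand both bullets reduce to bookkeeping on the weight lattice. A secondary subtlety is that $\overline{T}_i(u)$ need not act as a scalar on each $V_\mu$, so ``is a monic polynomial'' and ``divides $Q^{\g}_{i,V}(u)$'' must be read at the level of operator-valued polynomials and their characteristic data.
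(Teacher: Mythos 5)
This statement is not proved in the paper at all: it is quoted verbatim (as Theorem~4.4 of \cite{gautamPolesFinitedimensionalRepresentations2023}), and the author explicitly declines even to define $\overline{T}_i(u)$, so there is no in-paper proof to compare your proposal against. Judged on its own terms, your proposal is a plan rather than a proof, and the gap sits exactly at its load-bearing point. You never fix a definition of $\overline{T}_i(u)$ (you offer two inequivalent candidates --- a prefundamental/Baxter transfer operator, or something built from $h_i(u)$ --- which have different properties), you do not establish polynomiality in $u$, and the ``schematic'' relation
\[
	\overline{T}_i(u)\, x^{-}_j(v) = f_{ij}(u,v)\, x^{-}_j(v)\, \overline{T}_i(u) + (\text{lower-weight corrections}),
\]
with net $u$-degree $\delta_{ij}$, is asserted rather than derived. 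It cannot be extracted from the defining relations of Definition~\ref{def:yangian} as you suggest, because $\overline{T}_i(u)$ is not one of the Drinfel'd currents: its eigenvalues are Baxter-type polynomials $Q$, not the rational ratios by which $h_i(u)$ acts, and the relation of a transfer-type operator to the lowering currents is governed by TQ-type operator identities whose very existence (together with polynomiality of the $Q$'s) is the Frenkel--Hernandez-level content of the cited reference. Since both bullet points in your outline --- the degree/monicity induction and the divisibility chain down to the lowest weight space --- are bookkeeping layered on top of that unproven relation, the argument does not go through as written; you acknowledge this yourself by calling the first step ``the genuine content.'' A secondary issue: your divisibility step also quietly assumes that the generalized eigenvalues of $\overline{T}_i(u)$ on each $V_\mu$ are honest polynomials in $u$ (not merely algebraic functions), which again is part of the polynomiality theorem you have deferred, not a consequence of refining the weight order.
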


This polynomial $Q^{\g}_{i,V}(u)$ is the \emph{$i^{\text{th}}$ Baxter polynomial} of $V$. 
With respect to our decomposition, the $\overline{T}_i(u)$ for $i\neq p$ are transfer operators for $\gs$ on each of the $V^I$, albeit with potentially the incorrect normalisation.
This implies that $Q^{\gs}_{i,V^I}(u)$ divides $Q^{\g}_{i,V}(u)$ for each $i \neq p$ and each $I$.
In particular, $P_i(u)$ divides $Q^{\gs}_{i,V^I}(u)$ for each of the Drinfel'd polynomials for $V^I$. 
In cases where $Q^{\gs}_{i,V^I}(u)$ does not have many factors -- for example, when all $V^I$ are fundamental or trivial representations -- this is sufficient to deduce $P_i(u)$.

\begin{example} \label{e:g2-subreps}
	Let $\g \cong \g_2$ and let the removed root be the long root, $\alpha_2$, so the subalgebra $\gs \cong \mathfrak{a}_1$.
	Let $V=M(\omega_1)$, the first fundamental representation.
	Then, from \cite{gautamPolesFinitedimensionalRepresentations2023} we obtain 
	\begin{align}
		Q^{\g}_{1,V}(u) &= (u-\tfrac12\hbar)(u-\tfrac52\hbar) \\
		Q^{\g}_{2,V}(u) &= u(u-2\hbar)(u-3\hbar)(u-5\hbar).
	\end{align}
	In terms of charge eigenspaces, $V$ decomposes as, in order of increasing charge,
	\[
		M^{\g_2}(\omega_1) \cong M^{\mathfrak{a}_1}(\omega) \oplus M^{\mathfrak{a}_1}(2\omega) \oplus M^{\mathfrak{a}_1}(\omega).
	\]
	Each of these can be extended to an irreducible representation of $Y_\hbar(\mathfrak{a}_1)$; it remains to find the shift parameter for each representation.
	For $V^0$, there is no shift, and the Drinfel'd polynomial is simply $P(u) = u$.
	As representations of $Y_\hbar(\mathfrak{a}_1)$, these representations have Baxter polynomials, which in the case of $\mathfrak{a}_1$ are simply the Drinfel'd polynomials, for some shifts $a_1, a_2$,
	\[
		Q^{\mathfrak{a}_1}_{1,V^1}(u) = (u-a_1)(u-a_1-\hbar)
		\qquad 
		\text{and}
		\qquad 
		Q^{\mathfrak{a}_1}_{1,V^2}(u) = (u-a_2).
	\]
	Comparison with the Baxter polynomials of $V$ yields $a_1=2\hbar$ and $a_2=5\hbar$.
	Hence, now as representations of the Yangian, 
	\[
		M^{\g_2}(\omega_1) \cong M^{\mathfrak{a}_1}(\omega) \oplus M^{\mathfrak{a}_1}(2\omega)_{2\hbar} \oplus M^{\mathfrak{a}_1}(\omega)_{5\hbar}.
	\]
\end{example}

Using similar techniques to Example~\ref{e:g2-subreps}, we have computed the shifts for each subalgebra in Table~\ref{t:v-decomp}.
We have restricted to the case where $V$ is a fundamental representation of the Yangian that is no larger than the corresponding fundamental representation of $\g$, and where each of the representations $V^I$ are fundamental or trivial representations; note that $\mathfrak{e}_8$ lacks such a representation.

\begin{table}
	
	\bgroup
	\def\arraystretch{1.3}

	\newcommand{\fun}[2]{M^{\gs}(\omega_{#1})_{#2}}
	\begin{tabular}{CCC|CCCCC}
		\g & \alpha_p & \gs  &  V & V^0 & V^1 & V^2 & V^3 
		\\
		\hline
		\hline
		% \mf{a}_r & \alpha_r & \mf{a}_{r-1} & M^{\g}(\omega_i) & \fun{i}{} & \fun{i-1}{\hbar/2} &
		% \\
		\mf{a}_r & \alpha_1 & \mf{a}_{r-1} & M^{\g}(\omega_i) & \fun{i}{} & M^{\gs}(\omega_{i+1})_{\hbar/2} &
		\\
		\hline
		\multirow{2}{*}{$\mf{b}_r$}  & \multirow{2}{*}{$\alpha_1$} & \multirow{2}{*}{$\mf{b}_{r-1}$}
		& M^{\g}(\omega_1) & 1 & \fun{1}{\hbar} & 1
		\\
		% && & M^{\g}(\omega_2) & \fun{1}{} & \fun{2}{\hbar} \oplus 1 & \fun{1}{2\hbar} 
		% \\
		&& & M^{\g}(\omega_r) & \fun{r-1}{} & \fun{r-1}{2\hbar} & 
		\\
		\hline
		\mf{c}_r & \alpha_1 & \mf{c}_{r-1} & M^{\g}(\omega_1) & 1 & \fun{1}{\hbar/2} & 1
		\\
		\mf{c}_r & \alpha_r & \mf{a}_{r-1} & M^{\g}(\omega_1) & \fun{1}{} & \fun{r-1}{(r+2)\hbar/2}
		\\
		\hline
		\multirow{2}{*}{$\mf{d}_r$} & \multirow{2}{*}{$\alpha_1$} & \multirow{2}{*}{$\mf{d}_{r-1}$} & M^{\g}(\omega_1) & 1 & \fun{1}{\hbar/2} & 1 
		\\
		% \mf{d}_r & \alpha_1 & \mf{d}_{r-1} & M^{\g}(\omega_2) & \fun{1}{} & \fun{2}{\hbar/2} \oplus 1 & \fun{1}{\hbar}
		% \\
		& & & M^{\g}(\omega_{r-1}) & \fun{r-2}{} & \fun{r-1}{\hbar} & 
		\\
		% \mf{d}_r & \alpha_1 & \mf{d}_{r-1} & M^{\g}(\omega_{r}) & \fun{r-1}{} & \fun{r-2}{\hbar} & 
		% \\
		\mf{d}_r & \alpha_{r} & \mf{a}_{r-1} & M^{\g}(\omega_1) & \fun{1}{} & \fun{r-1}{(r-2)\hbar/2} & 
		\\
		% \mf{d}_r & \alpha_{r} & \mf{a}_{r-1} & M^{\g}(\omega_{r-1}) & \fun{r-1}{} & \fun{r-3}{\hbar} & \cdots & \fun{(r \mod 2)}{\hbar \left\lfloor r/2 \right\rfloor}
		% \\
		\hline
		\multirow{2}{*}{$\mf{e}_6$} & \multirow{2}{*}{$\alpha_6$} & \multirow{2}{*}{$\mf{d}_5$} & M^{\g}(\omega_1) & \fun{1}{} & \fun{5}{3\hbar/2} & 1
		\\
		& &  & M^{\g}(\omega_6) & 1 & \fun{4}{\hbar/2} & \fun{1}{2\hbar}
		\\
		% \mf{e}_6 & \alpha_6 & \mf{d}_5 & M^{\g}(\omega_2) & \fun{5}{} & \fun{2}{\hbar} \oplus 1 & \fun{4}{2\hbar}
		% \\
		\mf{e}_6 & \alpha_2 & \mf{a}_5 & M^{\g}(\omega_1) & \fun{1}{} & \fun{4}{3\hbar/2} & \fun{1}{3\hbar}
		\\
		\hline
		% \mf{e}_7 & \alpha_7 & \mf{e}_6 & M^{\g}(\omega_1) & \fun{1}{} & \fun{2}{3\hbar/2} \oplus 1 & \fun{6}{3\hbar}
		% \\
		\mf{e}_7 & \alpha_7 & \mf{e}_6 & M^{\g}(\omega_7) & 1 & \fun{6}{\hbar/2} & \fun{1}{5\hbar/2} & 1
		\\
		\mf{e}_7 & \alpha_1 & \mf{d}_6 & M^{\g}(\omega_7) & \fun{1}{} & \fun{5}{2\hbar} & \fun{1}{4\hbar}
		\\
		\mf{e}_7 & \alpha_2 & \mf{a}_6 & M^{\g}(\omega_7) & \fun{6}{} & \fun{2}{2\hbar} & \fun{5}{7\hbar/2} & \fun{1}{11\hbar/2}
		\\
		\hline
		% \mf{f}_4 & \alpha_4 & \mf{b}_3 & M^{\g}(\omega_4) & 1 & \fun{3}{\hbar/2} & \fun{1}{2\hbar} \oplus 1 & \fun{3}{7\hbar/2} & 1
		%\\
		\mf{f}_4 & \alpha_1 & \mf{c}_3 & M^{\g}(\omega_4) & \fun{1}{} & \fun{2}{5\hbar/2} & \fun{1}{5\hbar} &
		\\
		\hline
		\mf{g}_2 & \alpha_2 & \mf{a}_1 & M^{\g}(\omega_1) & M^{\gs}(\omega) & M^{\gs}(2\omega)_{2\hbar} & M^{\gs}(\omega)_{5\hbar}
	\end{tabular}

	\egroup

	\caption{Decomposition of $V$ into representations of the Yangian. }
	\label{t:v-decomp}
\end{table}

\section{Block decomposition of the \texorpdfstring{$R$}{R}-matrix} \label{s:block-r}

Let us now investigate the effect of the decomposition into charge eigenspaces on $R(u)$.
%This will then allow us to analyse the restriction to $\gs$ on the extended Yangian via the RTT relation.
%We have two decompositions of the $R$ matrix: one into projectors to the irreducible subspaces, and one into block matrices which map between the $\gs$ irreps. 
%Fully understanding the interplay between these two decompositions is the key to understanding the nested Bethe ansatz.
Since the $R$ matrix is $\g$-invariant, it is clearly $(\gs \oplus \mathbb{C})$-invariant. 
Hence, it respects the decomposition into total charge eigenspaces \eqref{tensor-charge-decomp}: it is block diagonal with respect to this decomposition. 
%Let $R^c(u)$ denote the block matrix $R^c(u) : W_c \to W_c$. 
This matrix decomposes further into blocks which map between tensor spaces $V^I \ot V^J$ of equal charge:
\[
	R^{IJ}_{KL}(u) \in \Hom_{\gs}(V^K \ot V^L,V^I \ot V^J),
\]
where $I+J=K+L$.

It is immediate from the above that $R^{00}_{00}(u)$ and $R^{NN}_{NN}(u)$ satisfy Yang-Baxter equations on $V^0 \ot V^0 \ot V^0$ and $V^N \ot V^N \ot V^N$ respectively -- this follows from conservation of charge. 
In general, however, the matrix blocks of the $R$-matrix defined above are not $R$-matrices in their own right. 
Nevertheless, there must exist Yangian intertwiners which intertwine these spaces.
In what follows, we reconcile these ideas using the block Gauss decomposition of the $R$-matrix.

Define an ordering on spaces $V^I \ot V^J$, ordering them first by total charge, then by the label of the first tensor factor, so
\[
	V^0 \ot V^0 < V^0 \ot V^{1} < V^{1} \ot V^0  < V^0 \ot V^{2} < \dots
\]

\begin{proposition} \label{UDL-decomp}
	For generic values of $u\in \C$, the $R$-matrix admits a unique block decomposition
	\begin{equation} 
		R(u) = U(u) D(u) L(u), \label{r-udl}
	\end{equation}
	where $D(u)$ is block diagonal, and $U(u)$ and $L(u)$ are block upper- and lower-triangular respectively with diagonal blocks equal to the identity matrix.
	The blocks are defined recursively as follows: 
	\begin{equation} \label{d-def}
		D^{IJ}(u) = R^{IJ}_{IJ}(u) - \sum_{k=1}^{\min(\Jmax-I,J-1)} U^{IJ}_{I+k, J-k}(u) D^{I+k, J-k}(u) L^{I+k, J-k}_{IJ}(u),
	\end{equation}
	\begin{multline} \label{u-def}
		U^{IJ}_{I+j,J-j}(u) = \left[R^{IJ}_{I+j,J-j}(u) - \sum_{k=j+1}^{\min(\Jmax-I,J-1)} U^{IJ}_{I+k, J-k}(u) D^{I+k, J-k}(u) L^{I+k, J-k}_{I+j,J-j}(u) \right] \\ \times [D^{I+j,J-j}(u)]^{-1},
	\end{multline}
	and
	\begin{equation} \label{l-def}
		L^{I+j,J-j}_{IJ}(u) = [D^{I+j,J-j}(u)]^{-1} \left[R^{I+j,J-j}_{IJ}(u) - \sum_{k=1}^{j-1} U^{I+j,J-j}_{I+k, J-k}(u) D^{I+k, J-k}(u) L^{I+k, J-k}_{IJ}(u) \right].
	\end{equation}
\end{proposition}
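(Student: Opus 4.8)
The statement is essentially a standard linear-algebra fact—the block $UDL$ (equivalently, block Gauss) decomposition of an invertible matrix—transported to the particular block structure induced by the charge grading, so the plan is to verify that the recursive formulas \eqref{d-def}, \eqref{u-def}, \eqref{l-def} are well-posed and that they reproduce $R(u)$. First I would fix the ordering of the index set $\{(I,J) : 0 \leq I,J \leq N\}$ described just before the proposition (by total charge $I+J$, then by $I$), and observe that $R(u)$ is block upper-triangular \emph{within each fixed total-charge sector} only in the trivial sense that it is block-diagonal across sectors; the nontrivial triangular structure is entirely \emph{inside} a fixed sector $I+J=K+L$. So it suffices to prove the decomposition sector by sector, i.e. for each value of the total charge $M$, for the finite block matrix $\bigl(R^{IJ}_{KL}(u)\bigr)_{I+J=K+L=M}$ acting on $W_M = \bigoplus_{I+J=M} V^I\ot V^J$.

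The core of the argument is an induction that runs the recursion in the correct order. Within a sector $M$, order the blocks by decreasing first index: $(I,J)$ with $I$ largest comes ``first''. Then \eqref{d-def}–\eqref{l-def} express $D^{IJ}$, and the entries $U^{IJ}_{I+j,J-j}$, $L^{I+j,J-j}_{IJ}$, in terms of $R$-blocks and of $D,U,L$-blocks with strictly larger first index; the base case is the top block, where $D^{0M}(u) = R^{0M}_{0M}(u)$ (taking $I$ minimal, so the sum in \eqref{d-def} is empty) and similarly for the block of maximal first index. The induction hypothesis is that all $D$-blocks with larger first index have already been constructed and are invertible for generic $u$; then \eqref{u-def} and \eqref{l-def} are well-defined (the inverses $[D^{I+j,J-j}(u)]^{-1}$ exist generically), and \eqref{d-def} defines $D^{IJ}(u)$. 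Invertibility of the newly-constructed $D^{IJ}(u)$ for generic $u$ follows because $R(u)\to I$ as $u\to\infty$ by the property stated after \eqref{prp}, so $D^{IJ}(u)\to I$ as well (each correction term vanishes in that limit), hence $\det D^{IJ}(u)$ is a rational function not identically zero, so it is nonzero off a finite set; taking the union over the finitely many blocks gives the generic-$u$ statement. Uniqueness is the standard uniqueness of $LU$-type factorisations: if $U D L = U' D' L'$ with $U,U'$ unipotent upper-triangular, $L,L'$ unipotent lower-triangular, $D,D'$ block-diagonal and invertible, then $U'^{-1}U = D' L' L^{-1} D^{-1}$ forces both sides to be block-diagonal and unipotent, hence the identity, and likewise $L' L^{-1}=I$, $D=D'$.

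Finally I would check that the recursion genuinely reconstructs $R(u)$, i.e. that with the blocks so defined one has $R^{IJ}_{KL}(u) = \sum U^{IJ}_{\bullet}D^{\bullet}L^{\bullet}_{KL}$ restricted appropriately; this is a direct bookkeeping verification that \eqref{d-def}, \eqref{u-def}, \eqref{l-def} are exactly the rearrangements of the matrix identity $R = UDL$ read off block by block, using that $U$ has identity diagonal blocks and vanishing blocks below the diagonal (and dually for $L$). The only mild subtlety worth spelling out is that the upper limit $\min(N-I, J-1)$ in the sums is precisely the range of $k\geq 1$ for which $(I+k, J-k)$ is still a legitimate index pair with strictly larger first index in the same sector—i.e. $I+k\leq N$ and $J-k\geq 0$, with the shift to $J-1$ reflecting that $k\geq 1$—so no out-of-range blocks appear.

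I expect the main obstacle to be purely expository rather than mathematical: making the ordering and the ``strictly larger first index'' induction precise enough that the recursion is manifestly well-founded, and confirming that the index ranges in \eqref{d-def}–\eqref{l-def} match the combinatorics of the charge decomposition \eqref{charge-decomp}–\eqref{tensor-charge-decomp}. The invertibility-for-generic-$u$ step leans essentially on $R(u)\to I$ at infinity, which is already recorded, so there is no genuine analytic difficulty; and the block-$LU$ uniqueness is entirely standard.
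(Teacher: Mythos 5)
Your proposal is correct and takes essentially the same approach as the paper: the paper's proof simply cites a standard block Gauss ($UDL$) factorisation algorithm for the recursive construction and then, exactly as you do, secures generic invertibility of the $D^{IJ}(u)$ from rationality of the $R$-matrix together with $R(u)\to I$ as $u\to\infty$. Your write-up just spells out the induction, the sector-by-sector bookkeeping, and the uniqueness argument that the paper delegates to the cited reference.
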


\begin{proof}
	This can be obtained using e.g. \cite{stanimirovicFullrankBlockLDL2012}.

	The validity of the decomposition relies on the invertibility of the matrices $D^{IJ}$.
	The determinant of $D^{IJ}$ is a rational function in $u$, from the rationality of the $R$-matrix.
	As $u\to \infty$, the $R$-matrix tends to the identity matrix, hence the decomposition is valid in some open region containing infinity, and so none of the $\det(D^{IJ}(u))$ are identically equal to zero. 
\end{proof}

The matrices $U(u)$ and $L(u)$ are invertible due to being upper/lower triangular with 1s on the diagonal, and $D(u)$ is assumed to be invertible for generic values of $u$. 
By taking the inverse of the above decomposition, and using the unitarity of the $R$-matrix \eqref{unitarity}, we obtain the following. 

\begin{corollary}\label{cor:ldu-decomp}
	The $R$-matrix admits a unique block decomposition
	\begin{equation} 
		R(u) = \widetilde{L}(u) \widetilde{D}(u) \widetilde{U}(u), \label{r-ldu}
	\end{equation}
	where $\widetilde{D}(u)$ is block diagonal, and $\widetilde{U}(u)$ and $\widetilde{L}(u)$ are block upper- and lower-triangular respectively with diagonal blocks equal to the identity matrix.
	The blocks are related to those in Proposition~\ref{UDL-decomp} by 
	\begin{equation}
		\widetilde{D}(u) = D(-u)^{-1}, \qquad 
		\widetilde{L}(u) = U(-u)^{-1}, \qquad
		\widetilde{U}(u) = L(-u)^{-1}.
	\end{equation}
\end{corollary}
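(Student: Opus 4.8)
The plan is to obtain the $\widetilde{L}\widetilde{D}\widetilde{U}$ decomposition by inverting the $U(u)D(u)L(u)$ decomposition of Proposition~\ref{UDL-decomp} and then applying the unitarity relation \eqref{unitarity}. First I would note that, since $U(u)$ and $L(u)$ are block triangular with identity diagonal blocks, they are invertible, and for generic $u$ so is $D(u)$ by Proposition~\ref{UDL-decomp}; hence $R(u)$ is invertible for generic $u$, and from \eqref{r-udl} one has
\[
	R(u)^{-1} = L(u)^{-1} D(u)^{-1} U(u)^{-1}.
\]
The key structural observation is that the inverse of a block lower-triangular matrix with identity diagonal blocks is again block lower-triangular with identity diagonal blocks (and similarly for upper-triangular), and the inverse of a block-diagonal matrix is block-diagonal; this is immediate from the recursive block-elimination formulae, or from \cite{stanimirovicFullrankBlockLDL2012}. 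So $R(u)^{-1}$ is already exhibited as a product (lower-triangular)$\cdot$(diagonal)$\cdot$(upper-triangular) with the correct normalisation of diagonal blocks.

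Next I would invoke unitarity \eqref{unitarity}, which gives $R(u)^{-1} = R(-u)$. Substituting into the displayed identity yields
\[
	R(-u) = L(u)^{-1} D(u)^{-1} U(u)^{-1},
\]
and replacing $u$ by $-u$ gives exactly an $\widetilde{L}(u)\widetilde{D}(u)\widetilde{U}(u)$ decomposition of $R(u)$ with
\[
	\widetilde{L}(u) = L(-u)^{-1}, \qquad \widetilde{D}(u) = D(-u)^{-1}, \qquad \widetilde{U}(u) = U(-u)^{-1},
\]
which is the asserted formula. Finally, for uniqueness: if $R(u) = \widetilde{L}'\widetilde{D}'\widetilde{U}' = \widetilde{L}\widetilde{D}\widetilde{U}$ were two such decompositions, then $\widetilde{L}^{-1}\widetilde{L}' \widetilde{D}' = \widetilde{D} \widetilde{U}(\widetilde{U}')^{-1}$; the left side is block lower-triangular and the right side block upper-triangular, so both equal a block-diagonal matrix, and comparing diagonal blocks (which are the identity for the triangular factors) forces $\widetilde{L}' = \widetilde{L}$, $\widetilde{D}' = \widetilde{D}$, $\widetilde{U}' = \widetilde{U}$. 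Alternatively, uniqueness here is inherited directly from the uniqueness in Proposition~\ref{UDL-decomp} applied to $R(-u)$.

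I do not expect any serious obstacle: the only point requiring a little care is confirming that inversion genuinely preserves the triangular/diagonal block structure and the identity-diagonal normalisation — this is the routine fact about block $LDU$ factorisations — and that the substitutions $u \mapsto -u$ together with \eqref{unitarity} line up correctly. The genericity caveat (invertibility of the $D^{IJ}(u)$, hence of $\widetilde D^{IJ}(u) = D^{-IJ}(-u)^{-1}$, away from a proper subvariety) is exactly as in Proposition~\ref{UDL-decomp}, since $\det \widetilde D^{IJ}(u)$ is again a nonzero rational function of $u$.
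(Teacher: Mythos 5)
Your proposal is correct and follows essentially the same route as the paper, which obtains the corollary precisely by inverting the $U(u)D(u)L(u)$ decomposition and applying unitarity $R(u)^{-1}=R(-u)$, with the triangular-with-identity-blocks structure preserved under inversion. The only blemish is the typo $D^{-IJ}(-u)^{-1}$ in your final remark, which should read $[D^{IJ}(-u)]^{-1}$.
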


We list below some practical relations satisfied by the $D$-matrices.

\begin{lemma} \label{d-matrices}
	For $1 \leq I,J \leq \Jmax$,
	\begin{gather}
		D^{J1}(u) = R^{J1}_{J1}(u) \label{DJ1}
		\\
		D^{\Jmax J}(u) = R^{\Jmax J}_{\Jmax J}(u) \label{DNJ}
		\\
		D^{1J}(u)=[R^{1J}_{1J}(-u)]^{-1} \label{D1J}
		\\
		D^{J\Jmax}(u)=[R^{J\Jmax}_{J\Jmax}(-u)]^{-1} \label{DJN}
		\\
		P\,D^{IJ}(u)\, P = \left[D^{JI}(-u)\right]^{-1} \label{PDP}
	\end{gather}
\end{lemma}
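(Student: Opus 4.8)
The plan is to read all five identities directly off the recursive formulas \eqref{d-def}, \eqref{u-def}, \eqref{l-def} in Proposition~\ref{UDL-decomp}, together with the unitarity relation \eqref{unitarity} and the permutation symmetry \eqref{prp}. The key observation driving \eqref{DJ1} and \eqref{DNJ} is that the correction sum in \eqref{d-def} ranges over $1 \leq k \leq \min(\Jmax - I, J-1)$, so it is empty precisely when $J=1$ (for \eqref{DJ1}) or when $I=\Jmax$ (for \eqref{DNJ}); in either case $D^{IJ}(u) = R^{IJ}_{IJ}(u)$ with no correction. This is immediate once one unwinds the index range.

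For \eqref{D1J} and \eqref{DJN} I would use Corollary~\ref{cor:ldu-decomp}. Applying $R(u)R(-u)=I$ and comparing with the $\widetilde L\widetilde D\widetilde U$ decomposition gives $\widetilde D(u) = D(-u)^{-1}$, and more importantly the block structure: since the ordering places $V^0\ot V^J$ (resp.\ $V^J \ot V^{\Jmax}$) as the \emph{first} block of its total-charge sector, the lower-triangular factor $\widetilde L(u)$ acts trivially into that block from above and the upper-triangular factor $\widetilde U(u)$ acts trivially into it from the corresponding extreme. Concretely, for the smallest block in each charge sector the corrections in the $\widetilde D$-analogue of \eqref{d-def} again vanish, so $\widetilde D^{1J}(u) = R^{1J}_{1J}(u)$, and combining with $\widetilde D(u) = D(-u)^{-1}$ yields $D^{1J}(u) = [R^{1J}_{1J}(-u)]^{-1}$; the same argument with the maximal-charge block gives \eqref{DJN}. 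Alternatively one can derive \eqref{D1J} and \eqref{DJN} from \eqref{PDP} together with \eqref{DJ1} and \eqref{DNJ}.

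For the reflection identity \eqref{PDP}, the strategy is to conjugate the whole decomposition \eqref{r-udl} by the permutation $P$ and invoke \eqref{prp}, $R^{VV}(u) = P R^{VV}(u) P$, combined with unitarity. Conjugating by $P$ swaps the tensor factors, hence sends the block $R^{IJ}_{KL}(u)$ to $R^{JI}_{LK}(u)$ and reverses the secondary ordering (first tensor factor label) within each charge sector; this turns a block-upper-triangular matrix into a block-lower-triangular one. So $P R(u) P = (PU(u)P)(PD(u)P)(PL(u)P)$ is an $LDU$-type decomposition of $R(u)$ (after using \eqref{prp}), and matching it against the $\widetilde L\widetilde D\widetilde U$ decomposition of $R(u)$ from Corollary~\ref{cor:ldu-decomp} forces $P D(u) P = \widetilde D(u) = D(-u)^{-1}$ block by block, i.e.\ $P\,D^{IJ}(u)\,P = [D^{JI}(-u)]^{-1}$. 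The main obstacle here is bookkeeping: one must check carefully that conjugation by $P$ genuinely reverses the block ordering within each fixed total-charge subspace (it sends $V^I\ot V^J < V^{I'}\ot V^{J'}$ with $I<I'$, $I+J=I'+J'$ to $V^J\ot V^I$ and $V^{J'}\ot V^{I'}$ with $J > J'$), so that "upper-triangular" and "lower-triangular" are exchanged and the uniqueness clause of Proposition~\ref{UDL-decomp}/Corollary~\ref{cor:ldu-decomp} can be applied. Once the ordering reversal is established, uniqueness of the two decompositions does all the remaining work.
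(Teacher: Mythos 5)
Your proposal is correct and follows essentially the same route as the paper: \eqref{DJ1} and \eqref{DNJ} read off the recursion, and \eqref{PDP} obtained by conjugating the $UDL$ decomposition by $P$, invoking \eqref{prp} and unitarity, and comparing with the $LDU$ decomposition of $R(u)=R(-u)^{-1}$ (the paper carries out the uniqueness comparison by hand, equating a block upper-triangular with a block lower-triangular expression). The only cosmetic difference is that you derive \eqref{D1J} and \eqref{DJN} primarily from the $\widetilde{L}\widetilde{D}\widetilde{U}$ recursion with $\widetilde{D}(u)=D(-u)^{-1}$, whereas the paper obtains them as consequences of \eqref{PDP} and \eqref{prp} — a route you also mention as an alternative.
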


\begin{proof}
	Equations \eqref{DJ1}\eqref{DNJ} hold from the definition, while \eqref{D1J}\eqref{DJN} are obtained as a consequence of \eqref{PDP} and \eqref{prp}.
	It remains to prove \eqref{PDP}.
	Note that the triangular matrices have identity matrices on the diagonal, and are thus invertible with triangular inverse. 
	Hence, by unitarity \eqref{unitarity}, we have
	\begin{equation} \label{r-1-ldu}
		R(u) = R(-u)^{-1} = L(-u)^{-1}D(-u)^{-1}U(-u)^{-1}.
	\end{equation}
	Alternatively, conjugating \eqref{r-udl} by the permutation operator, and applying \eqref{prp} we have 
	\[
		R(u) = P\,R(u)\, P = \widetilde{L}(u) \widetilde{D}(u) \widetilde{U}(u),
	\]
	where $\widetilde{L}(u) = P\, U(u) P$ is block lower triangular with identity matrices on the diagonal,
	$\widetilde{U}(u) = P\, U(u) P$ is block upper triangular with identity matrices on the diagonal.
	Equating this with \eqref{r-1-ldu}, we may obtain the equality
	\[
		L(-u) \widetilde{L}(u) \widetilde{D}(u) = D(-u)^{-1}U(-u)^{-1} \widetilde{U}(u)^{-1}.
	\]
	The left- and right-hand sides are block upper- and lower-triangular respectively, and hence both sides are diagonal. 
	Then, equating the diagonal blocks, only the $D$-matrices remain. 
	Noting that the $V^I \ot V^J$ block of $\widetilde{D}(u) = P\, D(u) P$ is equal to $D^{JI}(u)$, we obtain the result. 
\end{proof}

%If $R(u)$ is regular, the $D^{IJ}(u)$ matrices are regular for $I=J$, while for $I\neq J$ the matrices are equal to zero at 0.
% The set of matrices $D^{IJ}$ will play a crucial role in the nested Bethe ansatz, and will turn out to be the equivalent of the $R$-matrices and $L$ operators for the nested system.
% We therefore make the following conjecture.

Note that in the case $N=1$, which encompasses most known cases of the nested Bethe ansatz, the results of Lemma~\ref{d-matrices} are enough to obtain $D(u)$ from matrix blocks of $R(u)$ without the need to resort to any complex calculations.
In the general case, the importance of these $D$-matrices is borne out in the following result.

\begin{proposition} \label{p:nested-ybe}
	The matrices $\widetilde{D}^{IJ}(u)$ satisfy the intertwining equation 
	\begin{equation} \label{d-intertwining}
		\widetilde{D}^{IJ}(a-b) (\tau_a \ot \tau_b) \Delta^{\text{opp}}(x) \cdot (V^{I} \ot V^{J})
		=
		(\tau_a \ot \tau_b) \Delta(x) \widetilde{D}^{IJ}(a-b) \cdot (V^{I} \ot V^{J})
	\end{equation}
	for $x \in Y_\hbar(\gs) \subset Y_\hbar(\g)$.
	In particular, the matrices $\widetilde{D}^{IJ}(u)$ and $D^{IJ}(u)$  satisfy the Yang-Baxter equations
	\begin{gather}
		\widetilde{D}^{IJ}(u) \widetilde{D}^{IK}(u+v) \widetilde{D}^{JK}(v)
		= \widetilde{D}^{JK}(v) \widetilde{D}^{IK}(u+v) \widetilde{D}^{IJ}(u)
		\\
		{D}^{IJ}(u) {D}^{IK}(u+v) {D}^{JK}(v)
		= {D}^{JK}(v) {D}^{IK}(u+v) {D}^{IJ}(u)
	\end{gather}
	on $V^I \ot V^J \ot V^K$.
\end{proposition}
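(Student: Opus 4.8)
The plan is to derive the intertwining equation \eqref{d-intertwining} by substituting the Gauss decomposition $R(u) = \widetilde{L}(u)\widetilde{D}(u)\widetilde{U}(u)$ of Corollary~\ref{cor:ldu-decomp} into the defining relation \eqref{intertwining}, reading off diagonal blocks with respect to the charge grading; the Yang--Baxter equations then follow from \eqref{d-intertwining} together with uniqueness of Yangian intertwiners.

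First I fix a total charge $K$ and work inside the sector $W_K = \bigoplus_{I+J=K} V^I \ot V^J$, ordered by the label of the first tensor factor. Every operator involved preserves $W_K$: $R(u)$ is $\C(-h^p)$-invariant, hence so are its unique Gauss factors $\widetilde{L}(u), \widetilde{D}(u), \widetilde{U}(u)$, and for $x \in Y_\hbar(\gs)$ the operators $(\tau_a \ot \tau_b)\Delta^{\g}(x)$, $(\tau_a \ot \tau_b)\Delta^{\text{opp}}(x)$ commute with $\Delta(-h^p)$ since $[-h^p, x]=0$ and the shift automorphisms fix $h^p$. By Lemma~\ref{l:nested-coproduct}, $\Delta^{\g}(x) - \Delta^{\gs}(x) \in \bigoplus_{n\geq 1} Y^{(n)} \ot Y^{(-n)}$, and because $Y^{(n)} \cdot V^I \subset V^{I+n}$ this forces $(\tau_a \ot \tau_b)\Delta^{\g}(x)$ to be block lower-triangular on $W_K$ with $I$-th diagonal block $(\tau_a \ot \tau_b)\Delta^{\gs}(x)|_{V^I \ot V^{K-I}}$; dually, $(\tau_a \ot \tau_b)\Delta^{\text{opp}}(x)$ is block upper-triangular with $I$-th diagonal block $(\tau_a \ot \tau_b)\Delta^{\text{opp},\gs}(x)|_{V^I \ot V^{K-I}}$. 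Recall also that $\widetilde{D}(u)$ is block diagonal with blocks $\widetilde{D}^{I,K-I}(u)$, while $\widetilde{L}(u), \widetilde{U}(u)$ are block lower/upper-triangular with identity diagonal blocks.

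Abbreviating $\mathcal{A} = (\tau_a \ot \tau_b)\Delta^{\g}(x)$, $\mathcal{B} = (\tau_a \ot \tau_b)\Delta^{\text{opp}}(x)$ and inserting $R = \widetilde{L}\widetilde{D}\widetilde{U}$ into \eqref{intertwining}, I rearrange the relation $R\mathcal{B} = \mathcal{A}R$ to
\[
	\widetilde{D}\,\big(\widetilde{U}\,\mathcal{B}\,\widetilde{U}^{-1}\big) = \big(\widetilde{L}^{-1}\mathcal{A}\,\widetilde{L}\big)\,\widetilde{D}.
\]
The left-hand side is a product of operators that are block upper-triangular or diagonal, hence block upper-triangular with $I$-th diagonal block $\widetilde{D}^{I,K-I}\cdot(\tau_a\ot\tau_b)\Delta^{\text{opp},\gs}(x)|_{V^I\ot V^{K-I}}$; the right-hand side is similarly block lower-triangular with $I$-th diagonal block $(\tau_a\ot\tau_b)\Delta^{\gs}(x)|_{V^I\ot V^{K-I}}\cdot\widetilde{D}^{I,K-I}$. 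A matrix that is both block upper- and block lower-triangular is block diagonal, so both sides are block diagonal and their diagonal blocks coincide; this is exactly \eqref{d-intertwining} on each $V^I\ot V^J$ (read with $\Delta, \Delta^{\text{opp}}$ replaced by $\Delta^{\gs}, \Delta^{\text{opp},\gs}$, which is legitimate since $\widetilde{D}^{IJ}$ preserves $V^I\ot V^J$ and the two coproducts agree there by Lemma~\ref{l:nested-coproduct}).

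The relation \eqref{d-intertwining} says precisely that $\widetilde{D}^{IJ}(a-b)$ is a morphism of $Y_\hbar(\gs)$-modules on $V^I_a \ot V^J_b$ of the same type characterising the $\gs$-Yangian $R$-matrix $R^{\gs}_{V^I V^J}(a-b)$. Under the standing assumption that each $V^I$ is an irreducible $Y_\hbar(\gs)$-module, this morphism space is one-dimensional for generic spectral parameters (Drinfel'd; cf.\ Theorem~4.1 of \cite{pressleyFundamentalRepresentationsYangians1991}), and since $\widetilde{D}^{IJ}(u) = [D^{IJ}(-u)]^{-1}$ is invertible for generic $u$ we get $\widetilde{D}^{IJ}(u) = f^{IJ}(u)\,R^{\gs}_{V^I V^J}(u)$ with $f^{IJ}$ a nonzero scalar function. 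The matrices $R^{\gs}_{V^I V^J}$ obey the Yang--Baxter equation \eqref{YBE} for $\gs$ (with $a-b=u$, $b-c=v$), and since the scalars $f^{IJ}$ commute and give equal prefactors on the two sides, the $\widetilde{D}^{IJ}$ satisfy the asserted Yang--Baxter equation; the equation for the $D^{IJ}$ then follows by taking inverses and relabelling spectral parameters, using $\widetilde{D}^{IJ}(u) = [D^{IJ}(-u)]^{-1}$ from Corollary~\ref{cor:ldu-decomp}. I expect the main obstacle to be the first half --- fixing the orientation conventions so that each Gauss factor is correctly identified as raising or lowering the charge label, and hence so that the rearranged intertwining relation splits cleanly into an upper- and a lower-triangular part; once \eqref{d-intertwining} is in hand, the Yang--Baxter step is a routine application of intertwiner uniqueness.
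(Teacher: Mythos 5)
Your proposal is correct and follows essentially the same route as the paper: substitute the block Gauss decomposition into the intertwining relation, use the triangularity of the coproduct from Lemma~\ref{l:nested-coproduct} to force both rearranged sides to be block diagonal, and then obtain the Yang--Baxter equations from uniqueness of the $Y_\hbar(\gs)$ intertwiner together with $\widetilde{D}^{IJ}(u) = [D^{IJ}(-u)]^{-1}$. Your write-up is in fact slightly more explicit than the paper's (spelling out the role of the irreducibility assumption on the $V^I$ and the scalar factors $f^{IJ}$), but it is not a different argument.
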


\begin{proof}
	Let $x \in Y_{\hbar}(\gs) \subset Y_{\hbar}(\g)$. 
	Our starting point is the intertwining equation for the $R$-matrix \eqref{intertwining},
	\[
		R^{VV}(a-b) (\tau_a \ot \tau_b) \Delta^{\g, \text{opp}}(x)
		=
		(\tau_a \ot \tau_b) \Delta^{\g}(x) R^{VV}(a-b).
	\]
	From Lemma~\ref{l:nested-coproduct}, we have that the coproduct is block lower triangular,
	\[
		\Delta^{\g}(x) \subset \bigoplus_{n\geq 0} V^{I+n} \ot V^{J-n},
	\]
	and the diagonal blocks of the coproduct agree with $\Delta^{\gs}(x)$.
	Similarly, $\Delta^{\g,\text{opp}}(x)$ is block upper triangular; this can be shown using the Cartan involution.
	
	Now, applying the $LDU$ decomposition of the $R$-matrix, we have 
	\[
		\widetilde{L}(a-b)\widetilde{D}(a-b)\widetilde{U}(a-b) (\tau_a \ot \tau_b) \Delta^{\g, \text{opp}}(x)
		=
		(\tau_a \ot \tau_b) \Delta^{\g}(x) L(a-b)D(a-b)U(a-b).
	\]
	Next, we multiply on the left and right by $\widetilde{L}(a-b)^{-1}$ and $\widetilde{U}(a-b)^{-1}$ respectively, to obtain 
	\[
		\widetilde{D}(a-b)\widetilde{U}(a-b) (\tau_a \ot \tau_b) \Delta^{\g, \text{opp}}(x) \widetilde{U}(a-b)^{-1}
		=
		\widetilde{L}(a-b)^{-1} (\tau_a \ot \tau_b) \Delta^{\g}(x) \widetilde{L}(a-b)\widetilde{D}(a-b).
	\]
	From this position, we observe that the left-hand side of the equation is entirely block upper triangular, while the right-hand side is block lower triangular. 
	We therefore conclude that both sides are in fact block diagonal. 
	These diagonal blocks can be obtained by multiplying together the diagonal blocks of each side respectively.
	For the $\widetilde{L}$ and $\widetilde{U}$ matrices, and their inverses, this simply gives the identity matrix, resulting in \eqref{d-intertwining}.

	The Yang-Baxter equation for $\widetilde{D}(u)$ follows from the uniqueness of the $R$-matrix.
	The transformation $\widetilde{D}(u) \mapsto \widetilde{D}(-u)^{-1} = D(u)$ preserves the Yang-Baxter equation.
\end{proof}

% \begin{conjecture} \label{c:nested-ybe}
% 	The matrices $D^{IJ}(u)$ satisfy the Yang-Baxter equation 
% 	\[
% 		D^{IJ}(u) D^{IK}(u+v) D^{JK}(v)
% 		= D^{JK}(v) D^{IK}(u+v) D^{IJ}(u)
% 	\]
% 	on $V^I \ot V^J \ot V^K$.
% \end{conjecture}

% \begin{proof}[Proof for case $\Jmax=2$]
% 	In the case $\Jmax=2$, two of $V^I \ot V^J \ot V^K$ must be the same. 
% 	Moreover, we have $R(u) \, V^I \ot V^I = D^{II}(u) \, V^I \ot V^I$ for each of $I=1,2$. 
% 	This is enough to deduce the result.
% 	Indeed, one can always rewrite the desired Yang-Baxter equation in terms of $D^{21}(u)$, $D^{12}(u)^{-1}$, $D^{11}(u)^{\pm 1}$ and $D^{22}(u)^{\pm 1}$ by multiplying by the appropriate inverse matrices.
% 	Then, forming the equivalent Yang-Baxter equation in terms of $R$-matrices and taking the appropriate submatrix gives the result. 
% \end{proof}

\begin{remark}
	%\begin{enumerate}
		% \item Unfortunately, there are relatively few known instances of the nested Bethe ansatz for which $\Jmax>2$, and among these (e.g. the $B_r$ vector case) the representations $V^1$ and $V^3$ are trivial, and so the only nontrivial Yang-Baxter equation is that on $(V^2)^{\otimes 3}$. The author has verified this relation in the case of $E_6$ in the vector representation. 
		% \item Considering the expansion in powers of $u^{-1}$, only the first term, the restriction of the $R$ matrix, contributes to the coefficients of $u^0$ and $u^{-1}$.
		% Then, using the decomposition of the split Casimir element \eqref{omega-decomp}, and noting the sums over root vectors vanish under this restriction, we are left with 
		% \begin{align*}
		% 	D^{IJ}(u) &= I - \left(\Omega^{\gs}_{IJ} + \frac{c_{I} c_{J}}{(\omega_p,\omega_p)}\right)u^{-1} + \dots \nonumber
		% 	\\
		% 	& = \left(1- \frac{c_{I} c_{J}}{(\omega_p,\omega_p)} u^{-1} + \dots \right) \left( I - \Omega^{\gs}_{IJ} u^{-1} + \dots \right) 
		% \end{align*}
		% This expansion ensures that the matrices $D^{IJ}$ are semiclassical, if they are indeed $R$ matrices. 
		The Gauss (triangular) decomposition of the Yangian generating matrix has been used to connect the RTT presentation of the Yangian to its current presentation \cite{khoroshkinYangianDouble1996} \cite{khoroshkinGaussDecompositionTrigonometric1995} \cite{liashykBetheVectorsOrthogonal2019}.
		In that case, the equivalent $D^{IJ}$ matrices form the commutative Gelfand-Tsetlin subalgebra of the Yangian. 
		In this sense, the \emph{block} Gauss decomposition may be interpreted as a convenient presentation for the nested Bethe ansatz which somehow lies between the RTT presentation and the Drinfel'd current presentation. 
	%\end{enumerate}
\end{remark}

Missing from the above is a method for determining the normalisation of $D^{IJ}(u)$.
In the case of $D^{I0}(u)=R^{I0}_{I0}(u)$, if $V$ is a fundamental representation of $Y_{\hbar}(\g)$ we may obtain the normalisation factor using Proposition~7.7 of \cite{gautamPolesFinitedimensionalRepresentations2023}. This gives the diagonal matrix element for an $R$ matrix acting on a highest weight vector (this is often referred to as a ``highest weight'' when working in the context of the RTT presentation).

\begin{example}
	Choose $\g = \g_2$, and $V$ the first fundamental representation. 
	From the full Baxter polynomial, we find
	\[
		Q^{\g_2}_{1,v^{1}}(u) = u;
		\qquad 
		Q^{\g_2}_{1,v^{2}}(u) = u(u-2\hbar)(u-3\hbar).
	\]
	Therefore, the normalisation factor is 
	\[
		f^{01}(u) = \frac{u}{u-\hbar}
		\qquad 
		f^{02}(u) = \frac{u(u-2\hbar)\cancel{(u-3\hbar)}}{(u-\hbar)\cancel{(u-3\hbar)}(u-4\hbar)}
		= \frac{u}{u-\hbar} \cdot\frac{u-2\hbar}{u-4\hbar}.
	\]
\end{example}

\section{Block decomposition of the monodromy matrix} \label{s:monodromy}

Under the decomposition \eqref{charge-decomp}, and choosing a basis for $V$ which respects the decomposition, the Yangian generating matrix decomposes into blocks $T^{I}_{J}(u) \in \Hom(V^J,V^I) \ot Y(\g)$.
In keeping somewhat with the standard algebraic Bethe ansatz notation, we write $A^{I}_I(u)$ for diagonal blocks, $B^{I}_{J}(u)$ for above-diagonal blocks, so $I<J$, and  $C^{J}_{I}(u)$ for below-diagonal blocks.
\[
	T(u)  = \left(
	\begin{matrix}
		A^0_0(u) & B^{0}_{1}(u) & \cdots & B^{0}_{N}(u) \\
		C^{1}_{0}(u) & A^{1}_{1}(u) & \cdots & B^{1}_{N}(u) \\
		\vdots & \vdots & \ddots & \vdots \\
		C^{N}_{0}(u) & C^{N}_{1}(u) & \cdots & A^{\Jmax}_{\Jmax}(u) \\
	\end{matrix}
	\right)
\]
The commutation relations between the blocks may be deduced from the $RTT$ relation \eqref{RTT}, and the decomposition of the $R$ matrix.

\begin{lemma} \label{l:grading}
	The charge operator institutes a $\mathbb{Z}$-grading on the elements of $\X$, given by 
	\[
		(1 \ot \mathrm{ad}_{-h^p})(T^{I}_{J}(u)) = -(I-J) T^{I}_{J}(u).
	\]
\end{lemma}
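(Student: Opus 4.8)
The plan is to compute the adjoint action of $-h^p$ on the matrix entries $t_{ij}(u)$ directly, using the embedding $\phi: U(\g) \hookrightarrow \X$ together with relation \eqref{g-yangian}, and then translate this to the block structure. First I would recall that relation \eqref{g-yangian} gives, for $x = -h^p \in \h \subset U(\g)$,
\[
	[\phi(-h^p) + \rho_V(-h^p), T(u)] = 0,
\]
where here $T(u)$ is understood as the matrix $\sum_{ij} E_{ij} \ot t_{ij}(u)$ acting on the auxiliary copy of $V$. Since $\mathrm{ad}_{-h^p}$ on the algebra $\X$ is by definition the inner derivation $X \mapsto [\phi(-h^p), X]$ (this is how the $U(\g)$-action descends to $\X$), this rearranges to
\[
	(1 \ot \mathrm{ad}_{-h^p})(T(u)) = -[\rho_V(-h^p) \ot 1, T(u)] = -[\rho_V(-h^p), T(u)],
\]
an equation of matrices in $\End(V) \ot \X$.

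Next I would evaluate the right-hand side in the weight basis of $V$ that respects the charge decomposition \eqref{charge-decomp}. If $E_{ij}$ maps the charge-$c_J$ subspace to the charge-$c_I$ subspace — i.e. $i$ indexes a basis vector of $V^I$ and $j$ one of $V^J$ — then $\rho_V(-h^p) E_{ij} = c_I E_{ij}$ and $E_{ij} \rho_V(-h^p) = c_J E_{ij}$, so $[\rho_V(-h^p), E_{ij}] = (c_I - c_J) E_{ij} = (I - J) E_{ij}$ using $c_I = c_0 + I$. Expanding $T(u) = \sum_{I,J} \sum_{i \in V^I, j \in V^J} E_{ij} \ot t_{ij}(u)$ and collecting the block $T^I_J(u) = \sum_{i \in V^I, j \in V^J} E_{ij} \ot t_{ij}(u)$, the right-hand side becomes $-\sum_{I,J}(I-J)\, T^I_J(u)$. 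Matching the blocks on both sides — the left-hand side is block-diagonal-free since $\mathrm{ad}_{-h^p}$ acts entrywise on the $\X$-factor — yields exactly
\[
	(1 \ot \mathrm{ad}_{-h^p})(T^I_J(u)) = -(I-J)\, T^I_J(u).
\]
Finally I would remark that this is consistent with, and indeed can be derived more elementarily from, the observation that the $u^{-1}$-coefficient $T^{(1)}$ corresponds under $\phi$ to (a multiple of) the split Casimir, so $\phi(-h^p)$ is built from Yangian generators of total charge zero, confirming $\mathrm{ad}_{-h^p}$ is a genuine $\mathbb{Z}$-grading (it is additive and integer-valued on the block labels).

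**The main obstacle** is bookkeeping the two different roles the auxiliary space plays: $-h^p$ acts both on the matrix part $\End(V)$ via $\rho_V$ and on the algebra $\X$ via $\phi$, and one must be careful that relation \eqref{g-yangian} is precisely the statement that the \emph{sum} of these two actions annihilates $T(u)$, so that the algebra-action alone picks up the \emph{opposite} of the matrix-action. Beyond that the argument is a direct weight computation; no nontrivial input is needed beyond \eqref{g-yangian} and the already-established charge decomposition \eqref{charge-decomp}. One could alternatively bypass \eqref{g-yangian} entirely and take the $u^{-1}$-coefficient of the RTT relation with $R = U D L$, but that route is messier and the Casimir argument is the clean one.
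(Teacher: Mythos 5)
Your proposal is correct and follows essentially the same route as the paper: both invoke the $\g$-invariance relation \eqref{g-yangian} for $x=-h^p$ to get $[\phi(-h^p)+\rho_V(-h^p),T(u)]=0$, and then read off the eigenvalue block-by-block from the charge decomposition of $V$. Your explicit weight-basis computation of $[\rho_V(-h^p),E_{ij}]=(I-J)E_{ij}$ merely fills in the detail the paper leaves implicit, so no substantive difference remains.
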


\begin{proof}
	Let $\rho_V$ denote the map $X_{\hbar}(\g) \to \End(V)$, and $\phi$ the embedding $U(\g) \hookrightarrow \X$.
	Using the fact that the Yangian generating matrix is $\g$-invariant \eqref{g-yangian}, the charge operator satisfies 
	\[
		[\phi(-h^p) + \rho_V(-h^p),T(u)] = 0.
	\]
	Taking the matrix block corresponding to $\Hom(V^J,V^I)$, we conclude that the elements of the Yangian generating matrix are eigenvectors of the adjoint action of the charge operator in such a way that the eigenvalue is exactly the negative of the Lie algebra charge. 
\end{proof}

As with the decomposition of the $R$ matrix, we can make some simple deductions in the cases where the blocks are defined on subspaces of extremal charge. 

\begin{corollary} \label{C:Y-extremal-charge}
	The following relations are satisfied by the Yangian blocks
	\begin{gather*}
		R^{00}_{00}(u-v) (A^{0}_{0}(u))_a (A^{0}_{0}(v))_b =  (A^{0}_{0}(v))_b  (A^{0}_{0}(u))_a R^{00}_{00}(u-v)
		\\
		R^{NN}_{NN}(u-v) (A^{N}_{N}(u))_a (A^{N}_{N}(v))_b =  (A^{N}_{N}(v))_b  (A^{N}_{N}(u))_a R^{NN}_{NN}(u-v)
	\end{gather*}
\end{corollary}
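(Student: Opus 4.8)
The plan is to extract the claimed commutation relation directly from the full RTT relation by projecting onto the appropriate charge subspaces, exactly as one does when ``cutting'' the monodromy matrix in the standard nested ansatz. First I would write the RTT relation $R_{12}(u-v)T_1(u)T_2(v) = T_2(v)T_1(u)R_{12}(u-v)$ as an identity of operators on $V\ot V\ot M$, and then restrict it to the subspace $V^0\ot V^0\ot M$ on the right-hand factors and project onto $V^0\ot V^0\ot M$ on the left. Since $V^0$ is the space of maximal (extremal) charge, Lemma~\ref{l:grading} tells us that $T^I_J(u)$ shifts the charge of a vector in $V$ by $-(I-J)$ when thought of as acting on the auxiliary space; concretely, the only block of $T(u)$ that maps $V^0$ into $V^0$ is $A^0_0(u)$, while $C^I_0(u)$ maps $V^0$ to $V^I$ with $I>0$ and the $B$-blocks map higher charge spaces down. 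So when we evaluate $T_1(u)T_2(v)$ on a vector of the form $v^0\ot v^0\ot m$ and then project the first two tensor factors back onto $V^0\ot V^0$, the intermediate sum $\sum_K T^0_K(u)_1 T^K_0(v)_2$ over auxiliary charge labels collapses: the only surviving term is $K=0$, giving $(A^0_0(u))_a(A^0_0(v))_b$. The same collapse happens on the right-hand side.

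The remaining point is the $R$-matrix: by the block-diagonality of $R(u)$ with respect to total charge (established at the start of Section~\ref{s:block-r}), together with the fact that $V^0\ot V^0$ is the unique summand of total charge $2c_0$ and is itself $\gs$-irreducible, the restriction of $R(u)$ to $V^0\ot V^0$ is precisely the block $R^{00}_{00}(u)$, and this block is invertible (it is a nonzero multiple of a projector/identity on an irreducible space, or more simply $R^{00}_{00}(u)\to I$ as $u\to\infty$, and charge conservation forces no mixing). Restricting the projected RTT relation to $V^0\ot V^0$ on both input and output of the auxiliary spaces then yields exactly $R^{00}_{00}(u-v)(A^0_0(u))_a(A^0_0(v))_b = (A^0_0(v))_b(A^0_0(u))_a R^{00}_{00}(u-v)$. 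For the $V^N\ot V^N$ statement, the argument is symmetric: $V^N$ is the space of minimal charge, the only block mapping $V^N$ to $V^N$ is $A^N_N(u)$, the $B$-blocks raise charge and the $C$-blocks lower it, so the analogous collapse occurs, and $R^{NN}_{NN}(u)$ is again the restriction of the block-diagonal $R(u)$ to the unique extremal summand $V^N\ot V^N$.

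I expect the only genuinely delicate point to be verifying carefully that the ``off-diagonal'' contributions really vanish after projection, i.e. that one may legitimately sandwich the operator identity between the projectors $\Pi^{V^0\ot V^0}$ (respectively $\Pi^{V^N\ot V^N}$) on both sides without picking up cross terms. This is where Lemma~\ref{l:grading} does the work: applying $\mathrm{ad}_{-h^p}$ (equivalently, the $U(1)$-charge grading on $\X$) shows that $\Pi^{V^I} T^{I'}_{J'}(u) \Pi^{V^J}$ is nonzero only when $I - I' = J$ and $J' = J$, so on the extremal spaces the charge conservation of $R^{00}_{00}$ and $R^{NN}_{NN}$ leaves no room for mixing. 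Once that bookkeeping is pinned down, everything else is the routine restriction of the RTT relation, and the statement follows as a corollary of the block-diagonality of $R(u)$ and Lemma~\ref{l:grading}, with no further computation required.
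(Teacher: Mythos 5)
Your argument is correct and is essentially the paper's own (unwritten) reasoning: the corollary is just the extremal-charge block of the RTT relation, where block-diagonality of $R(u)$ with respect to total charge forces $R^{00}_{KL}=R^{KL}_{00}=0$ unless $K=L=0$ (and likewise $R^{NN}_{KL}=R^{KL}_{NN}=0$ unless $K=L=N$), so no cross terms survive the sandwiching by the projectors. One minor slip: since $T_1(u)$ and $T_2(v)$ act on \emph{different} auxiliary factors there is no intermediate sum $\sum_K (T^0_K(u))_1(T^K_0(v))_2$ to ``collapse'' — the $(V^0\ot V^0)\to(V^0\ot V^0)$ block of $T_1(u)T_2(v)$ is $(A^0_0(u))_a(A^0_0(v))_b$ outright, and Lemma~\ref{l:grading} is not actually needed — but this does not affect the conclusion.
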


The $R$-matrix here satisfies the Yang-Baxter equation \eqref{YBE} so, noting also that, when regarded as a series in $u^{-1}$, the constant term in $A^{I}_{I}(u)$ is equal to the identity matrix, this immediately gives us a homomorphism of associative algebras $\Xs \hookrightarrow \X$. 
This, however, is not necessarily a homomorphism of Hopf algebras: for one, the restriction of the coproduct does not give a valid coproduct for the subalgebra. 
To remedy this, as well as prove further results, we make the following construction. 

Let $\mcC$ denote the right ideal generated by $C$ operators, that is, $\mcC := \bigcap_{k>l,s,i,j} \X [C^{kl}]^{(s)}_{ij}$.
We then consider $\X$ itself as a left $\X$-module, in which case $\mcC$ is an invariant submodule, and we may form the quotient module $\X/\mcC$. 
Alternatively, we can define the \emph{idealiser} $\mcI_{\X}(\mcC)$ of $\mcC$ as the subspace of elements $x$ of $\X$ for which $\mcC x = \mcC$. 
This subspace is closed under multiplication, and so is a subalgebra of the $\X$.
Moreover, $\mcC$ is a two-sided ideal in $\mcI_{\X}(\mcC)$, and so we may define $\mcI_{\X}(\mcC)/\mcC$.
We will consider both of these constructions in what follows. 

\begin{proposition}\label{P:DAA}
	For any $I,J$, we have
	\[
		D_{ab}^{IJ}(u-v) (A^{I}_{I}(u))_a (A^{J}_J(u))_b \equiv (A^{J}_J(u))_b (A^{I}_{I}(u))_a D_{ab}^{IJ}(u-v) + \mcC.
	\]
	Moreover, for each $I$ where $V^I$ is non-trivial, the maps
	\begin{gather*}
		X_{\hbar}(\gs) \to \mcI_{\X}(\mcC)/\mcC \\
		T(u) \mapsto A^{I}_{I}(u) 
	\end{gather*} 
	are bialgebra homomorphisms.
\end{proposition}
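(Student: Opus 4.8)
The plan is to extract the claimed relations directly from the RTT relation $R_{12}^{VV}(u-v)T_1(u)T_2(v) = T_2(v)T_1(u)R_{12}^{VV}(u-v)$ by inserting the block $LDU$ decomposition $R(u) = \widetilde L(u)\widetilde D(u)\widetilde U(u)$ of Corollary~\ref{cor:ldu-decomp} (or equivalently the $UDL$ form) and then projecting onto an appropriate block. First I would organise the auxiliary space $V\ot V$ by the charge grading and use Lemma~\ref{l:grading}: the operator $T^I_J(u)$ carries charge $-(I-J)$, so in any product $R\,T_1T_2$ the various terms are sorted by how the row/column charge of the matrix entries distributes. Working modulo $\mcC$ means, concretely, that every below-diagonal block $C^k_l(u)$ appearing \emph{on the right} may be discarded. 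The upshot is that, after conjugating the RTT relation by $\widetilde L(u-v)^{-1}$ on the left and $\widetilde U(u-v)^{-1}$ on the right exactly as in the proof of Proposition~\ref{p:nested-ybe}, the $\widetilde L,\widetilde U$ factors and their inverses contribute only identity blocks on the diagonal, and one reads off the $(V^I\ot V^J)$-diagonal block of the resulting identity of matrices. The diagonal block of $T_1(u)T_2(v)$ picks up $A^I_I(u)_a\,A^J_J(v)_b$ together with correction terms $\sum_k B^I_{I+k}(u)_a\,C^{I+k}_I(v)_b$ and similar; the crucial point is that these corrections involve a $C$ operator \emph{standing to the right of the product}, hence lie in $\mcC$. (More precisely one must be careful about operator ordering in $T_1(u)T_2(v)$ vs. the ideal's convention $\X[C]$; I would phrase this by noting $\mcC$ is a \emph{right} ideal and tracking which factor is rightmost, invoking Lemma~\ref{l:grading} to guarantee that the only way to land in the $(I,J)$-diagonal block with all indices $\geq$ the diagonal is the ``all-$A$'' term plus terms ending in a $C$.) This yields the congruence $D^{IJ}_{ab}(u-v)\,A^I_I(u)_a A^J_J(v)_b \equiv A^J_J(v)_b A^I_I(u)_a\,D^{IJ}_{ab}(u-v) \pmod{\mcC}$.

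For the second assertion I would first note that the congruence just derived, read inside the quotient $\X/\mcC$, does not by itself give a homomorphism, because $\X/\mcC$ is only a module, not an algebra; this is precisely why the idealiser $\mcI_\X(\mcC)$ is introduced. So I would argue in two stages. Stage one: show each entry $(A^I_I(u))^{(s)}_{ij}$ actually lies in $\mcI_\X(\mcC)$, i.e. that $\mcC\,(A^I_I)^{(s)}_{ij}\subseteq\mcC$. This should follow from the commutation relations of a $C$-block with an $A$-block obtained from the \emph{same} RTT relation but projected onto an off-diagonal (below-diagonal) block: moving an $A^I_I$ past a $C^k_l$ produces again a sum of terms each of which is a product ending in some $C$ operator (charge bookkeeping via Lemma~\ref{l:grading} shows no $B$ can appear at the far right without violating the charge of the target block), hence stays in the right ideal $\mcC$. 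Stage two: since $\mcC$ is a two-sided ideal in $\mcI_\X(\mcC)$, the quotient $\mcI_\X(\mcC)/\mcC$ \emph{is} an associative algebra, the image of each $A^I_I(u)$ is a well-defined generating matrix there, and the congruence above becomes an honest RTT relation $D^{IJ}(u-v)A^I_I(u)_aA^J_J(v)_b = A^J_J(v)_bA^I_I(u)_aD^{IJ}(u-v)$ with the $R$-matrix $D^{IJ}$ (which, by Proposition~\ref{p:nested-ybe}, is a genuine Yangian $R$-matrix for $Y_\hbar(\gs)$, satisfying Yang-Baxter and with correct $u\to\infty$ limit, namely $A^I_I(u)\to I$). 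By the correspondence between RTT relations and $\X_\hbar(\gs)$-representations noted after Definition~\ref{d:extended-Yangian}, the assignment $T(u)\mapsto A^I_I(u)$ extends to an algebra homomorphism $X_\hbar(\gs)\to\mcI_\X(\mcC)/\mcC$.

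It remains to upgrade ``algebra homomorphism'' to ``bialgebra homomorphism'', i.e. compatibility with the coproduct $\Delta(t_{ij}(u))=\sum_k t_{ik}(u)\ot t_{kj}(u)$. Here I would apply $\Delta^{\g}$ to the block $T^I_J(u)=\sum_k E_{I\text{-block},k} \ot t(u)$ and use the block structure: $\Delta^{\g}(T^I_I(u))$ expands as a sum over intermediate blocks $\sum_M T^I_M(u)\ot T^M_I(u)$, and the terms with $M\neq I$ involve either a $B$ or a $C$ factor. The $C$-factor terms die modulo $\mcC\ot\X + \X\ot\mcC$; the $B$-factor terms $T^I_M\ot T^M_I$ with $M<I$ pair an above-diagonal $B^I_M$ with a below-diagonal $C^M_I$ — wait, I should instead observe that $\Delta$ on a diagonal block gives $\sum_{M\le I}\text{(below-diag)}\ot\text{(above-diag)} + A^I_I\ot A^I_I + \sum_{M\ge I}(\cdots)$, and modulo the relevant ideal on \emph{either} tensor factor only $A^I_I(u)\ot A^I_I(u)$ survives — so $\Delta^{\g}(A^I_I(u)) \equiv A^I_I(u)\ot A^I_I(u)$, matching $\Delta^{\gs}(T(u))=\sum_k T(u)_{\bullet k}\ot T(u)_{k\bullet}$ restricted to the $V^I$-block. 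Compatibility with the counit is immediate since $\epsilon(T(u))=I$ and $\epsilon(A^I_I(u))=I$ on $V^I$. The main obstacle, as in Proposition~\ref{P:DAA}'s first half, is the careful charge-grading argument (Lemma~\ref{l:grading}) ensuring that \emph{all} unwanted cross-terms are right-multiples of $C$-operators — one genuinely has to check that no term survives in which a $B$-operator ends up rightmost in the block one is projecting onto; I expect this bookkeeping, together with verifying $A^I_I(u)\in\mcI_\X(\mcC)$, to be the delicate part, whereas the passage to the quotient algebra and the coalgebra compatibility are then formal.
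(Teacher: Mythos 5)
Your derivation of the congruence itself matches the paper's proof: you conjugate the RTT relation by the triangular factors of the block Gauss decomposition, observe that modulo $\mcC$ one side is block lower triangular and the other block upper triangular, conclude both are block diagonal, and read off the diagonal blocks using the fact that $L$ and $U$ have identity diagonal blocks. Your treatment of the coproduct (all cross terms contain a $C$ in one tensor leg and die modulo $\mcC_{[1]}+\mcC_{[2]}$) and of the counit is also the paper's argument.

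The genuine gap is in your ``stage one'', the verification that the entries of $A^{I}_{I}(u)$ lie in $\mcI_{\X}(\mcC)$. You claim that commuting a $C$-block past an $A$-block via a block-projected RTT relation yields only terms ending in a $C$-operator, because ``charge bookkeeping shows no $B$ can appear at the far right''. That is not what the charge grading gives. Writing a product $C^{J}_{N}(v)\,A^{I}_{I}(u)$ (total charge $-(J-N)<0$ by Lemma~\ref{l:grading}) via the RTT relation produces a sum of terms $X(u)\,Y(v)$ with $Y$ rightmost and total charge $-(J-N)$; the charge constraint only forces $X$ to be a $C$-block whenever $Y$ is an $A$- or $B$-block, so terms of the form $C^{I'}_{M'}(u)\,B^{J'}_{N'}(v)$, with a $B$ rightmost, are perfectly admissible and are not visibly elements of the right ideal $\mcC$. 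So your commutation argument does not close on its own. The paper fills exactly this hole differently: by the PBW theorem for $\X$ with an ordering of generators that respects the charge (all $C$-type generators rightmost), every PBW monomial of the relevant nonzero charge ends in a $C$-generator, hence $\mcC$ contains \emph{every} element of that charge; consequently any charge-zero element --- in particular every coefficient of $A^{I}_{I}(u)$ --- satisfies $\mcC a\subseteq\mcC$ and lies in the idealiser. With that replacement (or with an additional inductive argument showing that the $C\cdot B$-type remainders are themselves in $\mcC$), the rest of your proposal goes through as in the paper.
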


\begin{proof}
	The strategy is identical to that of Proposition~\ref{p:nested-ybe}.

	First, we consider the RTT relation \eqref{RTT}.
	With respect to the decomposition of $V \ot V$ into spaces of equal total charge, the $R$ matrix is block diagonal, and so we may consider the both sides of the RTT relation block-wise.
	Consider the diagonal blocks of $T_a(u)T_b(v)$; breaking these blocks further to blocks which map between the spaces $V^I \ot V^J$, these blocks have the form
	\[
		\begin{pmatrix}
			(A^{0}_{0}(u))_{a} (A^{I}_{I}(v))_{b} & (B^{0}_{1}(u))_{a} (C^{I}_{I-1}(v))_b & (B^{0}_{2}(u))_{a} (C^{I}_{I-2}(v))_b & \cdots 
			\\
			(C^{1}_{0}(u))_{a} (B^{I-1}_{I}(v))_b & (A^{1}_{1}(u))_{a} (A^{I-1}_{I-1}(v))_b  & (B^{1}_{2}(u))_{a} (C^{I-1}_{I-2}(v))_b &
			\\
			(C^{2}_{0}(u))_{a} (B^{I-2}_{I}(v))_b & & \ddots  &
			\\
			\vdots & &&
		\end{pmatrix}.
	\]
	Likewise, $T_b(v)T_a(u)$ may be written as
	\[
		\begin{pmatrix}
			(A^{I}_{I}(v))_{b} (A^{1}_{1}(u))_{a} & (C^{I}_{I-1}(v))_b (B^{1}_{2}(u))_{a}  &  (C^{I}_{I-2}(v))_b (B^{1}_{3}(u))_{a} & \cdots 
			\\
			(B^{I-1}_{I}(v))_b (C^{2}_{1}(u))_{a} &  (A^{I-1}_{I-1}(v))_b (A^{2}_{2}(u))_{a} & (C^{I-1}_{I-2}(v))_b (B^{2}_{3}(u))_{a}  &
			\\
			(B^{I-2}_{I}(v))_b (C^{3}_{1}(u))_{a} & & \ddots  &
			\\
			\vdots & &&
		\end{pmatrix}.
	\]
	Now we apply the block Gauss decomposition for the $R$-matrix from Proposition~\ref{UDL-decomp},
	\[
		R(u) = U(u) D(u) L(u).
	\]
	Since $L(u)$ is generically invertible, with block lower-triangular inverse, we may write
	\begin{align*}
		U(u-v) D(u-v) L(u-v) T_a(u) T_b(v) &=  T_b(v)T_a(u) U(u-v) D(u-v) L(u-v)
		\\
		\Rightarrow  D(u-v) L(u-v) T_a(u) T_b(v) \left[L(u-v)\right]^{-1}&=  \left[U(u-v)\right]^{-1} T_b(v)T_a(u) U(u-v) D(u-v).
	\end{align*}
	We see that in $\mcI_{\X}(\mcC)/\mcC$, the left-hand side is block lower triangular, while the right-hand side is block upper triangular. Hence, both sides are block diagonal, and we may deduce their values by the product of diagonal blocks. 
	Since the diagonal blocks of the $L(u)$ and $U(u)$ matrices are identity matrices, we obtain 
	\begin{equation*} 
		D^{IJ}(u-v) (A^{I}_{I}(u))_a (A^{J}_{J}(v))_b \equiv (A^{J}_{J}(v))_b (A^{I}_{I}(u))_a D^{IJ}(u-v),
	\end{equation*}
	where $\equiv$ denotes equivalence in $\mcI_{\X}(\mcC)/\mcC$.
	Finally, the $D^{IJ}(u)$ matrix inherits its rational dependence on $u$ from the $R$ matrix.

	To show that this is a closed algebra, we employ the Poincar{\'e}-Birkhoff-de Witt theorem for the extended Yangian from \cite{wendlandtRMatrixPresentationYangian2018}.
	Choose an ordering on Yangian generators which respects the charge, so lower charge elements are lower in the ordering. 
	Then, by the PBW theorem, any element of the extended Yangian may be written as an ordered product of these generators. 
	Matrix elements of the $C$ matrices have strictly positive charge by Lemma~\ref{l:grading}, and so, up reordering, $\mcC$ must contain all elements of $\X$ with strictly negative charge.
	In particular, $\mcC a \in \mcC$ for any $a$ with zero charge, implying $a \in \mcI_{\X}(\mcC)$.
	Finally, due to the relation,
	\[
		D_{ab}^{II}(u-v) (A^{I}_{I}(u))_{a} (A^{I}_{I}(u))_{b} \equiv (A^{I}_{I}(u))_{b} (A^{I}_{I}(u))_{a} D_{ab}^{II}(u-v) + \mcC,
	\]
	we see that the matrix elements of $A^{I}_{I}(u)$ form a closed subalgebra in $\mcI_{\X}(\mcC)$.
	
	It remains to show the bialgebra structure. 
	The matrix $T(u)$ becomes block upper triangular in  $\mcI_{\X}(\mcC)/\mcC$, and so
	\begin{multline}
		\Delta(A^{I}_{I}(u)) = \sum_{J} (T^{I}_{J}(u))_{[1]} \cdot (T^{J}_{I}(u))_{[2]} 
		= (A^{I}_{I}(u))_{[1]} \cdot (A^{I}_{I}(u))_{[2]}
		\\
		+ \sum_{J<I} (C^{I}_{J}(u))_{[1]} \cdot (B^{J}_{I}(u))_{[2]}
		+ \sum_{J>I} (B^{I}_{J}(u))_{[1]} \cdot (C^{J}_{I}(u))_{[2]} ,
	\end{multline}
	where $\cdot$ denotes matrix multiplication in the auxiliary space.
	Therefore, in the quotient algebra,
	\[
		\Delta(A^{I}_{I}(u)) \mapsto (A^{I}_{I}(u))_{[1]} \cdot (A^{I}_{I}(u))_{[2]} + \mcC_{[1]} + \mcC_{[2]}.
	\] 
	Finally, the counit trivially descends to a counit on $\mcI_{\X}(\mcC)/\mcC$.
\end{proof}

By taking the trace of the relation above, as the $D^{IJ}(u)$ are invertible, we immediately obtain the following. 

\begin{corollary} \label{cor:commuting-transfer}
	Denote equivalence in $\mcI_{\X}(\mcC)/\mcC$ by $\equiv$. 
	Suppose $Z^I \ot Z^J$ commutes with $D^{IJ}(u-v)$ for any two $I$ and $J$. 
	Then,
	\[
		\tr_I Z^I A^{I}_{I}(u) \tr_J Z^J A^{J}_{J}(v) \equiv \tr_J Z^J A^{J}_{J}(v) \tr_I Z^I A^{I}_{I}(u)
	\]
\end{corollary}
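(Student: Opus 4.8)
The plan is to derive the commuting-transfer relation directly from Proposition~\ref{P:DAA} by taking partial traces over the two auxiliary spaces $V^I$ and $V^J$, exactly as one does in the usual derivation of $[t(u),t(v)]=0$ from the $RTT$ relation. First I would start from the established congruence
\[
	D_{ab}^{IJ}(u-v)\,(A^{I}_{I}(u))_a\,(A^{J}_J(v))_b \equiv (A^{J}_J(v))_b\,(A^{I}_{I}(u))_a\,D_{ab}^{IJ}(u-v) + \mcC,
\]
where subscripts $a,b$ label the copies of the auxiliary spaces $V^I,V^J$, and $\equiv$ denotes equivalence in $\mcI_{\X}(\mcC)/\mcC$. (Note the statement as written has a typo: $(A^J_J(u))_b$ should read $(A^J_J(v))_b$.) I would then left-multiply both sides by $(Z^I)_a(Z^J)_b\,[D^{IJ}_{ab}(u-v)]^{-1}$, which is legitimate because $D^{IJ}(u-v)$ is invertible for generic $u-v$ by Proposition~\ref{UDL-decomp}, and because these factors act only on the auxiliary spaces and so pass through the quotient by $\mcC$ without interacting with it.

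The key step is to invoke the hypothesis that $Z^I \ot Z^J$ commutes with $D^{IJ}(u-v)$. Writing $M := D^{IJ}_{ab}(u-v)$ for brevity, the left-hand side after multiplication becomes $(Z^I)_a(Z^J)_b M^{-1} M (A^I_I(u))_a (A^J_J(v))_b = (Z^I)_a (A^I_I(u))_a (Z^J)_b (A^J_J(v))_b$, using that $(Z^I)_a$ and $(Z^J)_b$ act on distinct tensor factors and commute, and that $(A^I_I(u))_a$ and $(Z^J)_b$ likewise commute. The right-hand side becomes $(Z^I)_a(Z^J)_b M^{-1} (A^J_J(v))_b (A^I_I(u))_a M$, and here I would use $[Z^I\ot Z^J, M]=0$, equivalently $[M^{-1}, (Z^I)_a(Z^J)_b]=0$, to move the $Z$'s to the right of $M^{-1}$ and then cancel $M^{-1}M$ on the far right, leaving $M^{-1}(A^J_J(v))_b(Z^J)_b(A^I_I(u))_a(Z^I)_a M$. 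Now taking the partial trace $\tr_a\tr_b$ over both auxiliary spaces, the cyclicity of the trace over the $a$ and $b$ factors removes the conjugating $M^{\mp1}$ on the right-hand side, and the two traces factor on the left since the operators act on independent spaces; this yields
\[
	\tr_I Z^I A^{I}_{I}(u)\,\tr_J Z^J A^{J}_{J}(v) \equiv \tr_J Z^J A^{J}_{J}(v)\,\tr_I Z^I A^{I}_{I}(u),
\]
as the $\mcC$ contribution vanishes after tracing in the quotient $\mcI_{\X}(\mcC)/\mcC$.

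I expect the only real subtlety — rather than a genuine obstacle — to be bookkeeping about where the $\mcC$ terms live and why they survive being multiplied by auxiliary-space matrices and then traced: one must check that $\mcC$, being a right ideal contained in $\mcI_\X(\mcC)$, is stable under the left- and right-multiplications by the $Z$ and $D^{\pm1}$ factors (which carry no $\X$-content beyond scalars in the auxiliary tensor legs), so that the congruence is preserved throughout and descends to the quotient after taking the trace. Everything else is the standard train-track manipulation, and the invertibility of $D^{IJ}$ together with the commutation hypothesis on $Z^I\ot Z^J$ are precisely the two ingredients that make it go through, mirroring verbatim the classical argument that the $RTT$ relation plus $[Z\ot Z, R]=0$ gives commuting transfer matrices.
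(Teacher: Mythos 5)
Your proposal is correct and follows essentially the same route as the paper, which obtains the corollary directly from Proposition~\ref{P:DAA} by invertibility of $D^{IJ}(u-v)$, the hypothesis $[Z^I\ot Z^J,D^{IJ}(u-v)]=0$, and cyclicity of the trace over the auxiliary spaces. The paper leaves these trace manipulations implicit ("we immediately obtain"), and your write-up simply spells them out, including the harmless bookkeeping that scalar-entry auxiliary matrices preserve the $\mcC$-congruence.
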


Since we have shown the RTT relation above for all $A$, we can consider acting on an irreducible representation $M$ of $\X$. 
First, we define the \emph{vacuum sector} $\ker \mathcal{C} = \bigcap_{C \in \mcC} \ker C \subset M$, the space annihilated by all $C$ operators.
It is analogous to the term ``vacuum vector'', which is another name for the highest-weight vector of $M$ that is often used in theoretical physics. 
In the nested Bethe ansatz, the vacuum sector plays the same role that the vacuum vector does in the algebraic Bethe ansatz.
From Proposition~\ref{P:DAA}, we immediately have the following result. 

\begin{lemma} \label{l:vac-sec-highest-charge}
		The vacuum sector is equal to the charge zero subspace $M^0$ of $M$  and is invariant under the action of $A^{I}_{I}(u)$ for each $I$.
\end{lemma}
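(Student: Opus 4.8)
The plan is to show two things: first, that the vacuum sector $\ker\mcC$ coincides with the charge-zero subspace $M^0$, and second, that each $A^I_I(u)$ preserves $M^0$. I would start with the grading established in Lemma~\ref{l:grading}: under the action of $\phi(-h^p)$, the spin chain module $M$ decomposes into charge eigenspaces, and the block operator $T^I_J(u)$ has charge $-(I-J)$. In particular, the $C$ operators $C^J_I(u)$ with $J>I$ have strictly positive charge, so they raise the $(-h^p)$-eigenvalue. Since $M$ is finite-dimensional, there is a maximal charge value attained, and the corresponding eigenspace — which I claim is exactly $M^0$ with our labelling convention (the space of maximal charge contains the highest-weight vector, as noted around \eqref{charge-decomp}) — is annihilated by all $C$ operators simply because there is no higher charge space to map into. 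This gives $M^0 \subseteq \ker\mcC$.

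For the reverse inclusion $\ker\mcC \subseteq M^0$, I would argue that if $v \in M$ has a nonzero component in some $M^K$ with $K<N$ (using here $N$ for the top charge label; note the excerpt's indexing has $M^0$ as maximal charge, so I should be careful and phrase this as: if $v$ has a component outside $M^0$), then some $C$ operator acts nontrivially on it. The cleanest way is to use irreducibility of $M$: the subspace $\ker\mcC$ is, by Proposition~\ref{P:DAA} (to be verified below), invariant under the $A^I_I(u)$, and one can check it is also invariant under the $B$ operators modulo lower-charge corrections. Actually the more robust route avoids needing a spanning argument: I would instead observe that $\ker\mcC$ is a submodule-like object under the Borel-type subalgebra generated by the $A$'s and $B$'s — but since that need not act irreducibly, the honest argument is that $M = \X\cdot\xi$ for the highest-weight vector $\xi \in M^0$, and $\X = \mcI_\X(\mcC)\cdot(\text{stuff built from }B)$ acting on $\xi$ together with $\mcC$; since $\mcC\cdot\xi = 0$ and the positive-charge-raising operators eventually exit the module, any vector killed by all of $\mcC$ must already lie in the top charge space. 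I would make this precise by choosing an ordered PBW-type spanning set (as invoked in the proof of Proposition~\ref{P:DAA}) and noting that any $C$-monomial acting on something outside $M^0$ is forced, by charge and finite-dimensionality considerations, to be detectable.

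The invariance under $A^I_I(u)$ is then immediate from Proposition~\ref{P:DAA}. Restricted to $\ker\mcC = M^0$, the relation
\[
	D^{IJ}_{ab}(u-v)(A^I_I(u))_a (A^J_J(v))_b \equiv (A^J_J(v))_b (A^I_I(u))_a D^{IJ}_{ab}(u-v) + \mcC
\]
holds exactly (not merely modulo $\mcC$) when both sides are applied to a vector in $\ker\mcC$, since $\mcC$ annihilates such vectors. But more directly, $A^I_I(u) = T^I_I(u)$ has charge $0$ by Lemma~\ref{l:grading}, so it preserves every charge eigenspace of $M$, in particular $M^0$; this single observation already gives the invariance claim without needing Proposition~\ref{P:DAA} at all. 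So I would present the invariance as a one-line consequence of the grading, and reserve the real work for the identification $\ker\mcC = M^0$.

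The main obstacle will be the inclusion $\ker\mcC \subseteq M^0$: showing that a vector annihilated by all $C$ operators must sit in the top charge space. The upper bound direction (top charge $\subseteq \ker\mcC$) is pure charge-counting, but the reverse requires knowing the $C$ operators act "nondegenerately enough" to detect any vector below the top. I expect to handle this by exploiting irreducibility of $M$ together with the PBW/ordering argument already used in Proposition~\ref{P:DAA}: order generators by charge, write an arbitrary $v$ in terms of the generating action on $\xi$, and observe that the lowest-charge term that could land strictly below $M^0$ is produced by a $C$-type generator, which would then fail to annihilate a suitable related vector — contradicting membership in $\ker\mcC$ unless that term vanishes. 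Making the contradiction airtight without circular reasoning (since $\mcC$ and $M^0$ are what we are relating) is the delicate point; I would route around circularity by using only the grading and the fact that $\mcC\cdot\xi=0$, never presupposing the conclusion.
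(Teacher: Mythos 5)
Your first inclusion ($M^0 \subseteq \ker\mcC$, by extremality of $M^0$ in the charge grading plus the fact that each $C^I_J(u)$ shifts the charge of $M$ strictly in one direction) is exactly the paper's charge-conservation argument, and your observation that $A^I_I(u)$ has adjoint charge zero and therefore preserves every charge eigenspace is a perfectly valid -- arguably cleaner -- route to the invariance claim than the paper's appeal to Proposition~\ref{P:DAA} (given that the equality $\ker\mcC = M^0$ is in hand). The genuine gap is the reverse inclusion $\ker\mcC \subseteq M^0$, which you yourself flag as ``the delicate point'' and never close. Your sketch expands an arbitrary $v\in\ker\mcC$ as $v = x\cdot\xi$ and hopes that charge counting plus finite-dimensionality will show some $C$-operator ``detects'' a component of $v$ outside $M^0$; but knowing that $x$ contains charge-shifting generators tells you nothing, by grading alone, about whether $C\cdot v$ vanishes -- to extract a contradiction along that route you would have to commute $C$'s through the monomials and argue a nondegeneracy of the resulting $[C,B]$-type terms, which is precisely the input you do not supply. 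So as written the hard half of the lemma is not proved.

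The paper uses irreducibility in the opposite direction, and this is what makes the argument close: first note $\ker\mcC$ is graded (the generators of $\mcC$ are charge-homogeneous by Lemma~\ref{l:grading}), so one may take $\eta\in\ker\mcC$ homogeneous of charge $m$. Irreducibility of $M$ gives an element of $\X$ carrying $\eta$ onto the highest weight vector, $\xi \propto x\cdot\eta$. Writing $x$ in PBW-ordered form with all $\mcC$-generators on the right, every monomial containing a $\mcC$-generator annihilates $\eta$, so only $\mcC$-free monomials survive; these are built from $A$- and $B$-type entries and hence cannot move the charge of $\eta$ toward the extremal value occupied by $\xi$. Since $\xi$ sits in the extremal charge space $M^0$, this forces $m=0$, i.e.\ $\eta\in M^0$. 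If you replace your ``$v = x\cdot\xi$ and detect with $\mcC$'' step by this ``$\xi = x\cdot\eta$ and discard the $\mcC$-tails'' step, your proposal becomes a complete proof along essentially the paper's lines.
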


\begin{proof}
	The invariance under the action of $A$ operators is immediate from Proposition~\ref{P:DAA}.

	To show the second part, note first that the matrix $T(u)$ conserves the total charge of a vector in $V \ot M$. Then, since each $C^{I}_{J}(u)$ raises the charge in $V$ by $(I-J)$, it must lower the charge in $M$ by an equivalent amount. 
	Hence, each $C^{I}_{J}(u)$ annihilates the space of charge zero, and so $\ker \mathcal{C} = M^0$

	Now suppose we have $\eta \in \ker \mathcal{C} $.
	As $M$ is irreducible, there exists a monomial in generators such that 
	\[
		\xi \propto t^{(s_1)}_{k_1l_1} \cdots t^{(s_m)}_{k_ml_m} \cdot \eta,
	\]
	where $\xi$ is the highest weight vector of $M$.
	By the PBW theorem, we may choose this to be an ordered monomial, given a linear order on the generators.
	Choosing the order such that all elements of $\mathcal{C}$ are to the right, we see that the monomial cannot contain any elements of $\mathcal{C}$. 
	Now, we observe that the charge in $M$ of the left-hand side of the equation is zero. 
	Since the monomial contains no elements which reduce the charge in $M$, the only remaining possibility is that all elements in the monomial, as well as $\eta$ itself, have charge zero. 
\end{proof}

Finally, we end this section with a very important result for the Bethe ansatz, namely the commutation relation between $A$ and $B$ operators. 

\begin{lemma} \label{L:AB-relation}
	The block operators $A$ and $B$ satisfy
	\begin{multline} \label{AB-relation}
		(A^{I}_{I}(u))_a (B^{J}_{J+1}(v))_b = 
		(D^{IJ}(u-v))^{-1} (B^{J}_{J+1}(v))_b (A^{I}_{I}(u))_a  D^{I ,J+1}(u-v)
		\\
		+P^{I, J}_{J , I} (B^{I}_{I+1}(u))_b (A^{J}_{J}(v))_a P^{J, I+1}_{I+1, J} L_{I ,J+1}^{I+1, J}(u-v) \\
		-L_{I-1,J+1}^{I, J}(u-v) P_{J+1, I-1}^{I-1, J+1} (B^{I-1}_{I}(u))_b (A^{J+1}_{J+1}(v))_a P^{J+1, I}_{I ,J+1} + \mcC.
	\end{multline}
\end{lemma}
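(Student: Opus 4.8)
The plan is to follow the same recipe as Proposition~\ref{p:nested-ybe} and Proposition~\ref{P:DAA}: feed the block Gauss decomposition $R(u-v)=U(u-v)D(u-v)L(u-v)$ of Proposition~\ref{UDL-decomp} into \emph{both} sides of the RTT relation \eqref{RTT} and strip away the triangular factors. Writing $w=u-v$, from $U(w)D(w)L(w)\,T_a(u)T_b(v)=T_b(v)T_a(u)\,U(w)D(w)L(w)$ one multiplies by $U(w)^{-1}$ on the left and $L(w)^{-1}$ on the right to obtain the matrix identity
\[
	D(w)\,L(w)\,T_a(u)T_b(v)\,L(w)^{-1}=U(w)^{-1}\,T_b(v)T_a(u)\,U(w)\,D(w),
\]
which I would then read modulo the right ideal $\mcC$, exactly as in the proof of Proposition~\ref{P:DAA}. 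Two facts established there govern which blocks survive: modulo $\mcC$ the matrix $T(u)$ is block upper triangular (its $C$-blocks vanish), and every element of $\X$ of strictly negative charge vanishes. Recall also that $U(w),D(w),L(w)$ are block diagonal with respect to the total charge, with $D(w)$ fully block diagonal and $U(w),L(w)$ upper/lower triangular within each charge sector, so $L(w)^{\pm1}$ preserve the total charge.

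The key move is to extract a single off-diagonal block: the one sending $V^I\ot V^{J+1}$ (which lies in the total-charge sector $I+J+1$) to $V^I\ot V^J$ (sector $I+J$). On the right-hand side the triangularity of $U(w)^{\pm1}$ together with the vanishing of the $C$-blocks collapses all intermediate labels, leaving precisely one term, $(B^J_{J+1}(v))_b(A^I_I(u))_a\,D^{I,J+1}(w)$. On the left-hand side $D(w)$ contributes the invertible diagonal block $D^{IJ}(w)$, and since $L(w)^{\pm1}$ are triangular within each charge sector, only three blocks of $T_a(u)T_b(v)$ can contribute: the ``diagonal'' one $(A^I_I(u))_a(B^J_{J+1}(v))_b$; the block $(B^I_{I+1}(u))_a(A^J_J(v))_b$ post-composed with the off-diagonal block of $L(w)^{-1}$ joining $(I,J{+}1)$ to $(I{+}1,J)$ inside the sector $I+J+1$; and the block $(B^{I-1}_I(u))_a(A^{J+1}_{J+1}(v))_b$ pre-composed with the off-diagonal block of $L(w)$ joining $(I{-}1,J{+}1)$ to $(I,J)$ inside the sector $I+J$. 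No larger jumps occur, because the $\X$-charge of a $T_a(u)T_b(v)$-block equals the drop in total charge, here $1$, so only one tensor index can move, and only by one. Equating the two sides and cancelling $D^{IJ}(w)$ on the left then yields the relation, up to two cosmetic rewrites.

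Those rewrites are routine. First, the pairs $(I,J{+}1),(I{+}1,J)$ and $(I{-}1,J{+}1),(I,J)$ are \emph{adjacent} in the chosen ordering, so the relevant off-diagonal block of $L(w)^{-1}$ equals minus the corresponding block of $L(w)$ (no higher-order products intervene), turning the middle term into $+L^{I+1,J}_{I,J+1}(w)$ and the last term into $-L^{I,J}_{I-1,J+1}(w)$. Second, in the two cross-terms a $B$-operator sits in the first auxiliary space; conjugating by the permutation $P$ of the two auxiliary factors moves it to the second space and produces exactly the permutation factors displayed in \eqref{AB-relation}. At the ends of the chain ($I=0$, $I=N$, or $J=N$) the offending spaces $V^{I-1}$, $V^{I+1}$, $V^{J+1}$ vanish and the affected cross-terms simply drop out. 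I expect the only real subtlety to be the bookkeeping of which blocks survive the passage to $\mcI_{\X}(\mcC)/\mcC$, together with the observation that here the inverse triangular factor contributes only its nearest-neighbour block; everything else is mechanical.
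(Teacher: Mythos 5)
Your proposal is correct and follows essentially the same route as the paper's own proof: insert the $UDL$ decomposition of Proposition~\ref{UDL-decomp} into the RTT relation, strip the outer triangular factors, extract the block in $\Hom(V^I \ot V^{J+1}, V^I \ot V^J)$, and use triangularity of $L^{\pm1},U^{\pm1}$ together with the vanishing modulo $\mcC$ of terms ending in a $C$-block to see that only the three stated terms survive, with the nearest-neighbour block of $L^{-1}$ equal to minus that of $L$ and permutations inserted at the end. The only caveat is cosmetic: the minus sign on the last term comes from moving it across the equation rather than from an $L^{-1}$-versus-$L$ identity, but your final signs agree with \eqref{AB-relation}.
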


\begin{proof}
	Again we use the $UDL$ decomposition for the $R$ matrix Proposition~\ref{UDL-decomp}.
	Write $R(u) = U(u) D(u) L(u)$. 
	We have 
	\[
		L(u-v) T_a(u) T_b(v) L(u-v)^{-1} = D(u-v)^{-1} U(u-v)^{-1} T_b(v) T_a(u) U(u-v) D(u-v) .
	\]
	Here $L(u): V^I \ot V^J \to \bigoplus_{K\geq 0} V^{I+K} \ot V^{J-K}$.
	We will omit the spectral parameter in what follows.
	Taking the submatrix in $\Hom(V^I \ot V^{J+1}, V^I \ot V^J)$,
	\begin{align*}
		rhs &= (D^{I,J})^{-1} \sum_{i = 0}^{I-1} \sum_{j = 0}^{J-1} (U^{-1})^{I,J}_{I+j,J-j} (T_b)^{J-j}_{J+1+i} (T_a)^{I+j}_{I-i} U^{I-i,J+1+i}_{I ,J+1} D^{I , J+1}
		\\
		&= (D^{I,J})^{-1} (B^{J}_{J+1})_b (A^{I}_{I})_a  D^{I , J+1}  + \mcC
	\end{align*}
	The left-hand side is 
	\begin{align*}
		lhs &= \sum_{i = 0}^{I-1} \sum_{j = 0}^{J} L^{I , J}_{I-i, J+i} (T_a)^{I-i}_{I+j} (T_b)^{J+i}_{J+1-j} (L^{-1})_{I ,J+1}^{I+j,J+1-j}
		\\
		&=(A^{I}_{I})_a (B^{J}_{J+1})_b + L^{I , J}_{I-1, J+1} (B^{I-1}_{I})_a (A^{J+1}_{J+1})_b + (B^{I}_{I+1})_a (A^{J}_{J})_b (L^{-1})_{I,J+1}^{I+1, J} + \mcC
		\\
		&=(A^{I}_{I})_a (B^{J}_{J+1})_b + L^{I , J}_{I-1, J+1} (B^{I-1}_{I})_a (A^{J+1}_{J+1})_b - (B^{I}_{I+1})_a (A^{J}_{J})_b L_{I,J+1}^{I+1, J} + \mcC.
	\end{align*}
	Combining both sides, we insert permutation operators to obtain 
	\begin{multline*}
		(A^{I}_{I}(u))_a (B^{J}_{J+1}(v))_b = 
		(D^{I, J})^{-1} (B^{J}_{J+1}(v))_b (A^{I}_{I}(u))_a  D^{I ,J+1}
		\\
		+P^{I, J}_{J , I} B^{I,I+1}_b(u) (A^{J}_{J}(v))_a P^{J, I+1}_{I+1, J} L_{I ,J+1}^{I+1, J} \\
		-L_{I-1,J+1}^{I, J} P_{J+1, I-1}^{I-1, J+1} (B^{I-1}_{I}(u))_b (A^{J+1}_{J+1}(v))_a P^{J+1, I}_{I ,J+1} + \mcC.
	\end{multline*}

\end{proof}

The terms on the right-hand side of \eqref{AB-relation} are referred to as ``wanted'' (top row) and ``unwanted'' (second and third rows) terms, in relation to their contribution to the eigenvalue of the Bethe eigenvector. 
Specifically, the wanted terms are those with $A^{I}_{I}(u)$ for some $I$; the unwanted terms contain terms with $A^{I}_{I}(v)$.
It is generally shown that the Bethe equations are a sufficient condition for the vanishing of the unwanted terms. 

\section{The creation operator} \label{sec:onex}

In the previous section, we introduced the idea of the vacuum sector. 
In physical terms, we consider the vacuum sector as the space annihilated by all \emph{annihilation operators}, that is, below-diagonal blocks of $T(u)$. 
Continuing this analogy, we construct eigenvectors by acting on the vacuum sector by \emph{creation operators}. 

In general, creation operators will consist of nontrivial linear combinations of products of $B$ and $A$ operators. 
%This will be discussed in more detail in section~\ref{sec:m-particle}. 
However, there is a lot to be learned just from the one-excitation case, involving just a single $B$ operator, which will be covered in this section. 

Indeed, we will start with the case of $\mathfrak{a}_1$, that is, the algebraic Bethe ansatz. 
We then bring these ideas to the more general case to make an ansatz for the creation operator for one excitation.

% The construction of the creation operator is guided by a number of general principles. 
% The first principle is that the creation operator is constructed from products of matrix blocks of $R(u)$ and $B(u)$, that is, we attempt to construct the creation operator without breaking the blocks into their matrix components. 
% Algebraically, this means we are considering products within an algebra over the field of rational functions generated by these matrix blocks, with defining relations obtained from \eqref{RTT}.

\subsection{The algebraic Bethe ansatz}

Consider the $\mathfrak{a}_1$ case: the algebraic Bethe ansatz for an $\mathfrak{sl}_2$-symmetric spin chain with auxiliary space $V$ of arbitrary spin. 
In this case, the decompositions of Section~\ref{s:monodromy} still hold, with the caveat that $\gs$ is simply the trivial Lie algebra, and so there is no Yangian subalgebra. 
Furthermore, the spaces $V^I$ are simply the weight spaces of $V$, and the ``block'' $UDL$ decomposition of Proposition~\ref{UDL-decomp} is now exactly the familiar triangular (Gauss) decomposition of the $R$-matrix. 
The exact matrix elements of the $D^{IJ}(u)$ were calculated in \cite{khoroshkinYangianDouble1996} using the universal $R$-matrix.
We find that
\[
	\frac{D^{I,J+1}(u)}{D^{IJ}(u)} = \frac{(u+J-N) (u+J+1)}{(u+J-I)(u+J-I+1)},
\]
where $N=\dim(V)+1$.
Additionally, the lower triangular matrix $L$ is given by
\[
	L(u) = 1 + \frac{1}{u+(h \ot 1)-(1\ot h)+1} e_{\alpha} \ot e_{-\alpha} + \dots 
\]
where the omitted terms correspond to larger changes in the charge. 
From this, we can rewrite the $AB$ relation from Lemma~\ref{L:AB-relation} as
\begin{multline*}
	a^I(u)b^{J}_{J+1}(v^J) = \frac{(u-v^J+J-N) (u-v^J+J+1) }{(u-v^J+J-I)(u-v^J+J-I+1)} b^{J}_{J+1}(v^J)a^I(u)
	\\
	+ \frac{b^{I}_{I+1}(u) a^{J}(v^J)}{u-v^J+J-I} 
	- \frac{b^{I-1}_{I}(u) a^{J+1}(v^J)}{u-v^J+J-I+1} 
	+ \mcC.
\end{multline*}
It is important to note that choosing $v^J=v+J$ gives a ``wanted term'' that is independent of $J$, equal to
\[
	\frac{(u-v-N) (u-v+1) }{(u-v-I)(u-v-I+1)} a^I(u).
\]
From this observation, we deduce that each $b^{J}_{J+1}(v+J)$ produces an equivalent excitation for each $J$, as the resulting Bethe vector has same transfer matrix eigenvalue. 
Taking the sum over $I$, the wanted terms combine to give a ``dressed transfer matrix'' for a single excitation, 
\[
	t'(u;v) := \sum_{I=0}^N \frac{(u-v-N) (u-v+1) }{(u-v-I)(u-v-I+1)}a^I(u),
\]
and the unwanted terms may be combined to give
\begin{equation*}
	t(u)  \beta(v) 
	=  \beta(v) t'(u;v)
	+ \sum_{I=0}^{N-1} \frac{b^{I}_{I+1}(u)}{u-v-I}  \left(  a^{J}(v+J) - a^{J+1}(v+J) \right) + \mcC,
\end{equation*}
which we then may identify as the residue of the dressed transfer matrix:
\[
	\Res_{u \to v+I} t'(u;v) := (I+1)(I-N)\left(a^I(v+I)- a^{I+1}(v+I) \right).
\]
The resulting expression is
\begin{equation}
	t(u) \beta(v) 
	=  \beta(v) t'(u;v)
	\\
	+ \frac{1}{(J+1)(J-N)} \sum_{I=0}^{N-1} \frac{b^{I}_{I+1}(u)}{u-v-I} \Res_{w \to v+J} t'(w;v) 
	+ \mcC.
\end{equation}

For multiple excitations, the $b$ operators must be combined nontrivially.

\subsection{The auxiliary site}

For the general $\g$ case, the construction of the creation operator is more involved, as the block matrices $B^{I}_{J}(v)$ are matrix operators.
Indeed, since this operator acts on a nontrivial auxiliary space, it suggests that this auxiliary space must be appended to $M^0$ in some way. 

We propose that, in fact, the sites that are added to the spin chain should all be in the \emph{same} $\gs$-representation, regardless of $V$,
\begin{equation} \label{auxsite}
	V^{aux} := \g^{(1)} \cong \mathrm{span}_{\C}\left\{ x^-_{\alpha} \in \g \middle| \alpha >0, (\omega^{\vee}_p,\alpha) = -1 \right\},
\end{equation}
where the action is given by the restriction of the adjoint action of $\g$. 
%The representation $V^{aux}$ for each pair $(\g,\gs)$ is collated in Table~\ref{t:subalgebras}.
We will refer to this space as the \emph{auxiliary site}, and define the \emph{nested vacuum sector} to be the space $(V^{aux})^{\ot m} \ot M^0$.

\begin{lemma}\label{l:properties-of-auxsite}
	The auxiliary site $V^{aux}$, defined in \eqref{auxsite}, satisfies the following:
	\begin{enumerate}
		\item $V^{aux}$ is irreducible as a representation of $\gs$, and has highest weight $-\pi(\alpha_p)$, where $\pi$ is the projection from the weight lattice of $\g$ to the weight lattice of $\gs$. 
		\item An action of $Y_{\hbar}(\gs)$ may be defined on $V^{aux}$ such that $V^{aux}$ is irreducible as a $Y_{\hbar}(\gs)$ representation.
		\item There is a non-trivial intertwiner of $\gs$-representations $V^{I+J} \ot (V^I)^*$ and $(V^{aux})^{\ot J}$.
	\end{enumerate}
\end{lemma}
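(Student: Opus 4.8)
The plan is to treat the three parts in order, since part (iii) relies on the characterisations in (i) and (ii). For part (i), I would work directly with the definition $V^{aux} = \g^{(1)}$ as a $\gs$-module under the restricted adjoint action. The highest-charge vector of $\g^{(1)}$ for the $\gs$-weight ordering should be $x^-_{\alpha_p}$ itself (since $\alpha_p$ is the unique simple root with $(\omega^\vee_p,\alpha_p)=1$, and adding any simple root $\alpha_i$ with $i \neq p$ to $\alpha_p$ keeps the charge at $1$ but moves ``up'' in $\gs$-weight). So $x^-_{\alpha_p}$ is a $\gs$-highest-weight vector of $\gs$-weight $-\pi(\alpha_p)$, where $\pi$ drops the $p$th Dynkin label. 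To see that $V^{aux}$ is generated from this vector under $\gs$, I would argue that every positive root $\alpha$ of $\g$ with $(\omega^\vee_p,\alpha)=1$ can be written as $\alpha_p + \gamma$ with $\gamma$ a sum of simple roots $\neq \alpha_p$ (this uses connectedness of the Dynkin diagram and that $\alpha_p$ is an endpoint node, so $n_p$-type obstructions do not occur at charge $1$), hence $x^-_\alpha$ lies in $U(\gs)\cdot x^-_{\alpha_p}$ by repeatedly applying the lowering operators $x^-_i$, $i\neq p$; irreducibility then follows since a highest-weight cyclic module over a semisimple Lie algebra generated by a single highest-weight vector is irreducible. This part should be mostly bookkeeping on roots.

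For part (ii), the cleanest route is to observe that $V^{aux}$, being the $\gs$-irreducible with highest weight $-\pi(\alpha_p)$, is exactly a fundamental representation of $\gs$ — this is the content flagged just after \eqref{nested-drin-polys}, where it is noted that each connected component of the Dynkin diagram of $\gs$ contains only one node adjacent to $\alpha_p$, so $-\pi(\alpha_p)$ is a fundamental weight. Every fundamental representation of $\gs$ lifts to an irreducible $Y_\hbar(\gs)$-module (its fundamental representation $M^{\gs}(\omega_j)_a$, with the single Drinfel'd polynomial $u-a$), so such an action exists; concretely one can read off the correct shift parameter from the relevant row of Table~\ref{t:v-decomp} or from \eqref{nested-drin-polys}. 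I would phrase this as: identify the fundamental weight, invoke existence of the fundamental $Y_\hbar(\gs)$-representation, done.

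For part (iii), the natural construction is the iterated action map. Part (i) (applied with $\gs$ and its highest root data, reading off $n_p$) gives for each $n$ a $\gs$-equivariant map $\g^{(n)}\ot V^I \to V^{I+n}$ coming from the Lie algebra action, and in particular a map $\g^{(1)}\ot V^J \to V^{J+1}$; iterating, one gets $\g^{(1)}\ot\cdots\ot\g^{(1)}\ot V^I \to V^{I+J}$ ($J$ factors of $\g^{(1)}$), i.e. a $\gs$-map $(V^{aux})^{\ot J}\ot V^I \to V^{I+J}$, which by adjunction is a $\gs$-map $(V^{aux})^{\ot J}\to V^{I+J}\ot(V^I)^*$ — or its transpose, depending on which direction one wants the intertwiner; either way nontriviality is the issue. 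The main obstacle is showing this composite is \emph{nonzero}: a priori repeated application of raising operators of charge $1$ could kill a vector before reaching $V^{I+J}$. I would handle this by tracking the $\gs$-highest-weight vector of $V^0$ (the highest-weight vector $\xi$ of $V$): applying $x^-_{\alpha_p}$ repeatedly, the chain $\xi, x^-_{\alpha_p}\xi, (x^-_{\alpha_p})^2\xi,\dots$ stays nonzero up to the $n_p$th power because $\alpha_p$ occurs with coefficient $n_p$ in the highest root and the $\mathfrak{sl}_2$-string through $\xi$ generated by $\{x^\pm_{\alpha_p}, h_{\alpha_p}\}$ has length governed by $(\lambda,\alpha^\vee_p)\geq $ (relevant bound), so iterated application reaches the charge-$(I+J)$ level nontrivially for $I+J\leq N$; applying this to the subrepresentation $V^0\subset V$ generated by $\xi$ over $\gs$ and then noting all $V^I$ with $I\leq N$ are reached gives nontriviality of the composite for the required range of $I,J$. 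An alternative, perhaps cleaner, argument for nontriviality: since $V^0$ is $\gs$-irreducible and generates $V$ under $U(\g)$, and $\g^{(-1)}$ acting on $V^{I+1}$ hits $V^I$, the transpose maps $V^{I}\to\g^{(-1)}\ot V^{I+1}$ are injective by irreducibility considerations, and composing/dualising transfers this to the map in question. I expect this nonvanishing verification to be the one genuinely delicate point; everything else is representation-theoretic routine.
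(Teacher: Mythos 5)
Parts (i) and (iii) of your plan track the paper's own argument: for (iii) the paper uses exactly the action map $e_{-\alpha}\ot\zeta\mapsto e_{-\alpha}\cdot\zeta$, checks $\gs$-equivariance, and iterates (your concern about non-vanishing of the iterated map is a reasonable extra point that the paper does not belabour), and your cyclic-highest-weight argument for (i) is a fine way to substantiate the paper's terse statement.

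The genuine gap is in part (ii). Your argument rests on two claims that are false in general. First, $-\pi(\alpha_p)$ need not be a fundamental weight of $\gs$: its only nonzero Dynkin label is $-\langle\alpha_p,\alpha_q^\vee\rangle$ at the unique node $q$ adjacent to $\alpha_p$, which equals $2$ or $3$ whenever $\alpha_p$ is a long endpoint attached to a short root. For example, for $\g=\mathfrak{g}_2$ with the long root removed, $V^{aux}=\g^{(1)}$ is the four-dimensional $\mathfrak{a}_1$-module of highest weight $3\omega$, and for $\g=\mathfrak{c}_r$ with $\alpha_r$ removed, $V^{aux}\cong V_{\mathfrak{a}_{r-1}}(2\omega_{r-1})$ of dimension $r(r+1)/2$; the remark you cite after \eqref{nested-drin-polys} concerns the Drinfel'd polynomials being concentrated at a single node (a Kirillov--Reshetikhin-type module), not the weight being fundamental. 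Second, even when $-\pi(\alpha_p)$ is fundamental, the statement ``every fundamental representation of $\gs$ lifts to an irreducible $Y_{\hbar}(\gs)$-module'' on the \emph{same} vector space is false: as the paper itself notes, the Yangian fundamental module $M^{\gs}(\omega_j)_a$ is in general strictly larger than the $\gs$-module $V(\omega_j)$ (e.g.\ $M(\omega_2)$ for $\mathfrak{c}_3$ or for the orthogonal algebras). Whether the particular $\gs$-module $\g^{(1)}$ admits a Yangian structure on the nose is precisely the nontrivial content of (ii), and it cannot be obtained by citation of the fundamental-representation classification. The paper's proof avoids all case analysis: it takes the $Y_{\hbar}(\g)$-module structure on $\g\oplus\C$ from Theorem~8 of Drinfel'd's 1985 paper and restricts it to the charge-zero subalgebra $Y_{\hbar}(\gs)\subset Y_{\hbar}(\g)$; since the generators $h_{i,s},x^{\pm}_{i,s}$ with $i\neq p$ commute with $-h^p$, they preserve the charge decomposition of $\g\oplus\C$, so $\g^{(1)}=V^{aux}$ is a $Y_{\hbar}(\gs)$-submodule, and its irreducibility over $Y_{\hbar}(\gs)$ is automatic from its $\gs$-irreducibility established in part (i). You would need to replace your step (ii) by this (or an equivalent) construction.
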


\begin{proof}
	The first part follows from the fact that $\g$ is simple, and that $-\alpha_p$ is clearly the highest root of all negative roots which contain $-\alpha_p$. 
	
	For the second part, we recall the representation $\g \oplus \C$ of $Y_{\hbar}(\g)$ from Theorem~8 of \cite{drinfeldHopfAlgebrasQuantum1985}, see also Section~5 of \cite{pressleyFundamentalRepresentationsYangians1991} for a more detailed exposition.
	Restricting to $Y_{\hbar}(\gs)$, gives the result. 
	
	For the third part, we will show equivalently that there is a $\gs$-homomorphism $(V^{aux})^{\ot J} \ot V^I \to V^{I+J}$.
	Recall that $V^{aux}$ has a basis given by weight vectors $e_{-\alpha}$. 
	For $J=1$, we clearly have a map
	\[
		\psi(e_{-\alpha} \ot \zeta) = e_{-\alpha} \cdot \zeta,
	\]
	for $\zeta \in V^{I}$, regarding $V^I$ as a subspace of $V$. 
	This is a homomorphism because
	\[
		x \cdot \psi(e_{-\alpha} \ot \zeta) = x e_{-\alpha} \cdot \zeta = ([x, e_{-\alpha}] + e_{-\alpha} x) \cdot \zeta 
		= x \cdot e_{-\alpha} + e_{-\alpha} x \cdot \zeta = \psi(\Delta(x) \cdot  e_{-\alpha} \ot  \zeta ).
	\]
	This then extends to the $J$-fold tensor product of $(V^{aux})^{\ot J}$ by composing the individual homomorphisms.
\end{proof}

\begin{remark}
	The Lie algebra weight of this space agrees with that of the auxiliary space whose Drinfel'd polynomials appear in the Bethe equations, as in \eqref{nested-drin-polys}.
\end{remark}

The non-trivial intertwiners of Lemma~\ref{l:properties-of-auxsite} bestow a natural action on the nested vacuum sector to the $B$ operators. 
This puts us in a position to define the one-particle state.
Let us introduce the $\covec$ operator, which is the natural isomorphism 
\[
	\covec: \Hom(V,W) \to V^* \ot W.
\]
For $M \in \End(W), B \in \Hom(V,W), N \in \End(V)$
\[
	\covec(M B N) = \covec(B) (N \ot M^t).
\]
Let us write $\beta_{J,\overline{I}}(u) := \covec{B^{I}_{J}}(u)$. 
Then the \emph{one-particle creation operator} is 
\[
	\beta_{J+1, \overline{J}}(u) Z^{J+1} \Gamma^{J+1, \overline{J}}_{aux} \in \Hom(V^{aux},\C)  \otimes \X,
\]
where $\Gamma^{J+1, \overline{J}}_{aux}$ is an intertwiner of $\gs$ representations $V^{aux} \hookrightarrow V^{J+1} \ot (V^{J})^*$.
The validity of this expression for the creation operator must be borne out by the action of the transfer matrix, which we will explore in Section~\ref{ssec:wanted-terms}.

Let us take a slight detour here and discuss the case where $V\cong V(\omega_p)$. 
In this case, we can be more explicit. 
Indeed, $V^0$ must be trivial, and so $V^{aux} \cong V^1$.
The Drinfel'd polynomials for $V^1$ are then obtained from Lemma 4.4 of \cite{tanBraidGroupActions2015}, giving 
\begin{equation}
	P^{V^{aux}}_i(u) = 
	\begin{cases}
		1 & a_{ip}=0 \\
	 	u-\frac{d_i \hbar}{2} & a_{ip}=-1; \\
		(u-\hbar)(u) & a_{ip}=-2 \\ 
		(u-\tfrac{3\hbar}2)(u-\tfrac{\hbar}2 )(u+\tfrac{\hbar}2 ) & a_{ip}=-3.
	\end{cases}
\end{equation}
In this case, $B^0_1(u) \in \Hom(V^{aux}\ot M, M)$ defines an appropriate creation operator. 

In general, however, we will find that we require the following property to proceed with the nested Bethe ansatz.
This property was noted by Reshetikhin in \cite{reshetikhinAlgebraicBetheAnsatz1988} for the case $N=1$.

\begin{conjecture} \label{c:projector}
	Write $\Pi^{J+1,\overline{J}}$ for the projector to $V^{aux}$ in $V^{J+1} \ot (V^{J})^*$. Then for some $c\in \C$ and $k \in \mathbb{Z}$,
	\[
		\left[(D^{J+1,J}(u))^{-1}\right]^{t_{J}} =  c\, \Pi^{J+1,\overline{J}} \, u^{k} + O(u^{k+1}) \quad \text{as} \quad u \to 0.
	\]
	%That is, $\Hom_{Y_{\hbar}(\gs)}(V^{J+1}\ot (V^J)^*, (V^{aux})_u) \neq 0$ for some $u$. 
\end{conjecture}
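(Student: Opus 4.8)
The plan is to analyze the small-$u$ behavior of $D^{J+1,J}(u)$ directly through its characterization as a $\gs$-intertwiner, combined with the singularity structure inherited from the $R$-matrix. By Proposition~\ref{p:nested-ybe}, the matrix $\widetilde{D}^{J+1,J}(u)$ intertwines $\Delta^{\text{opp}}$ and $\Delta$ for $Y_\hbar(\gs)$ acting on $V^{J+1}\ot V^J$, and hence in particular it is a $\gs$-module homomorphism $V^{J+1}\ot V^J \to V^{J+1}\ot V^J$. Applying $\covec$ (or rather the partial transpose $t_J$ in the second factor) converts this into a $\gs$-endomorphism of $V^{J+1}\ot (V^J)^*$. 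The decomposition of this space into $\gs$-isotypic components contains $V^{aux}$ as a summand — this is precisely part (iii) of Lemma~\ref{l:properties-of-auxsite} with the roles of $I$ and $I+J$ specialised, giving a nontrivial intertwiner between $V^{aux}$ and $V^{J+1}\ot(V^J)^*$ — so by Schur's lemma $\left[(D^{J+1,J}(u))^{-1}\right]^{t_J}$ acts as a scalar $f(u)$ on the $V^{aux}$-component and the claim reduces to showing that this scalar, as a rational function of $u$, vanishes (or has a pole) to some integer order as $u\to 0$ while all the \emph{other} isotypic components stay regular and nonzero there — i.e. that $u=0$ is the fusion point at which $R^{V^{J+1},\overline{V^J}}$-type data degenerates onto exactly the $V^{aux}$ channel.

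First I would set up the correspondence between $D^{J+1,J}(u)$ and an honest Yangian $R$-matrix: since $\widetilde D^{J+1,J}(u)$ (up to the normalisation discussed after Proposition~\ref{p:nested-ybe}, obtained from Baxter-polynomial ratios via Proposition~7.7 of \cite{gautamPolesFinitedimensionalRepresentations2023}) satisfies the intertwining relation and the Yang-Baxter equation, by the uniqueness statement for Yangian $R$-matrices it must coincide with $R^{V^{J+1},V^J}(u)$ for the $Y_\hbar(\gs)$-structures on $V^{J+1}$, $V^J$ exhibited in Section~\ref{ssec:nesting-Yangian} (again up to a scalar and a shift of $u$). Then $u\to 0$ corresponds to the point where the two shifted evaluation parameters of the corresponding $\gs$-fundamental representations differ by the critical value, and the standard fusion argument (Section~\ref{s:algebra}, around \eqref{fused-YBE}) shows $R$ becomes proportional to a single $\gs$-projector. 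The content of the conjecture is that, after transposing in the $(V^J)^*$ slot, this projector is exactly $\Pi^{J+1,\overline J}$ onto $V^{aux}$, with the stated leading power $u^k$.

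Concretely, the key steps in order are: (1) identify $\widetilde D^{J+1,J}(u)$, suitably normalised, with the $\gs$-Yangian $R$-matrix $R^{V^{J+1},V^J}(u-\sigma)$ for an appropriate shift $\sigma$, using Proposition~\ref{p:nested-ybe} and $R$-matrix uniqueness; (2) read off the $\gs$-weights and Drinfel'd polynomials of $V^{J+1}$ and $V^J$ from Table~\ref{t:v-decomp} / Example~\ref{e:g2-subreps}, and compute the Baxter/normalisation factor so that the singularities of $D^{J+1,J}(u)$ are located; (3) show that among the $\gs$-isotypic channels of $V^{J+1}\ot V^J$, exactly one — the one mapping under $\covec$ onto the $V^{aux}$-summand of $V^{J+1}\ot(V^J)^*$, via the explicit intertwiner $\psi$ from the proof of Lemma~\ref{l:properties-of-auxsite} — acquires the extremal singular behaviour at $u=0$; (4) invert and partially transpose, tracking how the scalar factors on each channel transform, to conclude $\left[(D^{J+1,J}(u))^{-1}\right]^{t_J} = c\,\Pi^{J+1,\overline J}u^k + O(u^{k+1})$.

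\textbf{Main obstacle.} The hard part will be step (3): showing that the channel selected by fusion at $u=0$ is \emph{precisely} the one dual to $V^{aux}$, rather than some other $\gs$-component of $V^{J+1}\ot(V^J)^*$, and identifying the integer $k$. This requires genuinely knowing which $\gs$-irreducible sits at the top of $V^{J+1}$ relative to $V^J$ — i.e. that $V^{J+1}$ appears in $V^{aux}\ot V^J$ with the right multiplicity and extremal weight $-\pi(\alpha_p)+(\text{h.w. of }V^J)$ — which is exactly the content of parts (i) and (iii) of Lemma~\ref{l:properties-of-auxsite}, but one must check that this $\gs$-level statement survives the passage to the $Y_\hbar(\gs)$-level fusion, i.e. that the fusion point of the $\gs$-Yangian $R$-matrix lands on the $V^{aux}$-channel and not on a competing channel with a coincidentally-placed zero. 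I expect this to work out precisely because the shift $v_l^{(p)}$ was \emph{reverse-engineered} in \eqref{nested-drin-polys} from the Bethe equations to make $V^{aux}$ a fundamental $Y_\hbar(\gs)$-representation, but turning that consistency check into a proof — uniformly in $\g$ and $J$ — is why the statement is posed as a conjecture rather than a lemma; a complete argument would likely still need the case analysis of Table~\ref{t:v-decomp}, which is why I would expect the general-$N$ case to resist a fully uniform proof.
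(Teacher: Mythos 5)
You should first note that the statement you are trying to prove is posed in the paper as Conjecture~\ref{c:projector} precisely because no proof is given there: the paper only verifies it in examples (e.g.\ the $(\mf{d}_6,\mf{a}_5)$ and $(\mf{e}_6,\mf{d}_5)$ cases, checked against the Hom-space criteria of \cite{chariYangiansRmatrices1990}) and remarks in the conclusion that a proof may follow along the lines of Dorey's rule \cite{chariYangiansIntegrableQuantum1996,doreyRootSystemsPurely1991}. Your proposal is a reasonable strategy in exactly that spirit — identify $\widetilde{D}^{J+1,J}(u)$ with a $Y_\hbar(\gs)$ $R$-matrix via Proposition~\ref{p:nested-ybe} and uniqueness, locate its singularities through the Drinfel'd/Baxter data of Table~\ref{t:v-decomp}, and argue that $u=0$ is a fusion point selecting the $V^{aux}$ channel identified in Lemma~\ref{l:properties-of-auxsite} — but as you yourself concede, step (3) is not carried out, and that step is the entire content of the conjecture. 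So what you have written is a proof plan with the decisive claim (that the degeneration at $u\to 0$ lands on the $V^{aux}$ channel, uniformly in $\g$, $V$ and $J$, with a definite leading order $k$) left unestablished; it cannot be counted as a proof, and it does not go beyond what the paper already sketches.

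Two technical points deserve care if you pursue this route. First, the identification of $\widetilde{D}^{J+1,J}(u)$ with $R^{V^{J+1},V^J}(u-\sigma)$ up to a scalar rests on the standing assumption that the charge spaces $V^I$ are irreducible $Y_\hbar(\gs)$-modules, and on a normalisation that the paper only pins down for $D^{I0}(u)$; you would need the analogous normalisation (and the shift $\sigma$) for general $(J+1,J)$ before you can ``locate the singularities'' as in your step (2). Second, your appeal to Schur's lemma to get a scalar $f(u)$ on the $V^{aux}$-component silently assumes that $V^{aux}$ occurs with multiplicity one in $V^{J+1}\ot (V^J)^*$ and, more importantly, the conjecture asserts that \emph{every other} $\gs$-isotypic block of $\left[(D^{J+1,J}(u))^{-1}\right]^{t_J}$ is of strictly higher order in $u$; if the decomposition is not multiplicity-free the endomorphism is block-valued rather than scalar-valued on isotypic components, and ruling out a competing channel with a coincidentally placed zero is exactly the Dorey's-rule-type input that is missing. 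Closing these gaps — ideally uniformly rather than by the case analysis of Table~\ref{t:v-decomp} — is what would turn the conjecture into a theorem.
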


\begin{remark}
	For many $R$-matrices, the leading coefficient of $R(u)$ as $u \to 0$ is proportional to the permutation matrix $P$; these are often referred to as regular $R$-matrices.
	It is clear, however, that $P$ does not admit a block Gauss decomposition. 
	Indeed, while the restriction $P^{II}_{II}$ simply gives the permutation on $V^I \ot V^I$, the restriction $P^{IJ}_{IJ}$ for $I\neq J$ is necessarily equal to zero.
	This implies that $D^{IJ}(u)$ has a singularity at $u=0$ if $R(u)$ is a regular $R$-matrix. 
\end{remark}

Let us provide a few examples where the conjecture holds.
In the case where $\gs$ is classical, and all representations involved are fundamental or trivial, the result can be checked by a straightforward application of the results of \cite{chariYangiansIntegrableQuantum1996}.

\begin{example}
	Consider the nesting $(\g, \gs) = (\mf{d}_6,\mf{a}_5)$ with $V=M^{\mf{d}_6}(\omega_1)$.
	From the Baxter polynomials, we confirm that 
	\[
		M^{\mf{d}_6}(\omega_1)_0 \cong M^{\mf{a}_5}(\omega_1)_0 \oplus M^{\mf{a}_5}(\omega_5)_{2\hbar}.
	\]
	As a representation of the Yangian, we have $(M^{\mf{a}_5}(\omega_1)_0)^* \cong M^{\mf{a}_5}(\omega_5)_{3\hbar}$.
	The auxiliary site is $V^{aux} \cong M^{\mf{a}_5}(\omega_4)_u$, for some $u$.
	From Theorem~6.1 of \cite{chariYangiansRmatrices1990} in this case, and states that
	\[
		\Hom_{Y_{\hbar}(\gs)}\left( M^{\mf{a}_5}(\omega_5)_{2\hbar}  \otimes M^{\mf{a}_5}(\omega_5)_{3\hbar}, M^{\mf{a}_5}(\omega_4)_{u}\right) \neq 0
	\]
	if and only if $u=\tfrac{5\hbar}{2}$.
\end{example}

\begin{example}
	Consider the nesting $(\g, \gs) = (\mf{e}_6,\mf{d}_5)$ with $V=M^{\mf{e}_6}(\omega_1)$.
	From Table~\ref{t:v-decomp}, we confirm that 
	\[
		M^{\mf{e}_6}(\omega_1)_0 \cong M^{\mf{d}_5}(\omega_1)_0 \oplus M^{\mf{d}_5}(\omega_5)_{3\hbar/2} \oplus 1.
	\]
	In the Yangian, we have $(M^{\mf{d}_5}(\omega_1)_0)^* \cong M^{\mf{d}_5}(\omega_1)_{4\hbar}$.
	The auxiliary site is $V^{aux} \cong M^{\mf{d}_5}(\omega_5)_u$, for some $u$.
	Using Theorem~7.1 of \cite{chariYangiansRmatrices1990},
	\[
		\Hom_{Y_{\hbar}(\gs)}\left(M^{\mf{d}_5}(\omega_5)_{3\hbar/2} \otimes M^{\mf{d}_5}(\omega_1)_{4\hbar}, M^{\mf{d}_5}(\omega_5)_{u}\right) \neq 0
	\]
	if and only if $u=\tfrac{5\hbar}{2}$.
\end{example}

%The calculation for the case $(D_r,A_{r-1})$ with $V$ as the vector representation is given in Section~\ref{}.

From Proposition~\ref{p:nested-ybe} and Conjecture~\ref{c:projector}, we have the fusion relation in $V^{I}\ot V^{J+1} \ot (V^J)^* $,
\[
	D^{I,J+1}(u)[D^{IJ}(u)^{-1}]^{t_J}\Pi^{J+1,\overline{J}} = \Pi^{J+1,\overline{J}} [D^{IJ}(u)^{-1}]^{t_J} D^{I,J+1}(u),
\]
leading to the fused $R$-matrices
\begin{equation} \label{fused-R}
	\Pi^{J+1,\overline{J}} D^{I,J+1}(u)[D^{IJ}(u)^{-1}]^{t_J}\Pi^{J+1,\overline{J}}.
\end{equation}
%The shift $w^J$ and rational function $f^{IJ}$ account for the potential differences in normalisation. 

Fusion as conjectured above allows us to make sense of the idea that each one-particle creation operator should be equivalent with the appropriate shifts, which was observed in the rank one case. 
Therefore, let us pick a particular $J$ and write
\[
	R^{I,aux}(u) := \Pi^{J+1,\overline{J}} D^{I,J+1}(u)[D^{IJ}(u)^{-1}]^{t_J}\Pi^{J+1,\overline{J}}.
\]
By uniqueness, the fused $R$-matrix for $K\neq J$ is related to this one by a parameter shift and rational function;
\[
	\Pi^{K+1,\overline{K}} D^{I,K+1}(u)[D^{IK}(u)^{-1}]^{t_K}\Pi^{K+1,\overline{K}} = f^{K}(u)R^{I,aux}(u+w^K).
\]

\subsection{The wanted terms and the nested Bethe ansatz} \label{ssec:wanted-terms}

From Lemma~\ref{L:AB-relation}, applying the $\covec$ operator gives the equation
\begin{multline*}
	(A^{I}_{I}(u))_a (\beta_{J+1,\overline{J}}(v))_{b \bt} = 
	(\beta_{J+1,\overline{J}}(v))_{b \bt} ((D^{I , J}_{a \bt}(u-v))^{-1})^{t_{\bt}} (A^{I}_{I}(u))_a  D^{I ,J+1}_{ab}(u-v)
	\\
	+ (\beta_{I+1, \overline{I}}(u))_{b \bt} (P_{I , J}^{J , I})^{t_{\bt}}_{a \bt} (A^{J}_{J}(v))_a P^{J, I+1}_{I+1, J} L_{I ,J+1}^{I+1, J} (u-v)
	\\
	-(\beta_{I,\overline{I-1}}(u))_{b \bt} (L_{I-1,J+1}^{I, J}(u-v))^{t_{\bt}} (P_{J+1, I-1}^{I-1, J+1})^{t_{\bt}}_{a \bt} (A^{J+1}_{J+1}(v))_a  P^{J+1, I}_{I ,J+1} + \mcC.
\end{multline*}
We want to transform this expression to give a commutation relation for the transfer matrix $t(u)$.
Let us focus on the ``wanted terms'' only, in which case we have 
\[
	(A^{I}_{I}(u))_a (\beta_{J+1,\overline{J}}(v))_{b \bt} = 
	(\beta_{J+1,\overline{J}}(v))_{b \bt} ((D^{I,J}_{a \bt}(u-v))^{-1})^{t_{\bt}} (A^{I}_{I}(u))_a  D^{I,J+1}_{ab}(u-v)
	+  \text{u.t.},
\]
where ``u.t.'' stands for unwanted terms. 
Let $Z^I$ denote the restriction of $Z$ to the space $V^I$.
Multiply on the right by $Z^I_a Z^{J+1}_b$, to give
\begin{align*}
	&(A^{I}_{I}(u))_a Z^{I}_a (\beta_{J+1,\overline{J}}(v))_{b \bt} Z^{J+1}_b 
	\\
	&\qquad = 
	(\beta_{J+1,\overline{J}}(v))_{b \bt} ((D^{I,J}_{a \bt}(u-v))^{-1})^{t_{\bt}} (A^{I}_{I}(u))_a  D^{I,J+1}_{ab}(u-v) Z^I_a Z^{J+1}_b
	+  \text{u.t.}
	\\
	&\qquad =	(\beta_{J+1,\overline{J}}(v))_{b \bt} Z^{J+1}_b ((D^{I,J}_{a \bt}(u-v))^{-1})^{t_{\bt}} (A^{I}_{I}(u))_a Z^{I}_a  D^{I,J+1}_{ab}(u-v)
	+  \text{u.t.},
\end{align*}
where we have used the fact that $D^{I,J+1}_{ab}(u-v)$ is a $\gs$-intertwiner.
Taking the trace and applying cyclicity gives 
\begin{multline*}
	\tr_a \left(Z^I_a (A^{I}_{I}(u))_a \right) (\beta_{J+1,\overline{J}}(v))_{b \bt} Z^{J+1}_b \\= (\beta_{J+1,\overline{J}}(v))_{b \bt} Z^{J+1}_b \tr_a \left( Z^I_a D^{I,J+1}_{ab}(u-v) ((D^{I,J}_{a \bt}(u-v))^{-1})^{t_{\bt}} (A^{I}_{I}(u))_a \right)
	+  \text{u.t.}
\end{multline*}
Now, acting from the right with the projector, and applying relation \eqref{fused-R}, we have 
\begin{align*}
	&\tr_a \left(Z^I_a (A^{I}_{I}(u))_a \right) \beta_{B}(v) 
	\\
	&\qquad = (\beta_{J+1,\overline{J}}(v))_{b \bt} Z^{J+1}_b \tr_a \left( Z^I_a D^{I,J+1}_{ab}(u-v) ((D^{I,J}_{a \bt}(u-v))^{-1})^{t_{\bt}} (A^{I}_{I}(u))_a \right) \Pi_{b \bt}+  \text{u.t.}
	\\
	&\qquad = \beta_{B}(v)  \tr_a \left(Z^I_a R^{I,aux}_{a B}(u-v) (A^{I}_{I}(u))_a \right) +  \text{u.t.}
\end{align*}
Here $\beta_B(v) = \beta_{b \bt}(v) \Pi^{J+1, \overline{J}}_{b \bt}$. 
Taking the sum over each $I$ gives the relation for the transfer matrix
\begin{equation}
	t(u) \beta_{B}(v) = \beta_{B}(v)  \sum_{I} \tr_a \left(Z^I_a R^{I,aux}_{a B}(u-v) (A^{I}_{I}(u))_a  \right) +  \text{u.t.}
\end{equation}

This relation is the foundational step of the nested Bethe ansatz. 
It connects the original transfer matrix diagonalisation problem, defined in terms of the data $(\g, V, M, Z)$, to a nested problem for the transfer matrices
$\tr_a \left(Z_a R^{I,aux}_{a B}(u-v) (A^{I}_{I}(u))_a  \right)$, which are defined up to a scalar function by the data $(\gs, V^I,V^{aux} \ot M^0, Z^I)$.
As a consequence of Corollary~\ref{cor:commuting-transfer}, these transfer matrices all mutually commute, and from Proposition~\ref{P:DAA}, we may regard the matrices $R^{I,aux}_{a B}(u-v) (A^{I}_{I}(u))_a$ as monodromy matrices for each $I$, from which the creation operators for the nested problem are derived, and so on. 

Now suppose $\Psi\in V^{aux} \ot M^0$ is a joint eigenvector of $\tr_a \left(Z_a R^{I,aux}_{a B}(u-v) (A^{I}_{I}(u))_a  \right)$ for each $I$, with eigenvalues $\Lambda^I(u;v)$. Suppose further that $v$ is chosen such that the unwanted terms annihilate $\Psi$.  
Then
\[
	t(u) \beta(v)\cdot \Psi := \sum_I \Lambda^I(u;v) \beta(v)\cdot \Psi,
\]
solving the original problem.

\subsection{The nested Bethe ansatz for classical Lie algebras}

Consider now the case where $N=1$; that is, the space $V$ splits into just two charge subspaces.
This case corresponds to those studied in the classic papers 
\cite{kulishDiagonalisationGLInvariant1983, reshetikhinAlgebraicBetheAnsatz1988, reshetikhinIntegrableModelsQuantum1985,devegaExactBetheAnsatz1987}.
Indeed, for each of the classical Lie algebras $\mathfrak{a}_r$, $\mathfrak{b}_r$, $\mathfrak{c}_r$, and $\mathfrak{d}_r$, there is a representation $V$ for which this condition holds. 
These are listed below, up to duality.

\begin{table}[ht] \label{t:N1case}
	\def\arraystretch{1.3}
	\centering

\begin{tabular}{CCCCC}
	\g \downarrow \gs &  V & V^0 & V^1 & V^{aux}\\
	\hline
	\mathfrak{a}_r \downarrow \mf{a}_{r-1} & M^{\g}(\omega_1) & 1 & M^{\gs}(\omega_1)_{\hbar/2} & M^{\gs}(\omega_1)_{\hbar/2} \\[2pt]
	\mathfrak{b}_r \downarrow \mf{b}_{r-1} & M^{\g}(\omega_r) & M^{\gs}(\omega_{r-1}) & M^{\gs}(\omega_{r-1})_{2\hbar} & M^{\gs}(\omega_1)_{(r-3)\hbar}\\[2pt]
	\mathfrak{c}_r \downarrow \mf{a}_{r-1} & M^{\g}(\omega_1) & M^{\gs}(\omega_{1}) & M^{\gs}(\omega_{r-1})_{(r+2)\hbar/2} & M^{\gs}(\omega_{r-1})_{(r+3)\hbar/2}\\[2pt]
	\mathfrak{d}_r \downarrow \mf{d}_{r-1} & M^{\g}(\omega_r) & M^{\gs}(\omega_{r-1}) & M^{\gs}(\omega_{r-2})_{\hbar} & M(\omega_{1})_{(r-1)\hbar/2}\\[2pt]
	%\mathfrak{d}_r \downarrow \mf{d}_{r-1} & M^{\g}(\omega_{r-1}) & M^{\gs}(\omega_{r-2}) & M^{\gs}(\omega_{r-1})_{\hbar} \\[2pt]
	\mathfrak{d}_r \downarrow \mf{a}_{r-1} & M^{\g}(\omega_1) & M^{\gs}(\omega_{1}) & M^{\gs}(\omega_{r-1})_{(r-2)\hbar/2} & M(\omega_{r-2})_{(r-1)\hbar/2}
\end{tabular}

\end{table}

In this case, there are a number of simplifications in the above theory. 
\begin{itemize}
	\item When $\Jmax=1$, all but one of the $D$ matrices are simply submatrices of $R(u)$, with the final one being given by 
	\begin{equation} \label{D12}
		D^{01}(u) = [P R^{10}_{10}(-u) P]^{-1}.
	\end{equation}
	\item There is only one choice for $B^{J}_{J+1}(v) = B^{0}_{1}(v) =: B(v)$ and, moreover, there are no $B$-operators for multiple excitations. This means that the creation operator for multiple excitations is given by a unique product of $B$-operators, as in the standard algebraic Bethe ansatz.
	\item Since the $B$-operators map from the highest charge subspace to the lowest charge subspace, we immediately have, from conservation of charge, 
	\begin{equation} \label{RBB}
		D^{00}_{b_1b_2}(v_1-v_2) B_{b_1}(v_1) B_{b_2}(v_2) = B_{b_2}(v_2) B_{b_1}(v_1) D^{11}_{b_1b_2}(v_1-v_2).
	\end{equation}
	\item The AB relation becomes true in general, rather than just on the vacuum sector. This can be seen by the fact that there is no product of two matrices in $\mcC$ with the appropriate charge. Further, each AB relation contains only one unwanted term, and are given explicitly by 
	\begin{multline*}
		(A^{0}_{0}(u))_a \beta_{b \bt}(v) = 
		\beta_{b \bt}(v) ((D^{00}_{a \bt}(u-v))^{-1})^{t_{\bt}} (A^{0}_{0}(u))_a  D^{01}_{ab}(u-v)
		\\
		+ \beta_{b \bt}(u) (P_{00}^{00})^{t_{\bt}}_{a \bt} (A^{0}_{0}(v))_a (P^{01}_{10})_{ab} (L_{01}^{10} (u-v))_{ab},
	\end{multline*}
	and 
	\begin{multline*}
		(A^{1}_{1}(u))_a \beta_{b \bt}(v) = 
		\beta_{b \bt}(v) ((D^{10}_{a \bt}(u-v))^{-1})^{t_{\bt}} (A^{1}_{1}(u))_a  D^{11}_{ab}(u-v)
		\\
		-\beta_{b \bt}(u) (L_{01}^{10}(u-v))^{t_{\bt}}_{a\bt} (P_{10}^{01})^{t_{\bt}}_{a \bt} (A^{1}_{1}(v))_a  (P^{11}_{11})_{ab}.
	\end{multline*}
\end{itemize}

These simplifications leave us with only one possibility for the structure of the Bethe vector. 
However, due to the fusion, the exact way to combine the $B$-operators is slightly nontrivial.
We show below a setup which gives the desired properties of the multiparticle creation operator.

% \begin{definition}
% 	The creation operator for $m$ excitations is 
% 	\[
% 		\beta_{\bm B}(v_1, \dots, v_m) = 
% 		\beta_{\bt_1 b_1}(v_1) \cdots \beta_{\bt_m b_m}(v_m) Z_{b_1} \cdots Z_{b_m}
% 		R_{b_1 \bt_2}  R_{b_1 \bt_3}  R_{b_2 \bt_3} \Pi_{\bt_1 b_1} \Pi_{\bt_2 b_2} \cdots \Pi_{\bt_m b_m}.
% 	\]
% \end{definition}

% This form is chosen as it allows for the exchange of adjacent parameters using the $R$-matrix. 
% Note that if one of $V^0$ or $V^1$ is trivial, then the matrices in the above product are proportional to the identity matrix. 

\begin{proposition} The product of operators 
	\begin{multline}
		\beta_{B_1 \dots B_m}(\vec{v}) = 
		\beta_{\bt_1 b_1}(v_1) \cdots \beta_{\bt_m b_m}(v_m)
		Z_{b_1} \cdots Z_{b_m} \times 
		\\
		\times \prod_{k=1}^m  \left( (D^{11}_{b_k b_{k+1}})^{-1} (D^{01}_{\bt_k b_{k+1}})^{t_{\bt_k}} \cdots (D^{11}_{b_{m-1} b_{m}})^{-1} (D^{01}_{\bt_{m-1} b_{m}})^{t_{\bt_{m-1}}}  \right)
		\prod_{k=1}^m \Pi_{b_k \bt_k} 
	\end{multline}
	satisfies
	\begin{gather}
			\beta_{\bm B}(\vec{v}) = \beta_{\bm B}(v_1, \dots v_{k+1}, v_k, \dots, v_m) \check{R}_{B_k B_{k+1}}(v_k-v_{k+1})
			\\
			\tr_a Z^I_a (A^{I}_{I}(u))_a \beta_{\bm B}(\vec{v}) = \beta_{\bm B}(\vec{v}) \tr_a \left( Z^I_a R^{I,aux}_{a B_1}(u-v_1) \cdots R^{I,aux}_{a B_m}(u-v_m) (A^{I}_{I}(u))_a \right) + \text{u.t.},
	\end{gather}
	where $\check{R}(u) = P R(u)$. 
\end{proposition}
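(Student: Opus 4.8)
I would establish the two displayed identities separately, each by induction on $m$, with the base case $m=1$ supplied by the single-excitation computation of Section~\ref{ssec:wanted-terms}. The two structural inputs are the $AB$-relation and the fused $BB$-relation. In the present $N=1$ situation the $AB$-relation of Lemma~\ref{L:AB-relation} holds exactly (no $\mcC$-term, as noted in the text) and carries a single unwanted term; in $\covec$-form its wanted part is, for $I\in\{0,1\}$, $(A^I_I(u))_a\,(\beta_{\bt b}(v))=(\beta_{\bt b}(v))\,((D^{I0}_{a\bt}(u-v))^{-1})^{t_{\bt}}\,(A^I_I(u))_a\,D^{I1}_{ab}(u-v)+\mathrm{u.t.}$, while \eqref{RBB} gives $D^{00}_{b_1 b_2}(v_1-v_2)B_{b_1}(v_1)B_{b_2}(v_2)=B_{b_2}(v_2)B_{b_1}(v_1)D^{11}_{b_1 b_2}(v_1-v_2)$ from charge conservation. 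The $D$-matrix dressing built into $\beta_{\bm B}(\vec v)$ is exactly what is needed for these two relations to propagate through a product of creation operators.

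\textbf{The transfer-matrix relation.} Starting from $\tr_a\,Z^I_a(A^I_I(u))_a\,\beta_{\bm B}(\vec v)$ I would push $(A^I_I(u))_a$ rightward through the string $\beta_{\bt_1 b_1}(v_1)\cdots\beta_{\bt_m b_m}(v_m)$, applying the wanted part of the $\covec$-ed $AB$-relation at each crossing. The $D$-matrices it produces have no spin-chain content and act only on the auxiliary slots $a,b_k,\bt_k$, so they slide freely past the remaining $\beta$'s; after $m$ crossings all the $\beta$'s sit on the left, followed (inside $\tr_a$) by a product of the $((D^{I0}_{a\bt_k}(u-v_k))^{-1})^{t_{\bt_k}}$, then $(A^I_I(u))_a$, then a product of the $D^{I1}_{ab_k}(u-v_k)$, then the original $Z_{b_k}$'s, the dressing, and the projectors. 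Using that each $D^{IJ}$ is a $\gs$-intertwiner and hence commutes with the relevant $Z^I\ot Z^J$ (Propositions~\ref{P:DAA} and~\ref{p:nested-ybe}), cyclicity of $\tr_a$, and the fusion relation \eqref{fused-R}, the block $Z^I_a D^{I1}_{ab_k}(u-v_k)\cdots((D^{I0}_{a\bt_k}(u-v_k))^{-1})^{t_{\bt_k}}\Pi_{b_k\bt_k}$ collapses for each $k$ to $Z^I_a R^{I,aux}_{aB_k}(u-v_k)$, yielding $\beta_{\bm B}(\vec v)\,\tr_a(Z^I_a R^{I,aux}_{aB_1}(u-v_1)\cdots R^{I,aux}_{aB_m}(u-v_m)(A^I_I(u))_a)+\mathrm{u.t.}$ as claimed. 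Phrasing this as induction on $m$ — peel off $\beta_{\bt_1 b_1}(v_1)$, produce $R^{I,aux}_{aB_1}$ by the $m=1$ case, then apply the inductive hypothesis to the remaining operator — keeps the accounting transparent.

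\textbf{The exchange relation.} The base case $m=2$ asserts that $\beta_{\bt_1 b_1}(v_1)\beta_{\bt_2 b_2}(v_2)Z_{b_1}Z_{b_2}(D^{11}_{b_1 b_2})^{-1}(D^{01}_{\bt_1 b_2})^{t_{\bt_1}}\Pi_{b_1\bt_1}\Pi_{b_2\bt_2}$ is unchanged under $1\leftrightarrow 2$ up to right multiplication by $\check R_{B_1 B_2}(v_1-v_2)$. I would apply $\covec$ in both auxiliary slots to \eqref{RBB}, using $\covec(MBN)=\covec(B)(N\ot M^t)$, turning it into a relation between $\beta_{\bt_1 b_1}(v_1)\beta_{\bt_2 b_2}(v_2)$ and $\beta_{\bt_2 b_2}(v_2)\beta_{\bt_1 b_1}(v_1)$ decorated with $D^{00}$- and $D^{11}$-factors; substituting, commuting the various $D$-factors past one another and past the $Z$'s with the Yang-Baxter equations of Proposition~\ref{p:nested-ybe} and the identities of Lemma~\ref{d-matrices} (in particular \eqref{D12}), and sandwiching with $\Pi_{b_1\bt_1}\Pi_{b_2\bt_2}$, the residual operator on $V^{aux}\ot V^{aux}$ is a $\gs$-intertwiner that satisfies the Yang-Baxter equation; by uniqueness of the $R$-matrix on the irreducible $Y_{\hbar}(\gs)$-module $V^{aux}$ (Lemma~\ref{l:properties-of-auxsite}(ii)) it coincides with $R^{aux,aux}(v_1-v_2)$ up to a scalar fixed by evaluation on highest-weight vectors, and $\check R=P R^{aux,aux}$. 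For general $m$ the swap $v_k\leftrightarrow v_{k+1}$ only involves $\beta_{\bt_k b_k}(v_k)$, $\beta_{\bt_{k+1}b_{k+1}}(v_{k+1})$ and the dressing factors carrying labels in $\{b_k,\bt_k,b_{k+1},\bt_{k+1}\}$, and the Yang-Baxter equations for the $D$-matrices let $\check R_{B_k B_{k+1}}$ be commuted through the rest of the dressing, so the general statement follows from the $m=2$ case by induction.

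\textbf{Main obstacle.} The delicate point is the combinatorics of the $D$-matrix dressing in the exchange relation: one must verify that, after an adjacent swap, the extra $D$-matrices generated by \eqref{RBB} reorganise — together with the prescribed dressing staircase and the projectors $\Pi_{b_j\bt_j}$ — into an operator acting purely on the two auxiliary slots $B_k,B_{k+1}$, with no leftover spin-chain or cross-slot dependence, so that it may legitimately be identified with $\check R_{B_k B_{k+1}}(v_k-v_{k+1})$; the Yang-Baxter equation for the $D$-matrices (Proposition~\ref{p:nested-ybe}) and the fusion relation \eqref{fused-R} are precisely what force this collapse, and the real care lies in tracking the transposes $t_{\bt}$ and the order of the pairwise-noncommuting $D$-factors. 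By comparison the transfer-matrix relation is routine once the accumulated $D$-factors have been matched against the pattern of \eqref{fused-R}, the only subtlety being that this matching too depends on the exact ordering of the factors rather than on any new idea.
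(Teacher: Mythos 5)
Your proposal has the same skeleton as the paper's proof: both claims are derived from the same two inputs, the fused $BB$-relation \eqref{RBB} for the exchange identity and the wanted part of the $\covec$-ed $AB$-relation of Lemma~\ref{L:AB-relation} plus the fusion pattern \eqref{fused-R}, cyclicity of $\tr_a$ and the $Z$-intertwining for the transfer-matrix identity; the paper proves $m=2$ explicitly and waves at the general case, so your explicit induction on $m$ is if anything slightly more careful. The one place you genuinely deviate is the identification of the residual two-slot operator after the swap: the paper obtains $R^{aux,aux}_{B_2B_1}$ \emph{constructively}, by a chain of Yang--Baxter and fusion moves that exhibits it as a double fusion of $D$-matrices (this is what fixes its normalisation and shows there is no leftover cross-slot dependence), whereas you invoke uniqueness of the intertwiner on $V^{aux}\ot V^{aux}$. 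That shortcut is legitimate in spirit, but as stated it is too weak: a $\gs$-intertwiner satisfying the Yang--Baxter equation is far from unique (the identity is one), so you must upgrade to the full $Y_{\hbar}(\gs)$-intertwining property with the correct spectral shifts --- available because the blocks are $D$-matrices (Proposition~\ref{p:nested-ybe}) and projectors arising by fusion --- and you must still evaluate the operator on highest-weight vectors to pin the scalar, which is essentially the explicit computation the paper performs anyway. So: same approach overall; your abstract identification buys brevity at the cost of having to separately verify the intertwining property and the normalisation, both of which the paper's explicit Yang--Baxter manipulation delivers for free.
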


\begin{proof}
	We will prove the case for $m=2$ only; the rest will follow similarly. 
	From \eqref{RBB}, we obtain 
	\[
		\beta_{\bt_1 b_1}(v_1) \beta_{\bt_2 b_2}(v_2)
		=
		\beta_{\bt_2 b_2}(v_2) \beta_{\bt_1 b_1}(v_1)
		 D^{00}_{\bt_1 \bt_2}(v_1-v_2)^{-1}
		 D^{11}_{b_1 b_2}(v_1-v_2)
	\]
	Therefore, multiplying on the left and right by the relevant $R$-matrices, 
	\begin{equation} \label{bbstart}
		\beta_{B_1B_2}(v_1,v_2)=\beta_{\bt_2 b_2}(v_2) \beta_{\bt_1 b_1}(v_1) \left(D^{00}_{\bt_1 \bt_2}\right)^{-1}
		D^{11}_{b_1 b_2} [D^{11}_{b_1 b_2}]^{-1} [D^{01}_{\bt_1 b_2}]^t  \Pi_{b_1 \bt_1} \Pi_{b_2 \bt_2}
	\end{equation}
	We have omitted the spectral parameter dependence. 
	It is a matter of rearranging this product of $R$-matrices using the Yang-Baxter equation. 
	The following sequence of Yang-Baxter moves gives the appropriate result; we have underlined uses of the Yang-Baxter equation, or simple commutation of matrices acting on different spaces:
	\begin{align*}
		\bm{R} :&= [D^{00}_{\bt_1 \bt_2}]^{-1} D^{11}_{b_1 b_2} [D^{11}_{b_1 b_2}]^{-1} [D^{01}_{\bt_1 b_2}]^t  \Pi_{b_1 \bt_1} \Pi_{b_2 \bt_2}
		\\
		&= [D^{00}_{\bt_1 \bt_2}]^{-1} [D^{01}_{\bt_1 b_2}]^t  \Pi_{b_1 \bt_1} \Pi_{b_2 \bt_2}
		\\
		&= [D^{00}_{\bt_1 \bt_2}]^{-1} [D^{01}_{\bt_1 b_2}]^t  \underline{\Pi_{b_2 \bt_2} \Pi_{b_1 \bt_1}} 
		\\
		&= \left( D^{01}_{\bt_1 b_2} [[D^{00}_{\bt_1 \bt_2}]^{-1}]^t \Pi_{b_2 \bt_2} \right)^{t_{\bt_1}}   \Pi_{b_1 \bt_1}
		\\
		&= [R^{1,aux}_{\bt_1 B_2}]^{t_{\bt_1}} \Pi_{b_1 \bt_1} 
		\\
		&=  [D^{11}_{b_2 b_1}]^{-1} [D^{01}_{\bt_2 b_1}]^t  [[D^{01}_{\bt_2 b_1}]^t]^{-1} D^{11}_{b_2 b_1} \Pi_{b_2 \bt_2}  [R^{1,aux}_{\bt_1 B_2}]^{t_{\bt_1}} \Pi_{b_1 \bt_1} 
		\\
		&=  [D^{11}_{b_2 b_1}]^{-1} [D^{01}_{\bt_2 b_1}]^t  \underline{[[D^{01}_{\bt_2 b_1}]^t]^{-1} D^{11}_{b_2 b_1} \Pi_{b_2 \bt_2} } [R^{1,aux}_{\bt_1 B_2}]^{t_{\bt_1}} \Pi_{b_1 \bt_1} 
		\\
		&=  [D^{11}_{b_2 b_1}]^{-1} [D^{01}_{\bt_2 b_1}]^t  \left(R^{2,aux}_{b_1 B_2}(-u) \right)^{-1} [R^{1,aux}_{\bt_1 B_2}]^{t_{\bt_1}} \Pi_{b_1 \bt_1} 
		\\
		&=  [D^{11}_{b_2 b_1}]^{-1} [D^{01}_{\bt_2 b_1}]^t \Pi_{b_1 \bt_1} \Pi_{b_2 \bt_2}  R^{aux, aux}_{B_2 B_1}
	\end{align*}
	Hence, returning to the \eqref{bbstart},
	\[
		\beta_{B_1B_2}(v_1,v_2)=\beta_{\bt_2 b_2}(v_2) \beta_{\bt_1 b_1}(v_1)  [D^{22}_{b_2 b_1}]^{-1} [D^{12}_{\bt_2 b_1}]^t \Pi_{b_1 \bt_1} \Pi_{b_2 \bt_2} R^{aux, aux}_{B_2 B_1}(v_2-v_1).
	\]
	To complete the proof, we insert permutation operators $P_{b_1 b_2} P_{\bt_1  \bt_2} = P_{B_1 B_2}$.
	This gives
	\begin{align*}
		\beta_{B_1B_2}(v_1,v_2)&=
		\beta_{\bt_2 b_2}(v_2) \beta_{\bt_1 b_1}(v_1)  [D^{22}_{b_2 b_1}(v_2-v_1)]^{-1} [D^{12}_{\bt_2 b_1}(v_2-v_1)]^t \times
		\\
			&\hspace{5cm} \times \Pi_{b_1 \bt_1} \Pi_{b_2 \bt_2} P_{b_1 b_2} P_{\bt_1  \bt_2} P_{B_1 B_2} R^{aux, aux}_{B_2 B_1}(v_2-v_1)
		\\
		&=\beta_{\bt_2 b_2}(v_2) \beta_{\bt_1 b_1}(v_1) P_{b_1 b_2} P_{\bt_1  \bt_2} [D^{22}_{b_1 b_2}(v_2-v_1)]^{-1} [D^{12}_{\bt_1 b_2}(v_2-v_1)]^t \times 
		\\
			&\hspace{7.5cm} \times \Pi_{b_1 \bt_1} \Pi_{b_2 \bt_2}  \check{R}^{aux, aux}_{B_2 B_1}(v_2-v_1)
		\\
		&=\beta_{B_1B_2}(v_2,v_1) \check{R}^{aux, aux}_{B_2 B_1}(v_2-v_1),
	\end{align*}
	where we have used $\beta_{\bt_2 b_2}(v_2) \beta_{\bt_1 b_1}(v_1) P_{b_1 b_2} = \beta_{\bt_1 b_1}(v_2) \beta_{\bt_2 b_2}(v_1) P_{\bt_1 \bt_2}$.

	For the AB relation, we obtain
	\begin{align*}
		&\tr_a (A^{I}_{I}(u))_a	 \beta_{B_1B_2}(v_1,v_2) 
		\\
		\qquad &=\beta_{b_1 \bt_1}(v_1) \beta_{b_2 \bt_2}(v_2) \tr_a  D^{I1}_{a b_2} D^{I1}_{a b_1} (D^{I0}_{a \bt_1})^{-t} (D^{I0}_{a \bt_2})^{-t} (A^{I}_{I}(u))_a
		[D^{11}_{b_1 b_2}]^{-1} [D^{01}_{\bt_1 b_2}]^t 
		\\
		\qquad &=\beta_{b_1 \bt_1}(v_1) \beta_{b_2 \bt_2}(v_2) \tr_a \underline{[D^{11}_{b_1 b_2}]^{-1} D^{I1}_{a b_1} D^{I1}_{a b_2}}   (D^{I0}_{a \bt_1})^{-t} (D^{I0}_{a \bt_2})^{-t} (A^{I}_{I}(u))_a [D^{01}_{\bt_1 b_2}]^t
		\\
		\qquad &=\beta_{b_1 \bt_1}(v_1) \beta_{b_2 \bt_2}(v_2)  [D^{11}_{b_1 b_2}]^{-1} \tr_a D^{I1}_{a b_1} \underline{[D^{01}_{\bt_1 b_2}]^t (D^{I0}_{a \bt_1})^{-t} D^{I1}_{a b_2} } (D^{I0}_{a \bt_2})^{-t} (A^{I}_{I}(u))_a 
		\\
		\qquad &=\beta_{b_1 \bt_1}(v_1) \beta_{b_2 \bt_2}(v_2)  [D^{11}_{b_1 b_2}]^{-1} [D^{01}_{\bt_1 b_2}]^t \tr_a D^{I1}_{a b_1} (D^{I0}_{a \bt_1})^{-t} D^{I1}_{a b_2} (D^{I0}_{a \bt_2})^{-t} (A^{I}_{I}(u))_a 
		\\
		\qquad &= \beta_{B_1B_2}(v_1,v_2) \tr_a \left( R^{I,aux}_{aB_1}(u-v_1) R^{I,aux}_{aB_2}(u-v_2) (A^{I}_{I}(u))_a \right)
	\end{align*}.
\end{proof}

% \begin{corollary}
% 	The on-shell Bethe vector is a highest weight vector.
% \end{corollary}

% \begin{proof}
% 	The family of transfer matrices includes $t(u) = \tr_a(\rho_{V_a}(e_{\alpha}) T_a(u))$, for $\alpha$ a positive root. 
% 	Therefore, at level 1, we have 
% 	\[
% 		t^{(1)} = - \tr_a(\rho_{V_a}(e_{\alpha}) \Omega_{a\phi}).
% 	\]
% 	The form $B(x,y) = \tr_a(xy)$ is an invariant form on $\rho_{V_a}$, and so is proportional to the Killing form. 
% 	Hence, 
% 	\[
% 		t^{(1)} \propto \kappa((e_\alpha)_a, \Omega_{a\phi}) = \phi(e_\alpha).
% 	\]
% 	(rewrite: we need to be careful about spaces; probably this matrix is only contained in the idealiser if we have a positive root!)
% 	Therefore, from \ref{}, we can conclude that $\Phi(\vec{v})$ is an eigenvector of $t^{(1)}$. 
% 	Since $e_\alpha$ is nilpotent, the eigenvalue must be zero.
% \end{proof}

\section{Conclusion and outlook}

In this article, we have introduced a new way of thinking about the nested Bethe ansatz using the $U(1)$ charge and the block Gauss decomposition. 
This theory combines a number of known results into a single theory, while opening the path to generalisations. 
Of course, this theory is far from complete, and we will list the next steps that must be taken to round out the theory below. 

First, we require a proof of Conjecture~\ref{c:projector}. 
It seems that this result may hold along the same lines as Dorey's rule \cite{chariYangiansIntegrableQuantum1996,doreyRootSystemsPurely1991}.
The next piece is that of the unwanted terms, which have been ignored in this article. 
It is often shown that the unwanted terms may be written as a residue of the wanted term; in the case of multiple creation operators, it seems that the residue must be taken at multiple points. 
Preliminary testing indicates that these points are the poles of $Q_{p,V}(u)$, the final Baxter polynomial.
This indicates that a proof of this could follow from the results of \cite{gautamPolesFinitedimensionalRepresentations2023}.
Finally, we note that our expression for the AB relation holds only for one excitation.
The next big step is to create an expression for the creation operator for multiple excitations when $N>1$. 
This type of calculation is familiar from \cite{martinsAlgebraicBetheAnsatz1997}. 
It may be possible that a universal version of this result follows from generalisation of the same triangularity argument. 

More generally, one can consider the standard vectors for generalisation in the field of integrability. 
The author is confident that these results may be generalised to the quantum affine algebra case without significant effort as the representation theory of the two is known to be equivalent \cite{nakajimaQuiverVarietiesFinite2001,varagnoloQuiverVarietiesYangians2000}.
The supersymmetric cases present a somewhat larger challenge.
Finally, another direction is the systems corresponding to open spin chains, with twisted Yangian \cite{olshanskiiTwistedYangiansInfinitedimensional1992} or reflection algebra symmetry \cite{molevRepresentationsReflectionAlgebras2002}. 
In this case, the nesting of Lie algebras is replaced by that of symmetric pairs, and it would be interesting to investigate the effect of removing a simple root. 

\section{Acknowledgements}

Much of this research was conducted while the author was financially supported by JSPS International Research Fellowship.
The author would like to thank his host Prof. Tetsuo Deguchi for interesting discussions, and for providing a welcoming working environment.
The author is currently financially supported by a Tokyo University of Science research fellowship. 
In particular, he thanks Prof. Kazumitsu Sakai for this opportunity.
%Finally, the author would like to thank Dr. Kayo Kinjo, for interesting discussions.

%\printbibliography
\bibliography{mylibrary}

\end{document}